\def\B{\mathscr B}
\def\C{\mathbb C}
\def\d{\mathrm{d}}
\def\E{\mathbb E}
\def\F{\mathscr F}
\def\G{\mathcal G}
\def\H{\mathcal H}
\def\K{\mathcal K}
\def\KK{\mathscr K}
\def\N{\mathbb N}
\def\NN{\mathfrak N}
\def\P{\mathbb P}
\def\R{\mathbb R}
\def\T{\mathbb T}
\def\U{\mathrm U}
\def\V{\mathtt V}
\def\X{\mathtt X}
\def\Z{\mathbb Z}
\def\dom{\mathcal D}
\def\supp{\mathop{\mathrm{supp}}\nolimits}
\def\Ran{\mathop{\mathrm{Ran}}\nolimits}
\def\e{\mathop{\mathrm{e}}\nolimits}
\def\lone{\mathop{\mathrm{L}^1}\nolimits}
\def\ltwo{\mathop{\mathrm{L}^2}\nolimits}
\DeclareMathOperator*{\slim}{s\hspace{0.1pt}-\hspace{0.1pt}lim}
\newtheorem{Theorem}{Theorem}[section]
\newtheorem{Remark}[Theorem]{Remark}
\newtheorem{Lemma}[Theorem]{Lemma}
\newtheorem{Corollary}[Theorem]{Corollary}
\newtheorem{Proposition}[Theorem]{Proposition}
\newtheorem{Definition}[Theorem]{Definition}
\newtheorem{Assumption}[Theorem]{Assumption}
\newtheorem{Example}[Theorem]{Example}
\begin{document}


\title{Quantum walks with an anisotropic coin II\;\!: scattering theory}

\author{S. Richard$^1$\footnote{Supported by the grant
\emph{Topological invariants through scattering theory and noncommutative geometry} from Nagoya University,
and on leave of absence from Univ.~Lyon, Universit\'e Claude Bernard Lyon
1, CNRS UMR 5208, Institut Camille Jordan, 43 blvd.~du 11 novembre 1918, F-69622
Villeurbanne cedex, France.},
A. Suzuki$^2$\footnote{Supported by JSPS Grant-in-Aid for Young Scientists B no 26800054.},
R. Tiedra de Aldecoa$^3$\footnote{Supported by the Chilean Fondecyt Grant 1170008.}}

\date{\small}
\maketitle \vspace{-1cm}

\begin{quote}
\emph{
\begin{itemize}
\item[$^1$] Graduate school of mathematics, Nagoya University,
Chikusa-ku,\\
Nagoya 464-8602, Japan
\item[$^2$] Division of Mathematics and Physics, Faculty of Engineering, Shinshu
University, Wakasato, Nagano 380-8553, Japan
\item[$^3$] Facultad de Matem\'aticas, Pontificia Universidad Cat\'olica de Chile,\\
Av. Vicu\~na Mackenna 4860, Santiago, Chile
\item[] \emph{E-mails:} richard@math.nagoya-u.ac.jp, akito@shinshu-u.ac.jp,
rtiedra@mat.puc.cl
\end{itemize}
}
\end{quote}


\begin{abstract}
We perform the scattering analysis of the evolution operator of quantum walks with an
anisotropic coin, and we prove a weak limit theorem for their asymptotic velocity. The
quantum walks that we consider include one-defect models, two-phase quantum walks, and
topological phase quantum walks as special cases. Our analysis is based on an abstract
framework for the scattering theory of unitary operators in a two-Hilbert spaces
setting, which is of independent interest.
\end{abstract}

\textbf{2010 Mathematics Subject Classification:} 46N50, 47A40, 47B47, 60F05.

\smallskip

\textbf{Keywords:} Quantum walks, unitary operators, scattering theory,  weak limit
theorem


\section{Introduction}\label{sec_intro}
\setcounter{equation}{0}

This paper is motivated by recent works on topological phenomena for quantum walks
\cite{AB13,CGSVWW,GNVW,Ki06,Kit12,KBFRBKADW12,KRBD10,Ob15}. It is the second part of a
series of papers on one-dimensional quantum walks with an anisotropic behaviour at
infinity. In our first paper \cite{RST_1}, we performed the spectral analysis of the
quantum walks and we developed abstract commutator methods for unitary operators in a
two-Hilbert spaces setting. Here we pursue our study by investigating the scattering
theory of the quantum walks and establishing a weak limit theorem \cite{Kon02, Kon05}.
We also present a suitable abstract framework for the proof of the existence and
completeness of wave operators for unitary operators in a two-Hilbert spaces setting.

The one-dimensional anisotropic quantum walks that we consider are described by a
unitary operator $U:=SC$ in the Hilbert space $\H:=\ell^2(\Z,\C^2)$, where $S$ is a
shift operator and $C$ is a coin operator acting by multiplication by unitary matrices
$C(x)\in\U(2)$, $x\in\Z$, with short-range asymptotics at infinity:
\begin{equation}\label{eq_SR}
C(x) =
\begin{cases}
C_\ell+O(|x|^{-1-\varepsilon_\ell}) & \hbox{as $x\to-\infty$}\\
C_{\rm r}+O(|x|^{-1-\varepsilon_{\rm r}}) & \hbox{as $x\to+\infty$}
\end{cases}
\quad\hbox{with}\quad
C_\ell,C_{\rm r}\in\U(2),~\varepsilon_\ell,\varepsilon_{\rm r}>0.
\end{equation}
The assumption \eqref{eq_SR} covers a wide range of quantum walks such as homogeneous
(or translation-invariant) quantum walks \cite{ABNVW,GJS,Kon02,Kon05}, one-defect
models \cite{CGMV12,Kon10,KLS13,WLKGGB12}, and two-phase quantum walks
\cite{EEKST15,EEKST16,EKH15}. Some classes of inhomogeneous (or position-dependent)
quantum walks \cite{ABJ,Suz16} and split-step quantum walks \cite{Kit12} are also
covered by our assumption. We refer to the introduction of \cite{RST_1} for additional
references on earlier works.

A weak limit theorem for quantum walks is a result of the following type: If $\X_n$
denotes the random variable for the position of a quantum walker at time $n\in\Z$,
then $\X_n/n$ converges in law to a random variable $\V$ as $n\to\infty$. Since
$\X_n/n$ is the average velocity of the quantum walker, the random variable $\V$ can
be interpreted as the asymptotic velocity of the walker. It is therefore of particular
interest to determine the density function of the probability distribution $\mu_\V$ of
$\V$. As was put into evidence in \cite{Suz16}, where the second author considered the
case $C_\ell=C_{\rm r}$ in \eqref{eq_SR}, the key ingredients for the proof of the
weak limit theorem are the following:
\begin{enumerate}
\item[(i)] absence of singular continuous spectrum for $U$,
\item[(ii)] existence of an asymptotic velocity operator $V_{\rm ac}$ for $U$.
\end{enumerate}
Once these assertions are proved, a weak limit theorem can be established in a way
similar to \cite{Suz16} and the distribution $\mu_\V$ can be expressed as
\begin{equation}\label{eq_mu_V}
\mu_\V=\big\|E_{\rm p}^U\Psi_{\rm in}\big\|_\H^2\;\!\delta_0
+\big\|E^{V_{\rm ac}}(\;\!\cdot\;\!)E_{\rm ac}^U\Psi_{\rm in}\big\|_\H^2,
\end{equation}
with $\delta_0$ the Dirac measure for the point $0$, $\Psi_{\rm in}\in\H$ the initial
state of the walker, $E_{\rm p}^U$ and $E_{\rm ac}^U $ the projections onto the pure
point and absolutely continuous subspaces of $U$, and $E^{V_{\rm ac}}$ the spectral
measure of $V_{\rm ac}$. In our first paper \cite{RST_1}, we proved the assertion (i)
and provided information on the eigenvalues of $U$ by constructing a conjugate
operator $A$ for $U$ under the assumption \eqref{eq_SR}. Here, we build on the results
of \cite{RST_1} to prove the assertion (ii) and to establish a detailed formula for
the distribution \eqref{eq_mu_V}.

The organisation of the paper is the following. In Section \ref{sec_scattering}, we
develop our framework for the scattering theory for unitary operators in a two-Hilbert
spaces setting. Given two unitary operators $U$ and $U_0$ acting in Hilbert spaces $\H$
and $\H_0$, and a bounded operator $J:\H_0\to\H$, we establish in Theorem
\ref{thm_wave} and Corollary \ref{corol_wave} criteria for the existence and
completeness under smooth perturbations of the local wave operators
\begin{equation}\label{eq_wo}
W_\pm(U,U_0,J,\Theta):=\slim_{n\to\pm\infty}U^{-n}JU_0^n E^{U_0}(\Theta),
\end{equation}
where $E^{U_0}$ is the spectral measure of $U_0$ and
$\Theta\subset\T:=\{z\in\C\mid|z|=1\}$ an open set. These results for the scattering
theory of unitary operators in two Hilbert spaces are new in such a generality. They
are a natural analogue of similar results for the scattering theory of self-adjoint
operators in two Hilbert spaces, which can be found for example in
\cite{RT13_2,Yaf92}.

In Section \ref{sec_walks}, we apply our results on scattering theory to anisotropic
quantum walks with full evolution operator $U=SC$ and free evolution operator
$U_0:=U_\ell\oplus U_{\rm r}$, where $U_\ell:=SC_\ell$ and $U_{\rm r}:=SC_{\rm r}$
describe the behaviour of the quantum walker as $x\to-\infty$ and $x\to+\infty$. We
prove in Theorem \ref{thm_complete_walks} the existence and completeness of the wave
operators for the pair $\{U_0,U\}$, and in Proposition \ref{prop_kernels} we give a
description of the initial sets of the wave operators in terms of the velocity
operators $V_\ell,V_{\rm r}$ for the evolution operators $U_\ell,U_{\rm r}$.

Section \ref{sec_weak_lim} is dedicated to the proof of the weak limit theorem for the
anisotropic quantum walks. First, we prove in Proposition \ref{prop_Heisenberg} the
assertion (ii) above, that is, the existence of an asymptotic velocity operator
$V_{\rm ac}$ for the full evolution operator $U$. We show that $V_{\rm ac}$ is given
by
$$
V_{\rm ac}=W_+(U,U_0,J,\Theta)V_0\;\!W_+(U,U_0,J,\Theta)^*,
$$
with $V_0:=V_\ell\oplus V_{\rm r}$ the asymptotic velocity operator for the free
evolution operator $U_0$. Then, in Theorem \ref{thm_wlt} we use the results (i) and
(ii) above to prove the weak limit theorem, and in Theorem \ref{Thm_wlt_meas} we
establish an explicit formula for the density function of the probability distribution
$\mu_\V$ given in \eqref{eq_mu_V}. Namely, if we set $a_\ell:=|(C_\ell)_{1,1}|$ and
$a_{\rm r}:=|(C_{\rm r})_{1,1}|$, let $f_{\rm K}:\R\times(0,1]\to[0,\infty)$ be the
Konno function, and write $\chi_B$ for the characteristic function for
a set $B$, then we prove that $\V$ has distribution
\begin{align}
\mu_\V(\d\upsilon)
&=\kappa_0\;\!\delta_0(\d\upsilon)+\kappa_\ell\;\!\delta_{-1}(\d\upsilon)
+\kappa_{\rm r}\;\!\delta_1(\d\upsilon)\nonumber\\
&\quad+\chi_{[-a_\ell,0)}(\upsilon)w_\ell(\upsilon)\tfrac12f_{\rm K}(\upsilon,a_\ell)
\;\!\d\upsilon+\chi_{(0,a_{\rm r}]}(\upsilon)w_{\rm r}(\upsilon)\tfrac12
f_{\rm K}(\upsilon,a_{\rm r})\;\!\d\upsilon,\label{eq_dmu_V}
\end{align}
with $\kappa_0:=\|E^U_{\rm p}\Psi_{\rm in}\|_\H^2$, $\kappa_\ell,\kappa_{\rm r}\ge0$
and $w_\ell,w_{\rm r}:\R\to[0,\infty)$. In addition, we show that $\kappa_\ell$ is
nontrivial when $a_\ell=1$, $\kappa_{\rm r}$ is nontrivial when $a_{\rm r}=1$,
$w_\ell$ is nontrivial and has support in $[-a_\ell,0)$ when $a_\ell\in(0,1)$, and
$w_{\rm r}$ is nontrivial and has support in $(0,a_{\rm r}]$ when $a_{\rm r}\in(0,1)$.
See Theorem \ref{Thm_wlt_meas} for the explicit formulas of
$\kappa_\ell,\kappa_{\rm r}$ and $w_\ell,w_{\rm r}$. We also show that the
decomposition \eqref{eq_dmu_V} of $\mu_\V$ is unique.

An interpretation of the formula \eqref{eq_dmu_V} detailed after Theorem \ref{thm_wlt}
and in Example \ref{ex_wlt} is the following. Localisation occurs if the probability
that the asymptotic velocity vanishes is positive, i.e., $\P(\V=0)>0$. Since
\eqref{eq_dmu_V} implies that $\P(\V=0)=\kappa_0=\|E^U_{\rm p}\Psi_{\rm in}\|_\H^2$,
localisation occurs if and only if the initial state $\Psi_{\rm in}$ has an overlap
with the pure point subspace of $U$. Furthermore, the quantum walker moves
asymptotically to the left at speed $\upsilon\in[-a_\ell,0)$ if $a_\ell\in(0,1)$ and
at speed $\upsilon=1$ if $a_\ell=1$. Similarly, the quantum walker moves
asymptotically to the right at speed $\upsilon\in(0,a_{\rm r}]$ if $a_{\rm r}\in(0,1)$
and at speed $\upsilon=1$ if $a_{\rm r}=1$. In particular, if $a_\ell>a_{\rm r}$, then
the quantum walker can move faster on the left-hand side than on the right-hand side.

Finally, in Example \ref{ex_cpr} at the end of Section \ref{sec_weak_lim}, we explain
how our formula \eqref{eq_dmu_V} for the distribution $\mu_\V$ generalises several
formulas already available in the literature. For example, it generalises a similar
formula for isotropic quantum walks where $C(x)=C_\infty+O(|x|^{-1-\varepsilon})$
\cite{Suz16}, which include one-defect models \cite{KLS13} and homogeneous quantum
walks \cite{Kon02,Kon05,GJS}. The formula \eqref{eq_dmu_V} also generalises the
formula obtained in \cite{EEKST16} for two-phase quantum walks where $C(x)=C_-$ for
$x\le-1$ and $C(x)=C_+$ for $x\ge1$, and $(C_-)_{1,1}=(C_+)_{1,1}$.

\section{Scattering theory in a two-Hilbert spaces setting}\label{sec_scattering}
\setcounter{equation}{0}

We discuss in this section the existence and the completeness under smooth
perturbations of the local wave operators for unitary operators in a two-Hilbert
spaces setting. Namely, given two unitary operators $U_0,U$ in Hilbert spaces
$\H_0,\H$ with spectral measures $E^{U_0},E^U$, a bounded operator $J:\H_0\to\H$, and
an open set $\Theta\subset\T:=\{z\in\C\mid|z|=1\}$, we give criteria for the existence
and the completeness of the strong limits
$$
W_\pm(U,U_0,J,\Theta):=\slim_{n\to\pm\infty}U^{-n}JU_0^n E^{U_0}(\Theta)
$$
under the assumption that the difference $JU_0-UJ$ factorises as a product of a
locally $U$-smooth operator on $\Theta$ and a locally $U_0$-smooth operator on
$\Theta$. We start with a standard result on the intertwining property of wave
operators. Note that we use the notation $\B(\H_1,\H_2)$ (resp. $\KK(\H_1,\H_2)$) for
the set of bounded (resp. compact) operators between Hilbert spaces $\H_1$ and $\H_2$,
and we set $\B(\H_1):=\B(\H_1,\H_1)$ and $\KK(\H_1):=\KK(\H_1,\H_1)$.

\begin{Lemma}[Intertwining property]\label{lemma_intertwinning}
Let $U_0,U$ be unitary operators in Hilbert spaces $\H_0,\H$ with spectral measures
$E^{U_0}$, $E^U$, let $J\in\B(\H_0,\H)$, and let $\Theta\subset\T$ be an open set.
Assume that $W_\pm(U,U_0,J,\Theta)$ exist. Then, we have for each bounded Borel
function $\eta:\T\to\C$ the intertwining property
\begin{equation}\label{eq_intertwining}
W_\pm(U,U_0,J,\Theta)\;\!\eta(U_0)=\eta(U)\;\!W_\pm(U,U_0,J,\Theta).
\end{equation}
\end{Lemma}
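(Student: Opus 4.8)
The plan is to reduce the statement to the case of monomials $\eta(z)=z^k$, $k\in\Z$, and then extend to general bounded Borel functions by a standard approximation argument using the functional calculus. First I would treat $\eta(z)=z^k$ for $k\in\N$. Fix $\Psi\in\H_0$ and abbreviate $W_\pm:=W_\pm(U,U_0,J,\Theta)$. Since $U_0^k$ commutes with $E^{U_0}(\Theta)$, one has for every $n$
$$
U^{-n}JU_0^nE^{U_0}(\Theta)\;\!U_0^k
=U^k\big(U^{-(n+k)}JU_0^{n+k}E^{U_0}(\Theta)\big).
$$
Letting $n\to\pm\infty$, the bracketed operator converges strongly to $W_\pm$ (a shift of the index does not affect the limit), and since $U^k$ is bounded we may pass it through the strong limit to obtain $W_\pm U_0^k=U^kW_\pm$. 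For $k$ a negative integer the same computation works verbatim, or alternatively one takes adjoints in the identity already proved, using that $W_\pm^*$ intertwines in the opposite direction and that $U_0,U$ are unitary; either way \eqref{eq_intertwining} holds for all $\eta(z)=z^k$, $k\in\Z$, hence for all trigonometric polynomials by linearity.

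Next I would pass from trigonometric polynomials to arbitrary bounded Borel functions $\eta:\T\to\C$. The cleanest route is a two-step limiting argument: trigonometric polynomials are dense in $C(\T)$ in the uniform norm, so if $\eta$ is continuous, choosing polynomials $p_m\to\eta$ uniformly gives $p_m(U_0)\to\eta(U_0)$ and $p_m(U)\to\eta(U)$ in operator norm (by the functional calculus), and \eqref{eq_intertwining} survives the limit since both sides are norm-continuous in the symbol. To reach bounded Borel $\eta$, invoke the fact that any such $\eta$ is the pointwise bounded limit of a sequence of continuous functions $\eta_m$ with $\sup_m\|\eta_m\|_\infty<\infty$; then $\eta_m(U_0)\to\eta(U_0)$ and $\eta_m(U)\to\eta(U)$ strongly, the uniform bound lets one control the products, and applying both sides of the (already established) identity $W_\pm\eta_m(U_0)=\eta_m(U)W_\pm$ to a fixed vector and passing to the strong limit yields \eqref{eq_intertwining} in full generality.

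There is essentially no serious obstacle here; the only point requiring a modicum of care is the interchange of limits in the last step, i.e.\ checking that $\eta_m(U)W_\pm\Psi\to\eta(U)W_\pm\Psi$ and $W_\pm\eta_m(U_0)\Psi\to W_\pm\eta(U_0)\Psi$, which follows from strong convergence of $\eta_m(U),\eta_m(U_0)$ together with the boundedness of $W_\pm$. Everything else is the routine index-shift computation for monomials and the standard density arguments of the functional calculus, so the proof is short.
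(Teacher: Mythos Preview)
Your approach is essentially the paper's: prove the identity for monomials by an index shift, pass to $C(\T)$ by Stone--Weierstrass, then extend to bounded Borel functions by approximation. There is, however, one technical slip in your last step: it is \emph{not} true that every bounded Borel function on $\T$ is the pointwise limit of a sequence of continuous functions (pointwise limits of continuous functions are only the Baire class~1 functions; e.g.\ $\chi_{\Q\cap\T}$ is Borel but not Baire class~1). The standard cure is a monotone class argument: the set of bounded Borel $\eta$ for which $W_\pm\eta(U_0)=\eta(U)W_\pm$ is a vector space containing $C(\T)$ and closed under bounded pointwise limits (by dominated convergence for the spectral measures, which gives strong convergence of $\eta_m(U_0)$ and $\eta_m(U)$), hence it is all bounded Borel functions. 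The paper phrases this as ``a standard approximation argument in the weak topology,'' which amounts to the same thing. With that correction your proof is complete and matches the paper's.
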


\begin{proof}
A direct calculation implies the equality
$W_\pm(U,U_0,J,\Theta)\;\!U_0^k=U^k\;\!W_\pm(U,U_0,J,\Theta)$ for each $k\in\Z$. Using
Stone-Weierstrass theorem we infer from this equality that \eqref{eq_intertwining}
holds for each $\eta\in C(\T)$. Finally, using a standard approximation argument in
the weak topology we extend the result to each bounded Borel function $\eta:\T\to\C$.
\end{proof}

Next, we define the closed subspaces of $\H$
$$
\NN_\pm(U,J,\Theta)
:=\left\{\varphi\in\H\mid\lim_{n\to\pm\infty}
\big\|J^*U^nE^U(\Theta)\varphi\big\|_{\H_0}=0\right\},
$$
and note that $E^U(\T\setminus\Theta)\H\subset\NN_\pm(U,J,\Theta)$, that $U$ is
reduced by $\NN_\pm(U,J,\Theta)$, and that
$$
\overline{\Ran\big(W_\pm(U,U_0,J,\Theta)\big)}\perp\NN_\pm(U,J,\Theta),
$$
this last fact being shown as in the self-adjoint case, see \cite[Lemma~3.2.1]{Yaf92}.
In particular, one has the inclusion
$$
\overline{\Ran\big(W_\pm(U,U_0,J,\Theta)\big)}
\subset E^U(\Theta)\H\ominus\NN_\pm(U,J,\Theta),
$$
which motivates the following definition:

\begin{Definition}[$J$-completeness]\label{def_J_complete}
Assume that $W_\pm(U,U_0,J,\Theta)$ exist. The operators $W_\pm(U,U_0,J,\Theta)$ are
$J$-complete on $\Theta$ if
$$
\overline{\Ran\big(W_\pm(U,U_0,J,\Theta)\big)}
=E^U(\Theta)\H\ominus\NN_\pm(U,J,\Theta).
$$
\end{Definition}

\begin{Remark}
In the particular case $\H_0=\H$ and $J=1_\H$, the $J$-completeness on $\Theta$ reduces
to the completeness of $W_\pm(U,U_0,J,\Theta)$ on $\Theta$ in the usual sense. Namely,
$\overline{\Ran\big(W_\pm(U,U_0,1_\H,\Theta)\big)}=E^U(\Theta)\H$, and the operators
$W_\pm(U,U_0,1_\H,\Theta)$ are unitary from $E^{U_0}(\Theta)\H$ to $E^U(\Theta)\H$.
\end{Remark}

The following criterion for $J$-completeness is shown as in the self-adjoint case, see
for example \cite[Thm.~3.2.4]{Yaf92}:

\begin{Lemma}
If $W_\pm(U,U_0,J,\Theta)$ and $W_\pm(U_0,U,J^*,\Theta)$ exist, then
$W_\pm(U,U_0,J,\Theta)$ are $J$-complete on $\Theta$.
\end{Lemma}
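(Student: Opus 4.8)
The plan is to mimic the self-adjoint argument of \cite[Thm.~3.2.4]{Yaf92}, adapting it to the unitary two-Hilbert-spaces setting. First I would fix a sign and abbreviate $W:=W_+(U,U_0,J,\Theta)$ and $\widetilde W:=W_+(U_0,U,J^*,\Theta)$; by hypothesis both exist. We already know the inclusion
$$
\overline{\Ran(W)}\subset E^U(\Theta)\H\ominus\NN_+(U,J,\Theta),
$$
so it remains to prove the reverse inclusion, i.e.\ that every $\psi\in E^U(\Theta)\H$ with $\psi\perp\NN_+(U,J,\Theta)$ lies in $\overline{\Ran(W)}$. The natural candidate for a preimage is $\varphi:=\widetilde W\psi\in E^{U_0}(\Theta)\H_0$, and the goal is to show $W\varphi=\psi$, after first dealing with the piece of $\psi$ that $\widetilde W$ kills.

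The key computation is to track $U^{-n}JU_0^n E^{U_0}(\Theta)$ applied to $U_0^{-m}J^*U^mE^U(\Theta)\psi$ and let $m\to\infty$ and then $n\to\infty$. Concretely, for fixed $n$,
$$
U^{-n}JU_0^nE^{U_0}(\Theta)\,\widetilde W\psi
=\slim_{m\to\infty}U^{-n}JU_0^nE^{U_0}(\Theta)\,U_0^{-m}J^*U^mE^U(\Theta)\psi,
$$
and using the intertwining property from Lemma~\ref{lemma_intertwinning} (for $\widetilde W$, with $\eta$ a suitable bounded Borel function, or just powers of $U_0$) one rewrites $U_0^nE^{U_0}(\Theta)U_0^{-m}=U_0^{n-m}E^{U_0}(\Theta)$ and pushes the $n$ past the strong limit. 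Taking $n\to\infty$ afterwards gives $W\widetilde W\psi=\slim_n U^{-n}(JJ^*)U^nE^U(\Theta)\psi$ up to error terms; more carefully, one shows
$$
W\widetilde W\psi-\psi=\slim_{n\to\infty}U^{-n}\big(JJ^*-1_\H\big)U^nE^U(\Theta)\psi,
$$
and the point is that the right-hand side is controlled by the defect of $\psi$ from $\NN_+(U,J,\Theta)$. Since $\psi\perp\NN_+(U,J,\Theta)$ and $U$ reduces $\NN_+(U,J,\Theta)$, one argues that $\psi$ can be approximated in $E^U(\Theta)\H\ominus\NN_+(U,J,\Theta)$ by vectors for which $\|J^*U^nE^U(\Theta)\psi\|_{\H_0}$ does \emph{not} go to zero; combined with a uniform-boundedness/telescoping estimate this forces the limit to vanish, so $W\widetilde W\psi=\psi$ and hence $\psi\in\Ran(W)$, proving $J$-completeness. (The orthogonal decomposition $E^U(\Theta)\H=\overline{\Ran(W)}\oplus\NN_+(U,J,\Theta)$ has to be justified as part of this; that $\NN_+$ and $\overline{\Ran W}$ are orthogonal complements inside $E^U(\Theta)\H$ is exactly what the statement asserts once the reverse inclusion is in hand.)

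The main obstacle is the same as in the self-adjoint case: controlling the cross term $U^{-n}(JJ^*-1_\H)U^nE^U(\Theta)\psi$ as $n\to\infty$ and making the exchange of the two limits ($m\to\infty$ inside, then $n\to\infty$) rigorous. This requires a careful $\varepsilon/3$-type argument: approximate $\psi$ by a finitely-supported-in-spectrum vector or by a vector in the range of $\widetilde W$ applied to something nice, use that $\|U^{-n}JU_0^nE^{U_0}(\Theta)\|\le\|J\|$ uniformly in $n$ to absorb tails, and use the defining property of $\NN_+(U,J,\Theta)$ to kill the leftover. No genuinely new idea beyond \cite{Yaf92} is needed — unitarity of $U,U_0$ replaces the role of the self-adjoint evolution $\mathrm{e}^{\mathrm{i}tH}$, and sums over $n\in\Z$ replace integrals over $t\in\R$ — so the proof can legitimately be left as "shown as in the self-adjoint case," with the reference \cite[Thm.~3.2.4]{Yaf92} carrying the weight; if a self-contained argument is wanted, the telescoping identity $U^{-N}f(U)U^N-f(U)=\sum_{n=0}^{N-1}U^{-(n+1)}\big(U^{-1}f(U)U-f(U)\big)U^{n+1}$ together with the smoothness-free bookkeeping above is the route to follow.
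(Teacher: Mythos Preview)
Your sketch contains a genuine gap. The identity
$$
W\widetilde W\psi-\psi=\slim_{n\to\infty}U^{-n}(JJ^*-1_\H)U^nE^U(\Theta)\psi
$$
is correct (it follows from $U_0^n\widetilde W=\widetilde W U^n$ and $\|(\widetilde W-U_0^{-n}J^*U^nE^U(\Theta))\psi\|\to0$), but your claim that the right-hand side vanishes for $\psi\perp\NN_+(U,J,\Theta)$ is false. Take $\H_0=\H$, $U_0=U$, $J=c\cdot1_\H$ with $0<c<1$: then $\NN_+(U,J,\Theta)=E^U(\T\setminus\Theta)\H$, and for any $\psi\in E^U(\Theta)\H$ one has $U^{-n}(JJ^*-1_\H)U^n\psi=(c^2-1)\psi\ne0$. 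In this example $W\widetilde W\psi=c^2\psi\ne\psi$, yet $\psi$ does lie in $\overline{\Ran(W)}=E^U(\Theta)\H$; so proving $W\widetilde W\psi=\psi$ is both unattainable and unnecessary. Your justification---that $\psi\perp\NN_+$ means $\|J^*U^n\psi\|$ does \emph{not} tend to zero---gives no control whatsoever on $U^{-n}(JJ^*-1_\H)U^n\psi$.

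The paper's argument avoids all of this by a much shorter route: one checks directly that $W_\pm(U,U_0,J,\Theta)^*=W_\pm(U_0,U,J^*,\Theta)$ (a two-line computation using intertwining), observes that $\ker\big(W_\pm(U_0,U,J^*,\Theta)\big)=\NN_\pm(U,J,\Theta)$ straight from the definitions, and then concludes via $\overline{\Ran(W)}=\H\ominus\ker(W^*)=\H\ominus\NN_\pm=E^U(\Theta)\H\ominus\NN_\pm$, the last equality using $E^U(\T\setminus\Theta)\H\subset\NN_\pm$. This is also what Yafaev actually does in \cite[Thm.~3.2.4]{Yaf92}; no limit-exchange or $\varepsilon/3$ argument is needed.
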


\begin{proof}
The intertwining property and the existence of the operators $W_\pm(U_0,U,J^*,\Theta)$
imply that for any $\varphi\in\H$ and $\psi\in\H_0$
\begin{align*}
\big\langle W_\pm(U,U_0,J,\Theta)^*\varphi,\psi\big\rangle_{\H_0}
&=\big\langle\varphi,E^U(\Theta)W_\pm(U,U_0,J,\Theta)\psi\big\rangle_\H\\
&=\lim_{n\to\pm\infty}\big\langle E^U(\Theta)\varphi,
U^{-n}JU_0^n E^{U_0}(\Theta)\psi\big\rangle_\H\\
&=\lim_{n\to\pm\infty}\big\langle E^{U_0}(\Theta)U_0^{-n}J^*U^nE^{U}(\Theta)\varphi,
\psi\rangle_{\H_0}\\
&=\big\langle W_\pm(U_0,U,J^*,\Theta)\varphi,\psi\big\rangle_{\H_0}.
\end{align*}
Thus, $W_\pm(U_0,U,J^*,\Theta)$ is the adjoint of $W_\pm(U,U_0,J,\Theta)$. Since
$\ker\big(W_\pm(U_0,U,J^*,\Theta)\big)=\NN_\pm(U,J,\Theta)$ and
$E^U(\T\setminus\Theta)\H\subset\NN_\pm(U,J,\Theta)$, it follows that
\begin{align*}
\overline{\Ran\big(W_\pm(U,U_0,J,\Theta)\big)}
&=\H\ominus\ker\big(W_\pm(U,U_0,J,\Theta)^*\big)\\
&=\H\ominus\NN_\pm(U,J,\Theta)\\
&=E^U(\Theta)\H\ominus\NN_\pm(U,J,\Theta),
\end{align*}
which proves the claim.
\end{proof}

For the next theorem, we recall that the spectral support $\supp_U(\varphi)$ of a
vector $\varphi\in\H$ with respect to $U$ is the smallest closed set $\Omega\subset\T$
such that $E^U(\Omega)\varphi=\varphi$. We also recall that if $\G$ is an auxiliary
Hilbert space, then an operator $T\in\B(\H,\G)$ is locally $U$-smooth on an open set
$\Theta\subset\T$ if for each closed set $\Theta'\subset\Theta$ there exists
$c_{\Theta'}\ge0$ such that
\begin{equation}\label{def_U_smooth}
\sum_{n\in\Z}\big\|TU^nE^U(\Theta')\varphi\big\|_\G^2
\le c_{\Theta'}\;\!\|\varphi\|_\H^2\quad\hbox{for each $\varphi\in\H$},
\end{equation}
and we refer to \cite[Sec.~2]{FRT13} or \cite[Sec.~3.1]{RST_1} for more information on
locally $U$-smooth operators.

\begin{Theorem}\label{thm_wave}
Let $U_0,U$ be unitary operators in Hilbert spaces $\H_0,\H$ with spectral measures
$E^{U_0},E^U$, $J\in\B(\H_0,\H)$, and $\Theta\subset\T$ be an open set. Let $\G$ be an
auxiliary Hilbert space, $T_0\in\B(\H_0,\G)$ a locally $U_0$-smooth operator on
$\Theta$ and $T\in\B(\H,\G)$ a locally $U$-smooth operator on $\Theta$ such that
$JU_0-UJ=T^*T_0$. Then, the wave operators
\begin{equation}\label{eq_wave}
W_\pm(U,U_0,J,\Theta)=\slim_{n\to\pm\infty}U^{-n}JU_0^n E^{U_0}(\Theta)
\end{equation}
exist, are $J$-complete on $\Theta$, and satisfy the relations
$$
W_\pm(U,U_0,J,\Theta)^*=W_\pm(U_0,U,J^*,\Theta)
\quad\hbox{and}\quad
W_\pm(U,U_0,J,\Theta)\eta(U_0)=\eta(U)W_\pm(U,U_0,J,\Theta)
$$
for each bounded Borel function $\eta:\T\to\C$.
\end{Theorem}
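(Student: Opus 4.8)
The plan is to deduce all four assertions from a single one, namely the existence of the strong limit $W_\pm(U,U_0,J,\Theta)=\slim_{n\to\pm\infty}U^{-n}JU_0^nE^{U_0}(\Theta)$ under the stated hypothesis. The first point is that the hypothesis is symmetric in $(U,U_0)$: conjugating $JU_0-UJ=T^*T_0$ by $U^{-1}$ on the left and $U_0^{-1}$ on the right and taking adjoints gives $J^*U-U_0J^*=(T_0U_0^{-1})^*(TU)$, which is exactly the form required by the theorem with the roles of $U_0$ and $U$ exchanged, since $T_0U_0^{-1}$ is locally $U_0$-smooth on $\Theta$ and $TU$ is locally $U$-smooth on $\Theta$ (local smoothness is stable under composition with integer powers of the relevant unitary). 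Hence, once existence is established in general, it applies both to $W_\pm(U,U_0,J,\Theta)$ and to $W_\pm(U_0,U,J^*,\Theta)$; the Lemma preceding the theorem then yields the $J$-completeness of $W_\pm(U,U_0,J,\Theta)$ together with the identity $W_\pm(U,U_0,J,\Theta)^*=W_\pm(U_0,U,J^*,\Theta)$, and Lemma~\ref{lemma_intertwinning} gives the intertwining relations. So everything reduces to proving existence.

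For existence, the operators $U^{-n}JU_0^nE^{U_0}(\Theta)$ are uniformly bounded by $\|J\|$ and vanish on $E^{U_0}(\T\setminus\Theta)\H_0$, so it suffices to show that $U^{-n}JU_0^n\varphi$ converges in $\H$ as $n\to\pm\infty$ for $\varphi$ in the dense set of vectors whose spectral support $\Theta_1:=\supp_{U_0}(\varphi)$ is a closed subset of $\Theta$. Fixing such a $\varphi$, I would choose $\eta_0\in C^\infty(\T)$ equal to $1$ on $\Theta_1$ with $\supp\eta_0$ closed in $\Theta$, and a real-valued $\eta_1\in C(\T)$ equal to $1$ on $\supp\eta_0$ with $\supp\eta_1$ closed in $\Theta$. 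The key elementary observation is that local $U$-smoothness of $T$ on $\Theta$ becomes \emph{global} $U$-smoothness after cutting off with a function of $U$ supported in a closed subset of $\Theta$: since $T\eta_1(U)=TE^U(\supp\eta_1)\eta_1(U)$, the bound \eqref{def_U_smooth} applied to the closed set $\supp\eta_1$ shows that $T\eta_1(U)$ is globally $U$-smooth, and likewise $T_0\eta_0(U_0)$ is globally $U_0$-smooth.

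Then I would split $U^{-n}JU_0^n\varphi=\eta_1(U)U^{-n}JU_0^n\varphi+(1-\eta_1(U))U^{-n}JU_0^n\varphi$ and treat the two pieces in complementary ways. For the first piece, $\eta_0(U_0)\varphi=\varphi$ gives $\eta_1(U)U^{-n}JU_0^n\varphi=U^{-n}\widetilde JU_0^n\varphi$ with $\widetilde J:=\eta_1(U)J\eta_0(U_0)$, and a direct computation gives the factorisation $\widetilde JU_0-U\widetilde J=(T\eta_1(U))^*(T_0\eta_0(U_0))$ through the two globally smooth operators; Cook's telescoping then writes $U^{-n}\widetilde JU_0^n\varphi-U^{-m}\widetilde JU_0^m\varphi$ as a finite sum over $k$, and pairing with a unit vector $\psi\in\H$ and applying Cauchy--Schwarz over $k$ bounds its norm by $c\,\bigl(\sum_{k=m}^{n-1}\|T_0\eta_0(U_0)U_0^k\varphi\|_\G^2\bigr)^{1/2}$, which tends to $0$ since $\sum_{k\in\Z}\|T_0\eta_0(U_0)U_0^k\varphi\|_\G^2<\infty$; so the first piece converges. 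For the second piece, using $\eta_0(U_0)\varphi=\varphi$ and $(1-\eta_1(U))\eta_0(U)=0$ (because $\eta_1\equiv1$ on $\supp\eta_0$) one obtains $(1-\eta_1(U))U^{-n}JU_0^n\varphi=U^{-n}(1-\eta_1(U))R_0U_0^n\varphi$ with $R_0:=J\eta_0(U_0)-\eta_0(U)J$, so it is enough to show $R_0U_0^n\varphi\to0$; expanding $\eta_0$ in a rapidly convergent Fourier series and iterating $JU_0-UJ=T^*T_0$ writes $R_0$ as an absolutely convergent series of terms $U^aT^*T_0U_0^b$ with $\|JU_0^k-U^kJ\|\le|k|\,\|T^*T_0\|$, and since $\|T_0U_0^jE^{U_0}(\Theta_1)\varphi\|_\G\to0$ as $j\to\pm\infty$, each term of $R_0U_0^n\varphi$ tends to $0$ as $n\to\pm\infty$ while the series is dominated uniformly in $n$, whence $R_0U_0^n\varphi\to0$ by dominated convergence. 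Combining the two pieces gives the existence of the limit.

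The main obstacle is precisely this passage from local to global smoothness: Cook's method cannot be run directly because the assumption provides only \emph{local} smoothness of $T$ and $T_0$, and the point of the cutoffs $\eta_0(U_0)$ and $\eta_1(U)$ is to isolate a genuinely smooth perturbation $\widetilde J$ to which Cook's argument applies, while the leftover ``off-diagonal'' remainder $(1-\eta_1(U))R_0U_0^n$ — which does not factor through globally smooth operators — must be disposed of separately using the crude asymptotic decay of $T_0$ cut off to $\Theta_1$. This is the unitary two-Hilbert-spaces counterpart of the local smooth-perturbation theorems for self-adjoint operators in \cite{Yaf92}.
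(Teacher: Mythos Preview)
Your proposal is correct and follows essentially the same strategy as the paper: the same two-piece decomposition via a cutoff in the functional calculus of $U$, Cook's telescoping with Cauchy--Schwarz for the localized piece, and reduction of the remainder to $\big(J\eta_0(U_0)-\eta_0(U)J\big)U_0^n\varphi\to0$. The only cosmetic differences are that the paper handles the remainder by Stone--Weierstrass approximation with trigonometric polynomials (where you use the Fourier expansion of $\eta_0$ plus dominated convergence) and disposes of the reverse wave operators by invoking directly $U_0^*J^*-J^*U^*=T_0^*T$, whereas you first recast the hypothesis as $J^*U-U_0J^*=(T_0U_0^{-1})^*(TU)$ to make the symmetry explicit.
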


\begin{proof}
We adapt the proof of \cite[Thm.~7.1.4]{ABG96} to the case of unitary operators in a
two-Hilbert spaces setting. The existence of the limits \eqref{eq_wave} is a direct
consequence of the following assertion: For each $\varphi_0\in\H_0$ such that
$\Theta_0:=\supp_{U_0}(\varphi_0)\subset\Theta$, and for each
$\eta\in C_{\rm c}^\infty(\Theta,\R)$ such that $\eta(\theta)=1$ on a neighbourhood of
$\Theta_0$
\begin{equation}\label{eq_equiv_wave}
\slim_{n\to\pm\infty}\eta(U)U^{-n}JU_0^n\varphi_0~\hbox{exist}
\quad\hbox{and}\quad
\lim_{n\to\pm\infty}\big\|\big(1-\eta(U)\big)U^{-n}JU_0^n\varphi_0\big\|_\H=0.
\end{equation}
To prove the first claim in \eqref{eq_equiv_wave}, we take $\varphi\in\H$ and observe
that $W_n:=\eta(U)U^{-n}JU_0^n$ satisfies for $m\le n-1$
\begin{align*}
\Big|\big\langle\varphi,(W_n-W_m)\varphi_0\big\rangle_\H\Big|
&=\left|\,\sum_{j=m+1}^n\big\langle\varphi,\eta(U)U^{-j}(JU_0-UJ)
U_0^{j-1}\varphi_0\big\rangle_\H\right|\\
&=\left|\,\sum_{j=m+1}^n\big\langle TU^{j}\eta(U)\varphi,
T_0U_0^{j-1}\varphi_0\big\rangle_\G\right|\\
&\le\left(\sum_{j=m+1}^n\big\|TU^{j}\eta(U)\varphi\big\|^2_\G\right)^{1/2}
\left(\sum_{j=m+1}^n\big\|T_0U_0^{j-1}\varphi_0\big\|^2_\G\right)^{1/2}\\
&\le c_{\Theta_1}^{1/2}\;\!\|\varphi\|_\H
\left(\sum_{j=m+1}^n\big\|T_0U_0^{j-1)}\varphi_0\big\|^2_\G\right)^{1/2},
\end{align*}
with $\Theta_1:=\supp(\eta)$ and $c_{\Theta_1}$ the constant appearing in the
definition \eqref{def_U_smooth} of a locally $U$-smooth operator. Since $T_0$ is
locally $U_0$-smooth on $\Theta$, it follows that $\|(W_n-W_m)\varphi_0\|_\H\to0$ as
$m\to\infty$ or $n\to-\infty$. This proves the first claim in \eqref{eq_equiv_wave}.

To prove the second claim in \eqref{eq_equiv_wave}, we take
$\eta_0\in C_{\rm c}^\infty(\Theta,\R)$ such that $\eta_0\equiv1$ on $\Theta_0$ and
$\eta\eta_0=\eta_0$. Then, we have $\varphi_0=\eta_0(U_0)\varphi_0$ and
$$
\big(1-\eta(U)\big)J\eta_0(U_0)=\big(1-\eta(U)\big)\big(J\eta_0(U_0)-\eta_0(U)J\big),
$$
and thus the second claim in \eqref{eq_equiv_wave} follows from
$$
\lim_{n\to\pm\infty}\big\|\big(J\eta_0(U_0)-\eta_0(U)J\big)U_0^n\varphi_0\big\|_\H=0.
$$
Since the set of monomials $z^k$ with $z\in\T$ and $k\in\Z$ is total in $C(\T)$ for
the $\sup$ norm, it is sufficient to show that
$$
\lim_{n\to\pm\infty}\big\|\big(JU_0^k-U^kJ\big)U_0^n\varphi_0\big\|_\H=0
$$
for all $k\in\Z$. For $k\ge1$, the result follows from the formula
$
JU_0^k-U^kJ=\sum_{j=1}^kU^{j-1}(JU_0-UJ)U_0^{k-j}
$
and the local $U_0$-smoothness of $T_0$ since
\begin{align*}
\lim_{n\to\pm\infty}\big\|\big(JU_0^k-U^kJ\big)U_0^n\varphi_0\big\|_\H
&\le\lim_{n\to\pm\infty}\sum_{j=1}^k
\big\|(JU_0-UJ)U_0^{k-j}U_0^n\varphi_0\big\|_\H\\
&\le{\rm Const.}\;\!k\;\!\lim_{m\to\pm\infty}
\big\|T_0U_0^m\eta_0(U_0)\varphi_0\big\|_\G\\
&=0,
\end{align*}
and for $k\le0$ the result follows from what precedes since
$
JU_0^{-|k|}-U^{-|k|}J=-U^{-|k|}\big(JU_0^{|k|}-U^{|k|}J\big)U_0^{-|k|}
$.

So, the existence of the limits \eqref{eq_wave} has been established. Similar
arguments, using the relation $U_0^*J^*-J^*U^*=T_0^*T$ instead of $JU_0-UJ=T^*T_0$,
show that $W_\pm(U_0,U,J^*,\Theta)$ exist too. This, together with standard arguments
in scattering theory, implies the claims that follow \eqref{eq_wave}.
\end{proof}

To present the last result of this section, we need to recall some basic definitions
of the conjugate operator theory borrowed from \cite[Chap.~5]{ABG96}: Let $S\in\B(\H)$
and let $A$ be a self-adjoint operator in $\H$ with domain $\dom(A)$. For $k\in\N$, we
say that $S\in C^k(A)$ if the map $\R\ni t\mapsto\e^{-itA}S\e^{itA}\in\B(\H)$ is
strongly of class $C^k$. In the case $k=1$, one has $S\in C^1(A)$ if and only if the
quadratic form
$$
\dom(A)\ni\varphi\mapsto\big\langle A\;\!\varphi,S\varphi\big\rangle_\H
-\big\langle\varphi,SA\;\!\varphi\big\rangle_\H\in\C
$$
is continuous for the topology induced by $\H$ on $\dom(A)$. The operator associated
to the continuous extension of the form is denoted by $[A,S]\in\B(\H)$. Three
regularity conditions slightly stronger than $S\in C^1(A)$ are defined as follows:
$S\in C^{1,1}(A)$ if
$$
\int_0^1\big\|\e^{-itA}S\e^{itA}+\e^{itA}S\e^{-itA}-2S\big\|_{\B(\H)}
\,\frac{\d t}{t^2}<\infty.
$$
$S\in C^{1+0}(A)$ if $S\in C^1(A)$ and
$$
\int_0^1\big\|\e^{-itA}[A,S]\e^{itA}-[A,S]\big\|_{\B(\H)}\,\frac{\d t}t<\infty.
$$
$S\in C^{1+\varepsilon}(A)$ for some $\varepsilon\in(0,1)$ if $S\in C^1(A)$ and
$$
\big\|\e^{-itA}[A,S]\e^{itA}-[A,S]\big\|_{\B(\H)}
\le{\rm Const.}\;\!t^\varepsilon\quad\hbox{for all $t\in(0,1)$.}
$$
As banachisable topological vector spaces, these sets satisfy the continuous
inclusions \cite[Sec.~5.2.4]{ABG96}
$$
C^2(A)\subset C^{1+\varepsilon}(A)\subset C^{1+0}(A)\subset C^{1,1}(A)\subset C^1(A)
\subset C^0(A).
$$

Let us also recall from \cite[Sec.~3.1]{RST_1} that if $U$ is unitary operator in $\H$
with $U\in C^1(A)$, then the function $\widetilde\varrho^A_U:\T\to(-\infty,\infty]$ is
defined by
$$
\widetilde\varrho^A_U(\theta)
:=\sup\big\{a\in\R \mid\exists\;\!\varepsilon>0~\hbox{such that}
~E^U(\theta;\varepsilon)U^{-1}[A,U]E^U(\theta;\varepsilon)\gtrsim
a\;\!E^U(\theta;\varepsilon)\big\},\quad\theta\in\T,
$$
where $E^U(\theta;\varepsilon):=E^U\big(\Theta(\theta;\varepsilon)\big)$,
$\Theta(\theta;\varepsilon):=\{\theta'\in\T\mid|\arg(\theta-\theta')|<\varepsilon\}$,
and for $S,T\in\B(\H)$ the notation $T\gtrsim S$ means that there exists $K\in\KK(\H)$
such that $T+K\ge S$. By analogy with the self-adjoint case, we say that $A$ is
conjugate to $U$ at a point $\theta\in\T$ if $\widetilde\varrho^A_U(\theta)>0$, and we
write
$$
\widetilde\mu^A(U):=\big\{\theta\in\T\mid\widetilde\varrho^A_U(\theta)>0\big\}
$$
for the open subset of $\T$ where $A$ is conjugate to $U$. The set
$\widetilde\mu^A(U)$ is open because the function $\widetilde\varrho^A_U(\theta)$ is
lower semicontinuous. Finally, we denote by $\sigma_{\rm p}(U_0)$ and
$\sigma_{\rm p}(U)$ the pure point spectra of $U_0$ and $U$.

Now, by combining \cite[Thm.~3.4]{RST_1} and Theorem \ref{thm_wave}, we obtain the
following criterion for the existence and completeness of the local wave operators.

\begin{Corollary}\label{corol_wave}
Let $U_0,U$ be unitary operators in Hilbert spaces $\H_0,\H$ with spectral measures
$E^{U_0},E^U$ and $A_0,A$ self-adjoint operators in $\H_0,\H$. Assume either that
$U_0,U$ have a spectral gap and $U_0\in C^{1,1}(A_0),U\in C^{1,1}(A)$, or that
$U_0\in C^{1+0}(A_0),U\in C^{1+0}(A)$. Let
$$
\Theta
:=\big\{\widetilde\mu^{A_0}(U_0)\setminus\sigma_{\rm p}(U_0)\big\}
\cap\big\{\widetilde\mu^{A}(U)\setminus\sigma_{\rm p}(U)\big\},
$$
$J\in\B(\H_0,\H)$, $\G$ be an auxiliary Hilbert space, and assume there exist
$T_0\in\B(\H_0,\G)$ and $T\in\B(\H,\G)$ with $JU_0-UJ=T^*T_0$ and such that $T_0$
extends continuously to an element of $\B\big(\dom(\langle A_0\rangle^s)^*,\G\big)$
and $T$ extends continuously to an element of
$\B\big(\dom(\langle A\rangle^s)^*,\G\big)$ for some $s>1/2$. Then, the strong limits
$$
W_\pm(U,U_0,J,\Theta):=\slim_{n\to\pm\infty}U^{-n}JU_0^nE^{U_0}(\Theta)
$$
exist, are $J$-complete on $\Theta$, and satisfy the relations
$$
W_\pm(U,U_0,J,\Theta)^*=W_\pm(U_0,U,J^*,\Theta)
\quad\hbox{and}\quad
W_\pm(U,U_0,J,\Theta)\;\!\eta(U_0)=\eta(U)\;\!W_\pm(U,U_0,J,\Theta)
$$
for each bounded Borel function $\eta:\T\to\C$.
\end{Corollary}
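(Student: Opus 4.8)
The plan is to obtain the statement by feeding \cite[Thm.~3.4]{RST_1} --- the Mourre-type theorem for unitary operators established in the first part of this series --- into Theorem \ref{thm_wave} above. Recall that \cite[Thm.~3.4]{RST_1}, applied to a unitary operator $V$ satisfying, with respect to a self-adjoint operator $A_V$, the regularity assumptions appearing in the statement (a spectral gap and $V\in C^{1,1}(A_V)$, or $V\in C^{1+0}(A_V)$), provides on the open set $\widetilde\mu^{A_V}(V)$ two facts: the eigenvalues of $V$ lying in $\widetilde\mu^{A_V}(V)$ are of finite multiplicity and can accumulate at most at the boundary $\T\setminus\widetilde\mu^{A_V}(V)$ (so $V$ has no singular continuous spectrum there), and any operator $B$ that extends continuously to an element of $\B\big(\dom(\langle A_V\rangle^s)^*,\G\big)$ for some $s>1/2$ is locally $V$-smooth on $\widetilde\mu^{A_V}(V)\setminus\sigma_{\rm p}(V)$.

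First I would check that $\Theta$ is open. By the accumulation statement above, $\sigma_{\rm p}(U_0)\cap\widetilde\mu^{A_0}(U_0)$ has no accumulation point inside $\widetilde\mu^{A_0}(U_0)$, hence is relatively closed in the open set $\widetilde\mu^{A_0}(U_0)$; therefore $\widetilde\mu^{A_0}(U_0)\setminus\sigma_{\rm p}(U_0)$ is open, and the same argument gives that $\widetilde\mu^{A}(U)\setminus\sigma_{\rm p}(U)$ is open. Thus $\Theta$, being the intersection of these two sets, is open, as required to invoke Theorem \ref{thm_wave}.

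Next I would produce the locally smooth operators. Applying the smoothness part of \cite[Thm.~3.4]{RST_1} to $U_0$ with $A_0$, the operator $T_0$ is locally $U_0$-smooth on $\widetilde\mu^{A_0}(U_0)\setminus\sigma_{\rm p}(U_0)$; since \eqref{def_U_smooth} shows that local smoothness on a set is inherited by subsets, and $\Theta\subset\widetilde\mu^{A_0}(U_0)\setminus\sigma_{\rm p}(U_0)$, the operator $T_0$ is locally $U_0$-smooth on $\Theta$. Symmetrically, $T$ is locally $U$-smooth on $\Theta$. Theorem \ref{thm_wave} then applies verbatim to the data $U_0$, $U$, $J\in\B(\H_0,\H)$, the open set $\Theta$, the auxiliary space $\G$, the operators $T_0\in\B(\H_0,\G)$ and $T\in\B(\H,\G)$, and the factorisation $JU_0-UJ=T^*T_0$, yielding at once the existence of $W_\pm(U,U_0,J,\Theta)$, their $J$-completeness on $\Theta$, the adjoint relation $W_\pm(U,U_0,J,\Theta)^*=W_\pm(U_0,U,J^*,\Theta)$, and the intertwining relation, which is the claim.

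There is no real obstacle in this argument: the analytic content is entirely packaged in \cite[Thm.~3.4]{RST_1}, whose proof rests on the abstract commutator methods for unitary operators of the first paper, and in Theorem \ref{thm_wave}. The only points deserving a moment's care are the openness of $\Theta$ --- which uses the discreteness of the eigenvalues in the Mourre region, not merely the absence of singular continuous spectrum --- and the bookkeeping that the threshold $s>1/2$ in the hypotheses on $T_0$ and $T$ is precisely the one under which membership in $\B\big(\dom(\langle A\rangle^s)^*,\G\big)$ entails local smoothness.
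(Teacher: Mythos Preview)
Your proposal is correct and follows exactly the route the paper takes: the corollary is stated in the paper as an immediate combination of \cite[Thm.~3.4]{RST_1} with Theorem~\ref{thm_wave}, and you have simply spelled out the two verifications (openness of $\Theta$ via discreteness of eigenvalues in the Mourre region, and local smoothness of $T_0,T$ on $\Theta$) that justify this combination.
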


\section{Scattering theory for quantum walks with an anisotropic coin}
\label{sec_walks}
\setcounter{equation}{0}

In this section, we present our results on the scattering theory for the pair
$\{U_0,U\}$ when $U$ is the evolution operator of a one-dimensional quantum walk with
an anisotropic coin and $U_0$ is the corresponding free evolution operator. We start
by recalling from \cite[Sec.~4]{RST_1} the needed definitions and facts on the
operators $U$ and $U_0$.

Let $\H$ be the Hilbert space of square-summable $\C^2$-valued sequences
$$
\H:=\ell^2(\Z,\C^2)
=\left\{\Psi:\Z\to\C^2\mid\sum_{x\in\Z}\|\Psi(x)\|_2^2<\infty\right\},
$$
where $\|\cdot\|_2$ is the usual norm on $\C^2$. Then, the evolution operator of the
one-dimensional quantum walk in $\H$ that we consider is given by $U:=SC$, with $S$ a
shift operator defined by
$$
(S\Psi)(x)
:=\begin{pmatrix}
\Psi^{(0)}(x+1)\\
\Psi^{(1)}(x-1)
\end{pmatrix},
\quad
\Psi
=\begin{pmatrix}
\Psi^{(0)}\\
\Psi^{(1)}
\end{pmatrix}\in\H,~x\in\Z,
$$
and $C$ a coin operator defined by
$$
(C\Psi)(x):=C(x)\Psi(x),\quad \Psi\in\H,~x\in\Z,~C(x)\in\U(2).
$$
In particular, the evolution operator $U$ is unitary in $\H$ since both $S$ and $C$
are unitary in $\H$.

We assume that the coin operator $C$ has an anisotropic behaviour at infinity. More
precisely, we assume that $C$ converges with short-range rate to two asymptotic coin
operators, one on the left and one on the right in the following way:

\begin{Assumption}[Short-range assumption]\label{ass_short}
There exist $C_\ell,C_{\rm r}\in\U(2)$, $\kappa_\ell,\kappa_{\rm r}>0$, and
$\varepsilon_\ell,\varepsilon_{\rm r}>0$ such that
\begin{align*}
&\big\|C(x)-C_\ell\big\|_{\B(\C^2)}
\le\kappa_\ell\;\!|x|^{-1-\varepsilon_\ell}\quad\mathrm{if}~x<0\\
&\big\|C(x)-C_{\rm r}\big\|_{\B(\C^2)}
\le\kappa_{\rm r}\;\!|x|^{-1-\varepsilon_{\rm r}}\quad\mathrm{if}~x>0,
\end{align*}
where the indexes $\ell$ and ${\rm r}$ stand for ``left" and ``right".
\end{Assumption}

This assumption provides us with two new unitary operators $U_\ell:=SC_\ell$ and
$U_{\rm r}:=SC_{\rm r}$ describing the asymptotic behaviour of $U$ on the left and on
the right.

From now on, we shall use the symbol $\star$ to denote either the index $\ell$ or the
index ${\rm r}$, and we define the space
$$
\H_{\rm fin}
:=\bigcup_{n\in\N}\big\{\Psi\in\H\mid\hbox{$\Psi(x)=0$ if $|x|\ge n$}\big\}
\subset\H,
$$
the Hilbert space $\K:=\ltwo\big([0,2\pi),\frac{\d k}{2\pi},\C^2\big)$, and the
unitary Fourier transform $\F:\H\to\K$ which corresponds to the unique continuous
extension of the operator
$$
(\F\Psi)(k):=\sum_{x\in\Z}\e^{-ikx}\Psi(x),\quad\Psi\in\H_{\rm fin},~k\in[0,2\pi).
$$
The operator $U_\star$ is decomposable in the Fourier representation, namely, for all
$f\in\K$ and almost every $k\in[0,2\pi)$ we have
$$
(\F\;\!U_\star\;\!\F^*f)(k)=\widehat{U_\star}(k)f(k)\quad\hbox{with}\quad
\widehat{U_\star}(k)
:=\begin{pmatrix}
\e^{ik}&0\\
0&\e^{-ik}
\end{pmatrix}
C_\star\in\U(2).
$$
Also, since $\widehat{U_\star}(k)\in\U(2)$, the spectral theorem implies that
$$
\widehat{U_\star}(k)=\sum_{j=1}^2\lambda_{\star,j}(k)\;\!\Pi_{\star,j}(k),
$$
with $\lambda_{\star,j}(k)$ the eigenvalues of $\widehat{U_\star}(k)$ and
$\Pi_{\star,j}(k)$ the corresponding orthogonal projections. Furthermore, for
$j\in\{1,2\}$ we let $v_{\star,j}:[0,2\pi)\to\R$ be the bounded function given by
\begin{equation}\label{def_small_v}
v_{\star,j}(k):=i\;\!\lambda_{\star,j}'(k)\big(\lambda_{\star,j}(k)\big)^{-1},
\end{equation}
where $(\;\!\cdot\;\!)'$ means the derivative with respect to $k$. The function
$v_{\star,j}$ is real valued because $\lambda_{\star,j}$ takes values in $\T$. Then,
we define for all $f\in\K$ and almost every $k\in[0,2\pi)$ the decomposable operator
$\widehat{V_\star}\in\B(\K)$,
\begin{equation}\label{def_big_V}
\big(\widehat{V_\star}f\big)(k):=\widehat{V_\star}(k)f(k)
\quad\hbox{where}\quad
\widehat{V_\star}(k):=\sum_{j=1}^2v_{\star,j}(k)\;\!\Pi_{\star,j}(k)\in\B(\C^2),
\end{equation}
and we call asymptotic velocity operator the operator
$V_\star:=\F^*\;\!\widehat{V_\star}\;\!\F$.

We can now start studying the scattering theory for the operator $U$. As free
evolution operator, we use the unitary operator $U_0:=U_\ell\oplus U_{\rm r}$ in the
Hilbert space $\H_0:=\H\oplus\H$. In \cite[Sec.~4.2]{RST_1}, it has been shown that
the spectrum of $U_0$ coincides with the essential spectrum of $U$, namely,
$$
\sigma_{\rm ess}(U)=\sigma(U_\ell)\cup\sigma(U_{\rm r})=\sigma(U_0).
$$
As identification operator between the Hilbert spaces $\H_0$ and $\H$, we use the
operator $J\in\B(\H_0,\H)$ defined by
$$
J(\Psi_\ell,\Psi_{\rm r}):=j_\ell\;\!\Psi_\ell+j_{\rm r}\;\!\Psi_{\rm r},
\quad(\Psi_\ell,\Psi_{\rm r})\in\H_0,
$$
where
$$
j_{\rm r}(x):=
\begin{cases}
1 & \hbox{if $x\ge0$}\\
0 & \hbox{if $x\le-1$}
\end{cases}
\quad\hbox{and}\quad
j_\ell:=1-j_{\rm r}.
$$	

The first lemma of the section consists in a simple observation related to the
$J$-completeness of the wave operators for the pair $\{U_0,U\}:$

\begin{Lemma}\label{lemma_lim_J}
For any $\Psi\in\H$, we have $\lim_{n\to\infty}\|J^*U^n\;\!\Psi\|_{\H_0}=0$ if and
only if $\Psi=0$.
\end{Lemma}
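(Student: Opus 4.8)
The plan is to observe that the adjoint $J^*$ is an isometry from $\H$ to $\H_0$, which makes the statement almost immediate. First I would compute $J^*$ explicitly. Since $j_\ell$ and $j_{\rm r}$ act on $\H$ by multiplication by the real-valued, idempotent characteristic functions of $\{x\le-1\}$ and $\{x\ge0\}$, they are self-adjoint projections on $\H$, and a one-line duality computation gives $J^*\Phi=(j_\ell\;\!\Phi,\;\!j_{\rm r}\;\!\Phi)$ for every $\Phi\in\H$.

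Next I would use that $j_\ell$ and $j_{\rm r}$ are complementary, that is, $j_\ell+j_{\rm r}=1_\H$ and $j_\ell\;\!j_{\rm r}=0$, so that for any $\Phi\in\H$
$$
\|J^*\Phi\|_{\H_0}^2=\|j_\ell\;\!\Phi\|_\H^2+\|j_{\rm r}\;\!\Phi\|_\H^2
=\sum_{x\le-1}\|\Phi(x)\|_2^2+\sum_{x\ge0}\|\Phi(x)\|_2^2=\|\Phi\|_\H^2 .
$$
Hence $J^*$ is an isometry from $\H$ into $\H_0$.

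Finally, applying this identity with $\Phi=U^n\Psi$ and using that $U$ is unitary, I obtain $\|J^*U^n\Psi\|_{\H_0}=\|U^n\Psi\|_\H=\|\Psi\|_\H$ for every $n\in\N$. Thus the sequence $\big(\|J^*U^n\Psi\|_{\H_0}\big)_{n\in\N}$ is in fact constant, equal to $\|\Psi\|_\H$, and its limit vanishes if and only if $\Psi=0$, which is the claim. There is essentially no obstacle in this argument; the only point requiring a little care is the explicit description of $J^*$ together with the fact that the two characteristic functions defining $j_\ell$ and $j_{\rm r}$ partition $\Z$, which is precisely what forces $J^*$ to preserve norms.
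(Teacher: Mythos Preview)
Your proof is correct and follows essentially the same approach as the paper: both arguments show that $J^*$ is an isometry (equivalently, $JJ^*=1_\H$), hence $\|J^*U^n\Psi\|_{\H_0}=\|U^n\Psi\|_\H=\|\Psi\|_\H$ for every $n$, from which the claim is immediate. The only cosmetic difference is that the paper quotes the identity $JJ^*=1_\H$ from \cite[Lemma~4.7]{RST_1}, whereas you verify it directly via the explicit formula $J^*\Phi=(j_\ell\Phi,j_{\rm r}\Phi)$ and the fact that $j_\ell,j_{\rm r}$ are complementary projections.
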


\begin{proof}
We know from \cite[Lemma~4.7]{RST_1} that $JJ^*=1_\H$. Therefore, we have for any
$n\in\Z$ and $\Psi\in\H$
$$
\|J^*U^n\;\!\Psi\|_{\H_0}^2
=\big\langle J^*U^n\;\!\Psi,J^*U^n\;\!\Psi\big\rangle_{\H_0}
=\big\langle U^n\;\!\Psi,JJ^*U^n\;\!\Psi\big\rangle_\H
=\|U^n\;\!\Psi\|_\H^2
=\|\Psi\|_\H^2,
$$
which implies the claim.
\end{proof}

A direct consequence of this lemma is that the abstract spaces $\NN_\pm(U,J,\Theta)$
defined in Section \ref{sec_scattering} are trivial in our case:
\begin{equation}\label{eq_spaces_eta}
\NN_\pm(U,J,\Theta)=E^U(\T\setminus\Theta)\H.
\end{equation}

Now, in order to prove with the help of Corollary \ref{corol_wave} the existence and
the completeness of the wave operators for the pair $\{U_0,U\}$, we need to recall
some facts about conjugate operators $A_0$ and $A$ introduced in \cite{RST_1}. In the
proof of \cite[Thm.~4.5(c)]{RST_1}, it has been shown that there exists for
$\star=\ell,{\rm r}$ an operator $A_\star$, defined in terms of the velocity operator
$V_\star$ and essentially self-adjoint on $\H_{\rm fin}$, such that
$U_\star\in C^2(A_\star)$. In addition, the operator $A_\star$ is conjugate to the
operator $U_\star$ outside the set $\partial\sigma(U_\star)$ of boundary points of
$\sigma(U_\star)$ in $\T$. As a consequence, the operator
$A_0:=A_\ell\oplus A_{\rm r}$ is well-defined and conjugate to $U_0$ on the set
$\sigma(U_0)\setminus\tau(U)$, with
$$
\tau(U):=\partial\sigma(U_\ell)\cup\partial\sigma(U_{\rm r}).
$$
The set $\tau(U)$, which contains at most $8$ values, is called for this reason the
set of thresholds of $U$. In \cite[Lemma~4.9]{RST_1}, it has also been shown that the
operator $JA_0J^*$ is essentially self-adjoint on $\H_{\rm fin}$, with self-adjoint
extension denoted by $A$, and that $A$ is conjugate to $U$ on
$\sigma(U_0)\setminus\tau(U)$.

We also recall a relation between the conjugate operator $A$ and the position operator
$Q$ given by
$$
\big(Q\;\!\Psi\big)(x):=x\;\!\Psi(x),
\quad x\in\Z,~\Psi\in\dom(Q):=\big\{\Psi\in\H\mid\|Q\;\!\Psi\|_\H<\infty\big\}.
$$
This relation has already been used in the proof of \cite[Lemma~4.13]{RST_1}, but we
make it more explicit now. As mentioned in that proof, the operator
$\langle Q\rangle^{-1}A_\star$ defined on $\H_{\rm fin}$ extends continuously to an
element of $\B(\H)$. This implies that $\dom(\langle Q\rangle)\subset\dom(A_\star)$,
and thus that $\dom(\langle Q\rangle)\subset\dom(A)$ due to the equality
\begin{equation}\label{eq_J_A_0_J_star}
A=JA_0J^*=j_\ell\;\!A_\ell\;\!j_\ell+j_{\rm r}\;\!A_{\rm r}\;\!j_{\rm r}
\quad\hbox{on}\quad\H_{\rm fin}.
\end{equation}
Therefore, we obtain by real interpolation the inclusions
\begin{equation}\label{eq_inclusions_Q}
\dom(\langle Q\rangle)^s\subset\dom(\langle A\rangle)^s
\quad\hbox{and}\quad
\B\big(\dom(\langle Q\rangle)^{-s},\H\big)
\subset\B\big(\dom(\langle A\rangle)^{-s},\H\big)
\end{equation}
for each $s>0$.

We can now state our theorem on the $J$-completeness of the wave operators for the
pair $\{U_0,U\}$, with the notation $E^{U}_{\rm ac}$ for the orthogonal projection on
the absolutely continuous subspace of $U$.

\begin{Theorem}\label{thm_complete_walks}
Let
$
\Theta:=\{\sigma(U_\ell)\cup\sigma(U_{\rm r})\}
\setminus\{\tau(U)\cup\sigma_{\rm p}(U)\}
$.
Then, the operators
\begin{equation}\label{eq_wave_theta}
W_\pm(U,U_0,J,\Theta)=\slim_{n\to\pm\infty}U^{-n}JU_0^nE^{U_0}(\Theta)
\end{equation}
exist and satisfy $\overline{\Ran\big(W_\pm(U,U_0,J,\Theta)\big)}=E^{U}_{\rm ac}\H$.
In addition, the relations
$$
W_\pm(U,U_0,J,\Theta)^*=W_\pm(U_0,U,J^*,\Theta)
\quad\hbox{and}\quad
W_\pm(U,U_0,J,\Theta)\;\!\eta(U_0)=\eta(U)\;\!W_\pm(U,U_0,J,\Theta)
$$
hold for each bounded Borel function $\eta:\T\to\C$.
\end{Theorem}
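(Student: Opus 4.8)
The plan is to derive the statement from Corollary \ref{corol_wave}, applied to the conjugate operators $A_0=A_\ell\oplus A_{\rm r}$ and $A$ of \cite{RST_1} and to the open set
$$
\Theta_0:=\big\{\widetilde\mu^{A_0}(U_0)\setminus\sigma_{\rm p}(U_0)\big\}
\cap\big\{\widetilde\mu^{A}(U)\setminus\sigma_{\rm p}(U)\big\},
$$
and then to transfer the conclusions from $\Theta_0$ to the set $\Theta$ of the statement. The regularity assumptions of Corollary \ref{corol_wave} are available from \cite{RST_1}, where it is shown that $U_\star\in C^2(A_\star)$ — hence $U_0\in C^2(A_0)\subset C^{1+0}(A_0)$ — and that $U\in C^{1+0}(A)$; since $U_0$ and $U$ need not have a spectral gap, it is the $C^{1+0}$-alternative in Corollary \ref{corol_wave} that is used. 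It thus remains to produce the factorisation $JU_0-UJ=T^*T_0$ required there.

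To that end I would first compute $JU_0-UJ$ explicitly. Using that the scalar multiplication operators $j_\star$ commute with the coin operator $C$, one obtains
$$
(JU_0-UJ)(\Psi_\ell,\Psi_{\rm r})=G_\ell\;\!\Psi_\ell+G_{\rm r}\;\!\Psi_{\rm r},
\qquad G_\star:=[j_\star,S]\;\!C_\star+S\;\!j_\star(C_\star-C).
$$
Here $[j_\star,S]$ is finite rank and supported near $x=0$, while $j_\star(C_\star-C)$ commutes with $\langle Q\rangle$ and has pointwise norm $\le\kappa_\star\;\!|x|^{-1-\varepsilon_\star}$ on the support of $j_\star$ by Assumption \ref{ass_short}. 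It follows that $\langle Q\rangle^sG_\star\langle Q\rangle^s\in\B(\H)$ for any $s\in\big(\tfrac12,\tfrac12+\tfrac12\min\{\varepsilon_\ell,\varepsilon_{\rm r}\}\big]$, so that $M:=\langle Q\rangle^s(JU_0-UJ)\langle Q_0\rangle^s$ (with $Q_0:=Q\oplus Q$) defines an element of $\B(\H_0,\H)$. Fixing such an $s$ and writing $JU_0-UJ=\langle Q\rangle^{-s}M\langle Q_0\rangle^{-s}$, I would take $\G:=\H$, $T:=\langle Q\rangle^{-s}\in\B(\H,\G)$ and $T_0:=M\langle Q_0\rangle^{-s}\in\B(\H_0,\G)$, so that $T^*T_0=JU_0-UJ$. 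Since $\langle Q\rangle^{-s}$ (resp. $\langle Q_0\rangle^{-s}$) extends to an isomorphism of $\dom(\langle Q\rangle^s)^*$ onto $\H$ (resp. of $\dom(\langle Q_0\rangle^s)^*$ onto $\H_0$), the operators $T$ and $T_0$ extend continuously to elements of $\B(\dom(\langle Q\rangle^s)^*,\H)$ and $\B(\dom(\langle Q_0\rangle^s)^*,\H)$; the inclusions \eqref{eq_inclusions_Q}, together with their counterpart for $A_0=A_\ell\oplus A_{\rm r}$ obtained from the same real interpolation argument, upgrade these memberships to $\B(\dom(\langle A\rangle^s)^*,\H)$ and $\B(\dom(\langle A_0\rangle^s)^*,\H)$. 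Corollary \ref{corol_wave} then gives the existence, the $J$-completeness and the intertwining property of $W_\pm(U,U_0,J,\Theta_0)$, together with $W_\pm(U,U_0,J,\Theta_0)^*=W_\pm(U_0,U,J^*,\Theta_0)$.

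Next I would identify the ranges. From the $J$-completeness on $\Theta_0$ and the triviality \eqref{eq_spaces_eta} of the spaces $\NN_\pm$, one has $\overline{\Ran\big(W_\pm(U,U_0,J,\Theta_0)\big)}=E^U(\Theta_0)\H$. This subspace equals $E^U_{\rm ac}\H$: on the one hand $\Theta_0$ contains no eigenvalue of $U$ and $U$ has no singular continuous spectrum by \cite{RST_1}, so $E^U(\Theta_0)\H\subset E^U_{\rm ac}\H$; on the other hand, $A_0$ (resp. $A$) is conjugate to $U_0$ (resp. $U$) on $\sigma(U_0)\setminus\tau(U)$ and is trivially conjugate outside $\sigma(U_0)$ (resp. $\sigma(U)$), so that $\T\setminus\widetilde\mu^{A_0}(U_0)\subset\tau(U)$ and $\T\setminus\widetilde\mu^{A}(U)\subset\tau(U)\cup\sigma_{\rm p}(U)$ (using $\sigma(U)\setminus\sigma_{\rm ess}(U)\subset\sigma_{\rm p}(U)$). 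Recalling that $\sigma_{\rm p}(U_0)\subset\tau(U)$ and that $\tau(U)$ is finite, we conclude that $\T\setminus\Theta_0$ is a countable set, which therefore carries no absolutely continuous spectrum of $U$; hence $E^U_{\rm ac}\H\subset E^U(\Theta_0)\H$.

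Finally I would pass from $\Theta_0$ to $\Theta$. Using $\sigma_{\rm p}(U_0)\subset\tau(U)$ one checks both the inclusion $\Theta\subset\Theta_0$ and the bound $\Theta_0\setminus\Theta\subset(\T\setminus\sigma(U_0))\cup(\tau(U)\setminus\sigma_{\rm p}(U_0))$. On this last set $E^{U_0}$ vanishes ($U_0$ has no spectrum on the first piece, and the second piece is a finite set of non-eigenvalues of $U_0$), and $E^U$ vanishes as well (the first piece meets $\sigma(U)$ only in $\sigma_{\rm p}(U)$, which $\Theta_0$ avoids, and the second piece is a finite set of non-eigenvalues of $U$, $U$ having no singular continuous spectrum). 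Hence $E^{U_0}(\Theta)=E^{U_0}(\Theta_0)$ and $E^U(\Theta)=E^U(\Theta_0)$, so $W_\pm(U,U_0,J,\Theta)=W_\pm(U,U_0,J,\Theta_0)$ and $W_\pm(U_0,U,J^*,\Theta)=W_\pm(U_0,U,J^*,\Theta_0)$, and every assertion of the theorem follows. The main obstacle is the bookkeeping at the boundary of the spectra: deriving the explicit form of $JU_0-UJ$ and the boundedness of $\langle Q\rangle^s(JU_0-UJ)\langle Q_0\rangle^s$ (where Assumption \ref{ass_short} and the finite rank of $[j_\star,S]$ enter), and, above all, the spectral identification $E^U(\Theta_0)\H=E^U_{\rm ac}\H$, which relies on the absence of singular continuous spectrum and on the control of the eigenvalues and of the finite threshold set $\tau(U)$ established in \cite{RST_1}.
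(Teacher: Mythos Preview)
Your proof is correct and follows essentially the same route as the paper: verify the regularity $U_0\in C^2(A_0)$, $U\in C^{1+\varepsilon}(A)\subset C^{1+0}(A)$, factorise $JU_0-UJ$ through powers of $\langle Q\rangle^{-s}$ (the paper puts the bounded middle piece on the $T$-side rather than the $T_0$-side, but this is immaterial), invoke the abstract result, and then identify $E^U(\Theta)\H=E^U_{\rm ac}\H$ via \eqref{eq_spaces_eta} and the absence of singular continuous spectrum. The only cosmetic difference is that the paper applies the abstract criterion directly on the smaller open set $\Theta\subset\Theta_0$ (local smoothness being inherited by subsets), whereas you work on $\Theta_0$ first and then carefully check $E^{U_0}(\Theta)=E^{U_0}(\Theta_0)$ and $E^U(\Theta)=E^U(\Theta_0)$; your bookkeeping here is correct and arguably more explicit than the paper's.
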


Before the proof, it is convenient to highlight some properties of the projection
$E^{U_0}(\Theta)$. First, let the matrices $C_\star\in\U(2)$ be parameterised as
\begin{equation}\label{eq_param}
C_\star=\e^{i\delta_\star/2}
\begin{pmatrix}
a_\star\e^{i(\alpha_\star-\delta_\star/2)}
& b_\star\e^{i(\beta_\star-\delta_\star/2)}\\
-b_\star\e^{-i(\beta_\star-\delta_\star/2)}
& a_\star\e^{-i(\alpha_\star-\delta_\star/2)}
\end{pmatrix}
\end{equation}
with $a_\star,b_\star\in[0,1]$ satisfying $a_\star^2+b_\star^2=1$, and
$\alpha_\star,\beta_\star,\delta_\star\in(-\pi,\pi]$. Then, recall from
\cite[Lemma~4.1 \& Prop.~4.5]{RST_1} that the operator $U_\star$ has pure point
spectrum with $\sigma(U_\star)\subset\tau(U)$ if $a_\star=0$ and purely absolutely
continuous spectrum with $\partial\sigma(U_\star)\subset\tau(U)$ if $a_\star\in(0,1]$.
Since we also know from \cite[Thm.~2.4]{RST_1} that the number of eigenvalues of $U$
in any closed set $\Theta'\subset\T\setminus\tau(U)$ is finite, we infer that
\begin{equation}\label{eq_projectors}
E^{U_0}(\Theta)=E^{U_0}_{\rm ac}=
\begin{cases}
1_\H\oplus1_\H & \hbox{if $a_\ell,a_{\rm r}\in(0,1]$,}\\
1_\H\oplus0_\H & \hbox{if $a_\ell\in(0,1]$ and $a_{\rm r}=0$,}\\
0_\H\oplus1_\H & \hbox{if $a_\ell=0$ and $a_{\rm r}\in(0,1]$,}\\
0_\H\oplus0_\H & \hbox{if $a_\ell=a_{\rm r}=0$.}
\end{cases}
\end{equation}
Thus, in the generic case $a_\ell,a_{\rm r}\in(0,1]$, the projection $E^{U_0}(\Theta)$
appearing in \eqref{eq_wave_theta} can simply be replaced by
$1_{\H_0}=1_\H\oplus1_\H$.

\begin{proof}[Proof of Theorem \ref{thm_complete_walks}]
All the claims except the equality
$\overline{\Ran\big(W_\pm(U,U_0,J,\Theta)\big)}=E^{U}_{\rm ac}\H$ follow from
Corollary \ref{corol_wave} whose assumptions are checked now.

The proof of \cite[Prop.~4.5(c)]{RST_1} implies that $U_0\in C^2(A_0)$,
\cite[Lemma~4.13]{RST_1} implies that $U\in C^{1+\varepsilon}(A)$ for each
$\varepsilon\in(0,1)$ with
$\varepsilon\le\min\{\varepsilon_\ell,\varepsilon_{\rm r}\}$, and
\cite[Prop.~4.11]{RST_1} implies that
$$
\{\sigma(U_\ell)\cup\sigma(U_{\rm r})\}
\setminus\{\tau(U)\cup\sigma_{\rm p}(U)\}
\subset\big\{\widetilde\mu^{A_0}(U_0)\setminus\sigma_{\rm p}(U_0)\big\}
\cap\big\{\widetilde\mu^{A}(U)\setminus\sigma_{\rm p}(U)\big\}.
$$
Thus, in order to apply Corollary \ref{corol_wave}, it is sufficient to prove the
existence of operators $T_0\in\B(\H_0,\G)$ and $T\in\B(\H,\G)$ with $JU_0-UJ=T^*T_0$
and such that $T_0$ extends continuously to an element of
$\B\big(\dom(\langle A_0\rangle^s)^*,\G\big)$ and $T$ extends continuously to an
element of $\B\big(\dom(\langle A\rangle^s)^*,\G\big)$ for some $s>1/2$. For that
purpose, we set $s:=(1+\varepsilon)/2$ and define the sesquilinear form
$D:\H_0\times\H\to\C$ by
$$
D\big((\Psi_\ell,\Psi_{\rm r}),\Psi\big)
:=\left\langle\langle Q\rangle^s\;\!\Psi,\sum_{\star\in\{\ell,{\rm r}\}}
\big([j_\star,S]C_\star-S(C-C_\star)\;\!j_\star\big)
\langle Q\rangle^s\;\!\Psi_\star \right\rangle_\H
$$
for each $(\Psi_\ell,\Psi_{\rm r})\in\H_{\rm fin }\oplus\H_{\rm fin}$ and
$\Psi\in\H_{\rm fin}$. With arguments similar to the ones used in the proofs of
\cite[Lemmas~4.12 \& 4.13]{RST_1}, one shows that the form $D$ extends continuously to
a bounded form on $\H_0\times\H$. Thus, there exists an operator $D\in\B(\H_0,\H)$
(the same notation is used on purpose) such that
$$
D\big((\Psi_\ell,\Psi_{\rm r}),\Psi\big)
=\big\langle \Psi,D(\Psi_\ell,\Psi_{\rm r})\big\rangle_\H,
\quad(\Psi_\ell,\Psi_{\rm r})\in\H_0,~\Psi\in\H.
$$
Also, we define the operators
$T_0:=\langle Q\rangle^{-s}\oplus\langle Q\rangle^{-s}\in\B(\H_0)$ and
$T:=D^*\langle Q\rangle^{-s}\in\B(\H_0,\H)$, and observe that $JU_0-UJ=T^*T_0$ due to
the definition of $D$ and Equation \cite[Eq.~(4.6)]{RST_1}:
$$
JU_0-UJ=\sum_{\star\in\{\ell,{\rm r}\}}
\big([j_\star,S]C_\star-S(C-C_\star)\;\!j_\star\big).
$$
Finally, we note that the second inclusion in \eqref{eq_inclusions_Q} implies that
$$
\langle Q\rangle^{-s}
\in\B\big(\dom(\langle Q\rangle)^{-s},\H\big)
\subset\B\big(\dom(\langle A\rangle)^{-s},\H\big),
$$
and thus that $T\in\B\big(\dom(\langle A\rangle)^{-s},\H_0\big)$. Similarly, since
$$
\langle Q\rangle^{-s}\oplus\langle Q\rangle^{-s}
\in\B\big(\dom(\langle Q\rangle^{-s}\oplus\langle Q\rangle^{-s}),\H_0\big)
\subset\B\big(\dom(\langle A_0\rangle)^{-s},\H_0\big),
$$
we have that $T_0\in\B\big(\dom(\langle A_0\rangle)^{-s},\H_0\big)$, and thus all the
assumptions of Corollary \ref{corol_wave} are verified.

Therefore, it only remains to show that
$\overline{\Ran\big(W_\pm(U,U_0,J,\Theta)\big)}=E^U_{\rm ac}\H$. For this, we recall
from \eqref{eq_spaces_eta} that $\NN_\pm(U,J,\Theta)=E^U(\T\setminus\Theta)\H$. This,
together with the $J$-completeness of the wave operators and \cite[Thm.~2.4]{RST_1},
implies that
$$
\overline{\Ran\big(W_\pm(U,U_0,J,\Theta)\big)}
=E^U(\Theta)\H\ominus\NN_\pm(U,J,\Theta)
=E^U(\Theta)\H=E_{\rm ac}^U \H.
$$
\end{proof}

In the last proposition of the section, we determine explicitly the kernels of the
wave operators $W_\pm(U,U_0,J,\Theta)$. We use the notation $\chi_\Lambda$ for the
characteristic function of a set $\Lambda\subset\R$ and $\chi_\pm$ for the
characteristic functions of the sets $(0,\infty)$ and $(-\infty,0)$, respectively.

\begin{Proposition}\label{prop_kernels}
Let
$
\Theta:=\{\sigma(U_\ell)\cup\sigma(U_{\rm r})\}
\setminus\{\tau(U)\cup\sigma_{\rm p}(U)\}
$.
Then, the wave operators $W_\pm(U,U_0,J,\Theta):\H_0\to\H$ are partial isometries with
initial sets
\begin{equation}\label{def_subspaces}
\H_0^\pm:=\chi_\mp(V_\ell)\H\oplus\chi_\pm(V_{\rm r})\H.
\end{equation}
\end{Proposition}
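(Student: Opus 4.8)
The plan is to compute $\big\|W_\pm(U,U_0,J,\Theta)\Psi_0\big\|_\H^2$ for an arbitrary $\Psi_0=(\Psi_\ell,\Psi_{\rm r})\in\H_0$ and to identify it with $\|P_\pm\Psi_0\|_{\H_0}^2$, where $P_+:=\chi_-(V_\ell)\oplus\chi_+(V_{\rm r})$ and $P_-:=\chi_+(V_\ell)\oplus\chi_-(V_{\rm r})$ are the orthogonal projections of $\H_0$ onto $\H_0^+$ and $\H_0^-$; since a bounded operator $W$ with $\|W\psi\|=\|P\psi\|$ for all $\psi$ and $P$ a projection is automatically a partial isometry with initial set $\Ran(P)$, this gives the statement (the final set being $E^U_{\rm ac}\H$ by Theorem \ref{thm_complete_walks}). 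To begin, I would reduce the computation to the two ``free'' pieces: writing $\Phi_\star:=E^{U_\star}(\Theta)\Psi_\star$ and using that $U^{-n}$ is unitary, that $U_0=U_\ell\oplus U_{\rm r}$, and that $J(\Xi_\ell,\Xi_{\rm r})=j_\ell\Xi_\ell+j_{\rm r}\Xi_{\rm r}$, one has
\begin{equation*}
\big\|W_\pm(U,U_0,J,\Theta)\Psi_0\big\|_\H^2
=\lim_{n\to\pm\infty}\big\|j_\ell U_\ell^n\Phi_\ell+j_{\rm r}U_{\rm r}^n\Phi_{\rm r}\big\|_\H^2 ;
\end{equation*}
since $j_\ell,j_{\rm r}$ are orthogonal projections with $j_\ell j_{\rm r}=0$, the cross terms vanish and the right-hand side equals $\lim_{n\to\pm\infty}\big(\langle U_\ell^n\Phi_\ell,j_\ell U_\ell^n\Phi_\ell\rangle_\H+\langle U_{\rm r}^n\Phi_{\rm r},j_{\rm r}U_{\rm r}^n\Phi_{\rm r}\rangle_\H\big)$.

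The core input is an asymptotic-velocity statement for the translation-invariant $U_\star$. Since $\sigma(Q)\subset\Z$, for $n\ge1$ one has $j_{\rm r}=\chi_{[0,\infty)}(Q)=\chi_{[0,\infty)}(Q/n)$ and $j_\ell=\chi_{(-\infty,0)}(Q/n)$, while for $n\le-1$ one has $j_{\rm r}=\chi_{(-\infty,0]}(Q/n)$ and $j_\ell=\chi_{(0,\infty)}(Q/n)$. I would then establish that
\begin{equation*}
\slim_{n\to\pm\infty}U_\star^{-n}g(Q/n)U_\star^n=g(V_\star)\qquad\text{for every bounded continuous }g:\R\to\C ,
\end{equation*}
by the standard group-velocity argument: one has $U_\star^{-1}QU_\star=Q+B_\star$ with $B_\star$ a bounded multiplication operator, so $U_\star^{-n}(Q/n)U_\star^n=Q/n+\tfrac1n\sum_{m=0}^{n-1}U_\star^{-m}B_\star U_\star^m$, and computing the time average in the Fourier representation with the help of $i\lambda_{\star,j}'(k)\lambda_{\star,j}(k)^{-1}=v_{\star,j}(k)$ shows $U_\star^{-n}(Q/n)U_\star^n\to V_\star$ in the strong resolvent sense, whence the displayed limit by functional calculus. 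For our purpose it then suffices to let $n\to\pm\infty$ in the forms $\langle\Phi_\star,U_\star^{-n}j_\star U_\star^n\Phi_\star\rangle$; writing $j_\star$ as a characteristic function of $Q/n$ as above and sandwiching it between continuous functions, this passage is licit as soon as $V_\star$ has no eigenvalue at $0$, i.e. $\chi_{\{0\}}(V_\star)=0$. This holds whenever $E^{U_\star}(\Theta)\ne0$: for $a_\star\in(0,1]$ the eigenvalue branches $\lambda_{\star,j}$ are non-constant real-analytic (since $\mathrm{tr}\,\widehat{U_\star}(k)=e^{ik}(C_\star)_{1,1}+e^{-ik}(C_\star)_{2,2}$ is non-constant when $a_\star\ne0$), so each $v_{\star,j}$ is either constant equal to $\pm1$ or has discrete zero set, and in all cases $\chi_{\{0\}}(V_\star)=0$; consequently $\chi_{[0,\infty)}(V_\star)=\chi_+(V_\star)$ and $\chi_{(-\infty,0]}(V_\star)=\chi_-(V_\star)$.

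Feeding $\Phi_\star$ into these limits gives $\big\|W_+(U,U_0,J,\Theta)\Psi_0\big\|_\H^2=\|\chi_-(V_\ell)E^{U_\ell}(\Theta)\Psi_\ell\|_\H^2+\|\chi_+(V_{\rm r})E^{U_{\rm r}}(\Theta)\Psi_{\rm r}\|_\H^2$ and, symmetrically, $\big\|W_-(U,U_0,J,\Theta)\Psi_0\big\|_\H^2=\|\chi_+(V_\ell)E^{U_\ell}(\Theta)\Psi_\ell\|_\H^2+\|\chi_-(V_{\rm r})E^{U_{\rm r}}(\Theta)\Psi_{\rm r}\|_\H^2$. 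To conclude I would discard the projections $E^{U_\star}(\Theta)$: when $a_\star\in(0,1]$ the operator $U_\star$ is purely absolutely continuous and $\tau(U)\cup\sigma_{\rm p}(U)$ is at most countable, hence $E^{U_\star}(\Theta)=1_\H$; when $a_\star=0$ one has $E^{U_\star}(\Theta)=0$ by \eqref{eq_projectors} but also $V_\star=0$ (the $\lambda_{\star,j}$ being then constant in $k$), so $\chi_\pm(V_\star)=0$; in either case $\chi_\pm(V_\star)E^{U_\star}(\Theta)=\chi_\pm(V_\star)$. This yields $\|W_\pm(U,U_0,J,\Theta)\Psi_0\|_\H^2=\|P_\pm\Psi_0\|_{\H_0}^2$, which is the claim.

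The main difficulty is the asymptotic-velocity input, and especially its extension from continuous test functions to the characteristic functions $\chi_\pm$: the latter hinges on the absence of an eigenvalue of $V_\star$ at $0$ on the absolutely continuous subspace, which is where the analyticity of the band functions $\lambda_{\star,j}$, $v_{\star,j}$ is used. The rest — the vanishing of the cross terms, the removal of $E^{U_\star}(\Theta)$, and the passage from $W_\pm^*W_\pm=P_\pm$ to the partial-isometry statement — is routine.
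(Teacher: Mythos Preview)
Your argument is correct and rests on the same two inputs as the paper's proof: the asymptotic-velocity statement $\slim_{n\to\pm\infty}U_\star^{-n}g(Q/n)U_\star^n=g(V_\star)$ for continuous bounded $g$ (quoted in the paper from \cite{Suz16}), and the fact that $0$ is not an atom of the spectral measure of $V_\star$ when $a_\star\in(0,1]$ (so that the passage from continuous test functions to $\chi_\pm$ is licit). The packaging, however, is different. The paper argues in two steps on a dense set: first it shows $\H_0^\mp\subset\ker W_\pm$ by bounding $j_\star$ from above by a continuous $\eta_\star$ that kills the relevant spectral support of $V_\star$, then it shows isometry on $\H_0^\pm$ by bounding $1-j_\star$ from above by a continuous $\zeta_\star$; each step uses vectors $(\Psi_\ell,\Psi_{\rm r})$ whose $V_\star$-spectral support is separated from $0$ by some $\varepsilon>0$, followed by a density argument. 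You instead compute $\|W_\pm\Psi_0\|_\H^2$ directly for arbitrary $\Psi_0$, exploit $j_\ell j_{\rm r}=0$ to drop the cross term, and pass to the limit via a Portmanteau-type sandwich, landing on $\|P_\pm\Psi_0\|_{\H_0}^2$ in one stroke. Your route is a bit more economical and yields the partial-isometry conclusion from the single identity $W_\pm^*W_\pm=P_\pm$; the paper's two-step argument stays entirely within continuous functional calculus at each stage and avoids invoking weak convergence of spectral measures, at the cost of the extra density step.
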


Let us make two observations before giving the proof. Firstly, if $a_\ell\ne0$ and
$a_{\rm r}\ne0$, then \cite[Lemma~4.6]{RST_1} implies that the value $0$ is not an
atom of the spectral measures of $V_\ell$ and $V_{\rm r}$. Therefore, one has the
following orthogonal decomposition of $\H_0:$
$$
\H_0
=\H\oplus\H
=\big(\chi_-(V_\ell)+\chi_+(V_\ell)\big)\H
\oplus\big(\chi_+(V_{\rm r})+\chi_-(V_{\rm r})\big)\H
=\H_0^+\oplus\H_0^-.
$$
Secondly, if $a_\ell=0$, then \cite[Lemma~4.2(a)]{RST_1} implies that $V_\ell=0$. Thus
$\chi_\pm(V_\ell)=0$, and \eqref{def_subspaces} implies that $W_\pm(U,U_0,J,\Theta)$
are isometric only on vectors $0\oplus\Psi_{\rm r}$ with
$\Psi_{\rm r}\in\chi_\pm(V_{\rm r})\H$. Such a result is not surprising since we know
from \cite[Lemma~4.1(a)]{RST_1} that in this case one has
$\sigma(U_\ell)=\sigma_{\rm p}(U_\ell)$ and
$$
W_\pm(U,U_0,J,\Theta)
=W_\pm(U,U_0,J,\Theta)E^{U_0}(\Theta)
=W_\pm(U,U_0,J,\Theta)\big(0\oplus E^{U_{\rm r}}(\Theta)\big).
$$
A similar result holds if $a_{\rm r}=0$.

\begin{proof}[Proof of proposition \ref{prop_kernels}]
We give the proof for $W_+(U,U_0,J,\Theta)$, treating separately the cases
corresponding to the different values of $a_\ell$ and $a_{\rm r}$. The proof for
$W_-(U,U_0,J,\Theta)$ is similar.

(i) If $a_\ell=a_{\rm r}=0$, then we know from \eqref{eq_projectors} that
$E^{U_0}(\Theta)=0$, and
$$
W_+(U,U_0,J,\Theta)
=W_+(U,U_0,J,\Theta)E^{U_0}(\Theta)
=0.
$$
Thus, the wave operator $W_+(U,U_0,J,\Theta)$ is isometric only on the subspace
$\{0\}\oplus\{0\}$. But, we also have $\H_0^+=\{0\}\oplus\{0\}$ since
$V_\ell=0=V_{\rm r}$. So, $W_+(U,U_0,J,\Theta)$ is a partial isometry with (trivial)
initial set $\H_0^+$.

(ii) If $a_\ell,a_{\rm r}\in(0,1]$, we know from \eqref{eq_projectors} that
$E^{U_0}(\Theta)=1_{\H_0}$. To show that
$\H_0^-\subset\ker\big(W_+(U,U_0,J,\Theta)\big)$, take
$(\Psi_\ell,\Psi_{\rm r})\in\H_0^-$ such that
$\chi_{[\varepsilon,\infty)}(V_\ell)\Psi_\ell=\Psi_\ell$ and
$\chi_{(-\infty,-\varepsilon]}(V_{\rm r})\Psi_{\rm r}=\Psi_{\rm r}$ for some
$\varepsilon>0$. Then, one has
\begin{align*}
\big\|W_+(U,U_0,J,\Theta)(\Psi_\ell,\Psi_{\rm r})\big\|_\H
&=\left\|\slim_{n\to\infty}U^{-n}JU_0^n(\Psi_\ell,\Psi_{\rm r})\right\|_\H\\
&=\lim_{n\to\infty}\left\|\sum_{\star\in\{\ell,{\rm r}\}}
U^{-n}j_\star U^n_\star\;\!\Psi_\star\right\|_\H\\
&\le\sum_{\star\in\{\ell,{\rm r}\}}\lim_{n\to\infty}
\big\|U^{-n}j_\star U^n_\star\;\!\Psi_\star\big\|_\H\\
&=\sum_{\star\in\{\ell,{\rm r}\}}\lim_{n\to\infty}
\big\|U^{-n}_\star j_\star U^n_\star\;\!\Psi_\star\big\|_\H.
\end{align*}
Now, if $\eta_\ell,\eta_{\rm r}\in C(\R,[0,1])$ satisfy
$$
\eta_\ell(s):=
\begin{cases}
1 & \hbox{if $s<0$}\\
0 & \hbox{if $s\ge\varepsilon$}\\
\end{cases}
\quad\hbox{and}\quad
\eta_{\rm r}(s):=
\begin{cases}
0 & \hbox{if $s\le-\varepsilon$}\\
1 & \hbox{if $s>0$,}\\
\end{cases}
$$
one obtains for each $n\in\N^*$ the inequality
$$
\big\|U^{-n}_\star j_\star U^n_\star\;\!\Psi_\star\big\|_\H
\le\big\|U^{-n}_\star\;\!\eta_\star(Q/n)U^n_\star
\;\!\Psi_\star\big\|_\H.
$$
Furthermore, since
$$
\eta_\ell(V_\ell)\Psi_\ell
=\eta_\ell(V_\ell)\;\!\chi_{[\varepsilon,\infty)}(V_\ell)\Psi_\ell
=0
\quad\hbox{and}\quad
\eta_{\rm r}(V_{\rm r})\Psi_{\rm r}
=\eta_{\rm r}(V_{\rm r})\;\!\chi_{(-\infty,-\varepsilon]}(V_{\rm r})\Psi_{\rm r}
=0,
$$
one infers from \cite[Thm.~4,1]{Suz16} and a standard result on the convergence in the
strong resolvent sense \cite[Thm.~VIII.20(b)]{RS1} that
$$
\lim_{n\to\infty}\big\|U^{-n}_\star\;\!\eta_\star(Q/n)U^n_\star
\;\!\Psi_\star\big\|_\H
=\big\|\;\!\eta_\star(V_\star)\Psi_\star\big\|_\H
=0.
$$
Putting together what precedes, one obtains that
\begin{align*}
\big\|W_+(U,U_0,J,\Theta)(\Psi_\ell,\Psi_{\rm r})\big\|_\H
\le\sum_{\star\in\{\ell,{\rm r}\}}\lim_{n\to\infty}
\big\|U^{-n}_\star j_\star U^n_\star\;\!\Psi_\star\big\|_\H
=0,
\end{align*}
meaning that $(\Psi_\ell,\Psi_{\rm r})\in\ker\big(W_+(U,U_0,J,\Theta)\big)$. Since
$$
(\Psi_\ell,\Psi_{\rm r})
=\big(\chi_{[\varepsilon,\infty)}(V_\ell)\Psi_\ell,
\chi_{(-\infty,-\varepsilon]}(V_{\rm r})\Psi_{\rm r}\big),
$$
a density argument taking into account the fact that the value $0$ is not an atom of
the spectral measures of $V_\ell$ and $V_{\rm r}$ then shows that
$\H_0^-\subset\ker\big(W_+(U,U_0,J,\Theta)\big)$.

To show that $W_+(U,U_0,J,\Theta)$ is an isometry on $\H_0^+$, take
$(\Psi_\ell,\Psi_{\rm r})\in\H_0^+$ such that
$\chi_{(-\infty,-\varepsilon]}(V_\ell)\Psi_\ell=\Psi_\ell$ and
$\chi_{[\varepsilon,\infty)}(V_{\rm r})\Psi_{\rm r}=\Psi_{\rm r}$ for some
$\varepsilon>0$, and let $\zeta_\ell,\zeta_{\rm r}\in C(\R,[0,1])$ satisfy
$$
\zeta_\ell(s):=
\begin{cases}
0 & \hbox{if $s\le-\varepsilon$}\\
1 & \hbox{if $s>0$}\\
\end{cases}
\quad\hbox{and}\quad
\zeta_{\rm r}(s):=
\begin{cases}
1 & \hbox{if $s<0$}\\
0 & \hbox{if $s\ge\varepsilon$.}\\
\end{cases}
$$
Then, using successively the identity $E^{U_0}(\Theta)=1_{\H_0}$, the identity
$J^*J=j_\ell\oplus j_{\rm r}$ of \cite[Lemma~4.7]{RST_1}, the definition of the
asymptotic velocity $V_\star$, and the assumption on the support of $\Psi_\star$, one
gets
\begin{align*}
\left|\big\|W_+(U,U_0,J,\Theta)(\Psi_\ell,\Psi_{\rm r})\big\|_\H^2
-\big\|(\Psi_\ell,\Psi_{\rm r})\big\|_{\H_0}^2\right|
&=\lim_{n\to\infty}\left|\big\|U^{-n}JU_0^n(\Psi_\ell,\Psi_{\rm r})\big\|_\H^2
-\big\|(\Psi_\ell,\Psi_{\rm r})\big\|_{\H_0}^2\right|\\
&=\lim_{n\to\infty}\left|\big\langle U_0^n(\Psi_\ell,\Psi_{\rm r}),
(J^*J-1)U_0^n(\Psi_\ell,\Psi_{\rm r})\big\rangle_{\H_0}\right|\\
&=\lim_{n\to\infty}\big\langle U_0^n(\Psi_\ell,\Psi_{\rm r}),
(1-j_\ell\oplus j_{\rm r})U_0^n(\Psi_\ell,\Psi_{\rm r})\big\rangle_{\H_0}\\
&\le\sum_{\star\in\{\ell,{\rm r}\}}\lim_{n\to\infty}\big\langle\Psi_\star,
U_\star^{-n}\zeta_\star(Q/n)U_\star^n\;\!\Psi_\star\big\rangle_\H\\
&=\sum_{\star\in\{\ell,{\rm r}\}}\big\langle\Psi_\star,
\zeta_\star(V_\star)\Psi_\star\big\rangle_\H\\
&=0.
\end{align*}
Thus, $W_+(U,U_0,J,\Theta)$ is isometric on $(\Psi_\ell,\Psi_{\rm r})$. Since
$$
(\Psi_\ell,\Psi_{\rm r})
=\big(\chi_{(-\infty,-\varepsilon]}(V_\ell)\Psi_\ell,
\chi_{[\varepsilon,\infty)}(V_{\rm r})\Psi_{\rm r}\big),
$$
a density argument taking into account the fact that the value $0$ is not an atom of
the spectral measures of $V_\ell$ and $V_{\rm r}$ then shows that
$W_+(U,U_0,J,\Theta)$ is an isometry on whole of $\H_0^+$.

(iii) If $a_\ell=0$ and $a_{\rm r}\in(0,1]$ or if $a_\ell\in(0,1]$ and $a_{\rm r}=0$,
then the claim can be shown as in point (ii). We leave the details to the reader.
\end{proof}

\begin{Remark}\label{rem_sum_wave}
Let $\Psi=(\Psi_\ell,\Psi_{\rm r})\in\H_0$. Then, we have
\begin{align*}
W_\pm(U,U_0,J,\Theta)\Psi
&=\slim_{n\to\pm\infty}U^{-n}JU_0^nE^{U_0}(\Theta)\Psi\\
&=\slim_{n\to\pm\infty}U^{-n}\Big(j_\ell U_\ell^nE^{U_\ell}(\Theta)\Psi_\ell
+j_{\rm r}U_{\rm r}^nE^{U_{\rm r}}(\Theta)\Psi_{\rm r}\Big)\\
&=W_\pm(U,U_\ell,j_\ell,\Theta)\Psi_\ell
+W_\pm(U,U_{\rm r},j_{\rm r},\Theta)\Psi_{\rm r}
\end{align*}
with
$$
W_\pm(U,U_\star,j_\star,\Theta)
:=\slim_{n\to\pm\infty}U^{-n}j_\star U_\star^nE^{U_\star}(\Theta).
$$
That is, the wave operators $W_\pm(U,U_0,J,\Theta)$ act as the sum of the operators
$W_\pm(U,U_\star,j_\star,\Theta):$
$$
W_\pm(U,U_0,J,\Theta)\Psi
=W_\pm(U,U_\ell,j_\ell,\Theta)\Psi_\ell
+W_\pm(U,U_{\rm r},j_{\rm r},\Theta)\Psi_{\rm r}.
$$
This simple observation will be used in the following section.
\end{Remark}

\section{Weak limit theorem}\label{sec_weak_lim}
\setcounter{equation}{0}

We prove in this section a weak limit theorem for quantum walks with an anisotropic
coin, and we give an interpretation of this weak limit theorem by comparing it with
its classical analogue, the central limit theorem for classical random walks. Our
first proposition gives a description of the asymptotic velocity operator associated
to the the full evolution operator $U$. To state it, we introduce the Heisenberg
evolution $Q(n):=U^{-n}QU^n$, $n\in\Z$, of the position operator $Q$, and the velocity
operator $V_0:=V_\ell\oplus V_{\rm r}$ for the free evolution operator
$U_0=U_\ell\oplus U_{\rm r}$.

\begin{Proposition}\label{prop_Heisenberg}
Let
$
\Theta:=\{\sigma(U_\ell)\cup\sigma(U_{\rm r})\}
\setminus\{\tau(U)\cup\sigma_{\rm p}(U)\}
$,
$V_{\rm ac}:=W_+(U,U_0,J,\Theta)\;\!V_0\;\!W_+(U,U_0,J,\Theta)^*$, and $\xi\in\R$.
Then, one has
$$
\slim_{n\to\infty}\e^{i\xi Q(n)/n}=E^U_{\rm p}+\e^{i\xi V_{\rm ac}}E^U_{\rm ac}.
$$
\end{Proposition}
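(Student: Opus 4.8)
The plan is to split $\H$ into the pure point and absolutely continuous parts of $U$ (there is no singular continuous part by \cite[Thm.~2.4]{RST_1}) and show the strong limit separately on $E^U_{\rm p}\H$ and $E^U_{\rm ac}\H$. On the pure point subspace, if $U\Phi=\lambda\Phi$ then $U^n\Phi=\lambda^n\Phi$, so $Q(n)\Phi=U^{-n}QU^n\Phi=\lambda^{-n}U^{-n}Q\Phi$; more usefully, for a dense set of $\Phi\in E^U_{\rm p}\H$ lying in finitely many eigenspaces, $\|Q U^n\Phi\|_\H$ stays bounded by a constant times $n^{0}$ — in fact one should argue that $\e^{i\xi Q(n)/n}\Phi\to\Phi$ because $Q(n)/n$ acting on such $\Phi$ converges strongly to $0$. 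The cleanest route: for $\Phi$ an eigenvector, $\tfrac1n Q(n)\Phi=\tfrac1n U^{-n}QU^n\Phi=\tfrac1n\lambda^{-n}U^{-n}(Q\Phi)$, which has norm $\tfrac1n\|Q\Phi\|_\H\to0$ provided $\Phi\in\dom(Q)$; eigenvectors of $U$ decay fast enough to lie in $\dom(Q)$ by the limiting absorption principle / exponential decay results in \cite{RST_1}, or one approximates. Hence $\e^{i\xi Q(n)/n}\Phi\to\Phi=E^U_{\rm p}\Phi$ on this dense set, and by uniform boundedness of $\e^{i\xi Q(n)/n}$ the limit extends to all of $E^U_{\rm p}\H$, giving the first term.

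On $E^U_{\rm ac}\H$, the strategy is to transport the dynamics to the free side via the wave operator. By Theorem \ref{thm_complete_walks}, $W:=W_+(U,U_0,J,\Theta)$ is a partial isometry with final set $E^U_{\rm ac}\H$, so $WW^*=E^U_{\rm ac}$ and $W^*W$ is the projection onto the initial set $\H_0^\pm$. For $\Psi\in E^U_{\rm ac}\H$ write $\Psi=W\Phi$ with $\Phi\in\H_0$ in the initial set. The key computation is to compare $U^{-n}QU^n W\Phi$ with $W U_0^{-m}(\text{something}) U_0^m\Phi$. Concretely, using Remark \ref{rem_sum_wave} and the intertwining relation $U^k W=W U_0^k$, one has $U^n W\Phi=W U_0^n\Phi + o(1)$-type control is not quite enough; instead one exploits that $J^*U^n W U_0^{-n}\to 1$ strongly on the relevant subspace (a consequence of $J$-completeness and $JJ^*=1_\H$, $J^*J=j_\ell\oplus j_{\rm r}$ from \cite[Lemma~4.7]{RST_1}). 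The main step is then the Heisenberg-picture identity: I would show
$$
\slim_{n\to\infty}\big(\e^{i\xi Q(n)/n}W-W\,\e^{i\xi Q_0(n)/n}\big)E^{U_0}(\Theta)=0,
$$
where $Q_0(n):=U_0^{-n}Q_0U_0^n$ on $\H_0=\H\oplus\H$ with $Q_0:=Q\oplus Q$. This reduces the problem to the free case. One approach to this intertwining: approximate $\e^{i\xi Q(n)/n}$ by $\e^{i\xi Q/n}$-type expressions is wrong because $Q$ does not commute with $U$; rather, use that $Q(n)/n=\tfrac1n U^{-n}QU^n$ and $Q(n)/n-Q(n-1)/n\cdot\tfrac{n-1}{n}$ telescopes with increments controlled by $U^{-n}[Q,U]U^n$, but $[Q,U]$ is bounded (shift by $\pm1$ composed with a coin), which gives $\|Q(n)-Q(n-1)\|\le\mathrm{Const.}$ — hence a Cauchy-type / Cesàro argument shows $Q(n)/n$ converges strongly on $E^U_{\rm ac}\H$ to $V_{\rm ac}$, and functional calculus upgrades this to the exponential.

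Actually the cleanest organisation mirrors \cite{Suz16}: first establish $\slim_{n\to\infty}\e^{i\xi Q(n)/n}E^U_{\rm ac}=\e^{i\xi V_{\rm ac}}E^U_{\rm ac}$ directly by showing, for $\Psi=W\Phi$ with $\Phi$ in the initial set and spectrally supported away from thresholds, that
$$
\e^{i\xi Q(n)/n}W\Phi=U^{-n}\e^{i\xi Q/n}U^n W\Phi
=U^{-n}\e^{i\xi Q/n}\big(J U_0^n\Phi+r_n\big),\qquad \|r_n\|_\H\to0,
$$
using the definition of $W$ as $\slim U^{-n}JU_0^nE^{U_0}(\Theta)$ and the uniform boundedness of $\e^{i\xi Q/n}$; then move $\e^{i\xi Q/n}$ through $J$ up to an $o(1)$ error (since $J$ is multiplication by the bounded cutoffs $j_\ell,j_{\rm r}$ which commute with $Q$ — in fact $\e^{i\xi Q/n}J=J(\e^{i\xi Q/n}\oplus\e^{i\xi Q/n})$ exactly); then apply \cite[Thm.~4.1]{Suz16} componentwise, which states precisely that $\slim_{n\to\infty}U_\star^{-n}\e^{i\xi Q/n}U_\star^n=\e^{i\xi V_\star}$ on $E^{U_\star}(\Theta)\H$; and finally reassemble via $W^*=W_+(U_0,U,J^*,\Theta)$ and $W V_0 W^*=V_{\rm ac}$, together with $WW^*=E^U_{\rm ac}$, to identify the limit as $\e^{i\xi V_{\rm ac}}E^U_{\rm ac}$. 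The main obstacle I anticipate is the bookkeeping of the error terms when commuting $\e^{i\xi Q/n}$ past $U^n$ versus $U_0^n$ — one needs that $\|(U^{-n}\e^{i\xi Q/n}U^n - U_0^{-n}\e^{i\xi Q_0/n}U_0^n\,\text{lifted})J U_0^n E^{U_0}(\Theta)\Phi\|\to0$, which should follow from the existence of the wave operator (so $U^nW\Phi-JU_0^n\Phi\to0$) combined with a second, independent use of \cite[Thm.~4.1]{Suz16} on the free side; care is needed because $\e^{i\xi Q/n}$ is only a bounded (not contractive in norm difference) family, so one cannot naively pull limits through, and the argument must be arranged so that each factor is either a fixed unitary, a strongly convergent family applied to a fixed vector, or a uniformly bounded family applied to a norm-convergent sequence.
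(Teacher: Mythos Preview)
Your overall strategy matches the paper's: split into $E^U_{\rm p}\H$ and $E^U_{\rm ac}\H$, and on the latter transport the dynamics to the free side via the wave operator and invoke the free result from \cite{Suz16}. Two points are worth tightening.

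\smallskip
\textbf{Pure point part.} Your argument routes through $\tfrac1n Q(n)\Phi$, which forces $\Phi\in\dom(Q)$ and then an appeal to eigenvector decay. This is unnecessary: for an eigenvector $U\Phi=\lambda\Phi$ one has $\e^{i\xi Q(n)/n}\Phi=\lambda^n U^{-n}\e^{i\xi Q/n}\Phi=\Phi+\lambda^n U^{-n}(\e^{i\xi Q/n}\Phi-\Phi)$, and the last term has norm $\|\e^{i\xi Q/n}\Phi-\Phi\|\to0$ by dominated convergence, with no regularity on $\Phi$ required. The paper bypasses even this and simply cites \cite[Thm.~4.2]{Suz16}.

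\smallskip
\textbf{Absolutely continuous part.} You correctly note the exact commutation $\e^{i\xi Q/n}J=J(\e^{i\xi Q/n}\oplus\e^{i\xi Q/n})$, but stop short of the identity that dissolves your ``bookkeeping'' worry. Since $JJ^*=1_\H$ and $j_\ell,j_{\rm r}$ commute with $Q$, one has exactly
\[
\e^{i\xi Q(n)/n}=W_n\,\e^{i\xi Q_0(n)/n}\,W_n^*,\qquad W_n:=U^{-n}JU_0^n,\quad Q_0(n):=U_0^{-n}(Q\oplus Q)U_0^n,
\]
with no remainder. The paper then writes the difference with the target as a telescoping sum $I_1(n)+I_2(n)+I_3(n)$ in which each summand is a uniformly bounded operator applied to a single strongly convergent sequence (respectively $W_n^*E^U_{\rm ac}\to W^*E^U_{\rm ac}$, $\e^{i\xi Q_0(n)/n}\to\e^{i\xi V_0}$ via \cite[Thm.~4.1]{Suz16}, and $W_n E^{U_0}(\Theta)\to W$). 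This is exactly the ``arranged'' structure you ask for at the end, and it comes for free once the identity above is in hand. One further point you gloss over: to identify the limit $W\e^{i\xi V_0}W^*E^U_{\rm ac}$ with $\e^{i\xi V_{\rm ac}}E^U_{\rm ac}$ you need $(WV_0W^*)^k E^U_{\rm ac}=WV_0^kW^*E^U_{\rm ac}$, which the paper obtains from the inclusion $V_0\,\H_0^+\subset\H_0^+$ (so that $W^*W$ acts as the identity between successive factors of $V_0$); this should be stated explicitly.
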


\begin{proof}
The finiteness of $\tau(U)$ and \cite[Thm.~2.4]{RST_1} imply that $U$ has no singular
continuous spectrum, and \cite[Thm.~4.2]{Suz16} implies the equality
$\slim_{n\to\infty}\e^{i\xi Q(n)/n}E^U_{\rm p}=E^U_{\rm p}$. Therefore,
$$
\slim_{n\to\infty}\e^{i\xi Q(n)/n}
=\slim_{n\to\infty}\e^{i\xi Q(n)/n}\big(E^U_{\rm p}+E^U_{\rm ac}\big)
=E^U_{\rm p}+\slim_{n\to\infty}\e^{i\xi Q(n)/n}E^U_{\rm ac},
$$
and thus it is sufficient to show that
\begin{equation}\label{eq_to_prove}
\slim_{n\to\infty}\e^{i\xi Q(n)/n}E^U_{\rm ac}=\e^{i\xi V_{\rm ac}}E^U_{\rm ac}.
\end{equation}

Now, we know from Theorem \ref{thm_complete_walks} and Proposition \ref{prop_kernels}
that $\overline{\Ran\big(W_+(U,U_0,J,\Theta)\big)}=E^U_{\rm ac}\H$ and that
$W_+(U,U_0,J,\Theta)$ is a partial isometry with initial set $\H_0^+$. Furthermore, we
have the inclusion $V_0\;\!\H_0^+\subset\H_0^+$ due to the definition of $\H_0^+$ (see
\eqref{def_subspaces}). So, we obtain for each $k\in\N$ the identity
$$
\big(W_+(U,U_0,J,\Theta)\;\!V_0\;\!W_+(U,U_0,J,\Theta)^*\big)^kE^U_{\rm ac}
=W_+(U,U_0,J,\Theta)\;\!V_0^k\;\!W_+(U,U_0,J,\Theta)^*E^U_{\rm ac},
$$
and thus the r.h.s. of \eqref{eq_to_prove} satisfies
\begin{align}\label{eq_exp_Vac}
\e^{i\xi V_{\rm ac}}E^U_{\rm ac}
&=\sum_{k\ge0}\frac1{k!}\;\!W_+(U,U_0,J,\Theta)\big(i\xi V_0\big)^k
W_+(U,U_0,J,\Theta)^*E^U_{\rm ac}\nonumber\\
&=W_+(U,U_0,J,\Theta)\e^{i\xi V_0}W_+(U,U_0,J,\Theta)^*E^U_{\rm ac}.
\end{align}
On another hand, if we set $W_n:=U^{-n}JU_0^n$ and $Q_0(n):=U_0^{-n}(Q\oplus Q)U_0^n$
for $n\in\Z$, then a direct calculation using the definition of the operator $J$
implies that
$$
\e^{i\xi Q(n)/n}=W_n\e^{i\xi Q_0(n)/n}W_n^*.
$$
Therefore, we obtain that
\begin{align*}
\e^{i\xi Q(n)/n}E^U_{\rm ac}-\e^{i\xi V_{\rm ac}}E^U_{\rm ac}
&=W_n\e^{i\xi Q_0(n)/n}W_n^*E^U_{\rm ac}
-W_+(U,U_0,J,\Theta)\e^{i\xi V_0}W_+(U,U_0,J,\Theta)^*E^U_{\rm ac}\\
&=I_1(n)+I_2(n)+I_3(n),
\end{align*}
with
\begin{align*}
I_1(n)&:=W_n\e^{i\xi Q_0(n)/n}\big(W_n^*-W_+(U,U_0,J,\Theta)^*\big)E^U_{\rm ac},\\
I_2(n)&:=W_n\big(\e^{i\xi Q_0(n)/n}
-\e^{i\xi V_0}\big) W_+(U,U_0,J,\Theta)^*E^U_{\rm ac},\\
I_3(n)&:=\big(W_n-W_+(U,U_0,J,\Theta)\big)
\e^{i\xi V_0}W_+(U,U_0,J,\Theta)^*E^U_{\rm ac}.
\end{align*}
But, Theorem \ref{thm_complete_walks} and the identity $E^U(\Theta)=E^U_{\rm ac}$
imply that $\slim_{n\to\infty}I_1(n)=0$. Theorem \ref{thm_complete_walks}, the
identity $E^{U_0}(\Theta)=E^{U_0}_{\rm ac}$, the inclusion
$\Ran\big(W_+(U,U_0,J,\Theta)^*\big)\subset E^{U_0}_{\rm ac}\H$, and the fact that
$\e^{i\xi V_0}E^{U_0}_{\rm ac}=E^{U_0}_{\rm ac}\e^{i\xi V_0}$ (which follows from
\eqref{eq_projectors}) imply that $\slim_{n\to\infty}I_3(n)=0$. Finally,
\cite[Thm.~4,1]{Suz16} and \cite[Thm.~VIII.20(b)]{RS1} imply that
$\slim_{n\to\infty}I_2(n)=0$. Therefore, we obtain that
$$
\slim_{n\to\infty}\big(\e^{i\xi Q(n)/n}E^U_{\rm ac}
-\e^{i\xi V_{\rm ac}}E^U_{\rm ac}\big)=0,
$$
which proves \eqref{eq_to_prove}.
\end{proof}

\begin{Remark}
If we define the asymptotic velocity operator $V\in\B(\H)$ for the full evolution
operator $U$ as
$$
V\Psi:=
\begin{cases}
0 & \mathrm{if}~\Psi\in E_{\rm p}^U\H\\
V_{\rm ac}\Psi & \mathrm{if}~\Psi\in E_{\rm ac}^U\H,
\end{cases}
$$
then the result of Proposition \ref{prop_Heisenberg} can be rephrased in the more
compact form
$$
\slim_{n\to\infty}\e^{i\xi Q(n)/n}=\e^{i\xi V},\quad\xi\in\R.
$$
\end{Remark}

We are now in a position to state the weak limit theorem. For that purpose, we denote
by $\X_n$ the random variable for the position of a quantum walker with evolution $U$
and initial normalised state $\Psi_{\rm in}\in\H$ at time $n\in\Z$. The probability
distribution of $\X_n$ is given by
$$
\P(\X_n=x):=\big\|\big(U^n\Psi_{\rm in}\big)(x)\big\|_{\C^2}^2,\quad x\in\Z,
$$
and the characteristic function of the average velocity $\X_n/n$ of the quantum walker
is given by
$$
\E(\e^{i\xi\X_n/n})
:=\big\langle U^n\Psi_{\rm in},\e^{i\xi Q/n}U^n\Psi_{\rm in}\big\rangle_\H,
=\big\langle\Psi_{\rm in},\e^{i\xi Q(n)/n}\Psi_{\rm in}\big\rangle_\H,
\quad\xi\in\R,
$$
We also use the notation $\delta_0$ for the Dirac measure on $\R$ for the point $0:$

\begin{Theorem}[Weak limit theorem]\label{thm_wlt}
Let $\Psi_{\rm in}\in\H$ with $\|\Psi_{\rm in}\|_\H=1$, let
$
\Theta:=\{\sigma(U_\ell)\cup\sigma(U_{\rm r})\}
\setminus\{\tau(U)\cup\sigma_{\rm p}(U)\}
$,
and let $\V$ be the random variable with probability distribution
\begin{equation}\label{eq_muV}
\mu_\V
:=\big\|E^U_{\rm p}\Psi_{\rm in}\big\|_\H^2\,\delta_0
+\big\|E^{V_\ell}(\;\!\cdot\;\!)
W_+(U,U_\ell,j_\ell,\Theta)^*\Psi_{\rm in}\big\|_\H^2
+\big\|E^{V_{\rm r}}(\;\!\cdot\;\!)W_+(U,U_{\rm r},j_{\rm r},\Theta)^*
\Psi_{\rm in}\big\|_\H^2
\end{equation}
with the operators $W_\pm(U,U_\star,j_\star,\Theta)$ defined in Remark
\ref{rem_sum_wave}. Then, the average velocity $\X_n/n$ converges in law to $\V$ as
$n\to\infty$, namely,
\begin{equation}\label{eq_limit}
\lim_{n\to\infty}\E(\e^{i\xi\X_n/n})=\E(\e^{i\xi\V}),	\quad\xi\in\R.
\end{equation}
\end{Theorem}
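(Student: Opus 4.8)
The strategy is to compute $\lim_{n\to\infty}\E(\e^{i\xi\X_n/n})$ by expressing it through the Heisenberg evolution $Q(n)$ and applying Proposition \ref{prop_Heisenberg}. Since $\E(\e^{i\xi\X_n/n})=\langle\Psi_{\rm in},\e^{i\xi Q(n)/n}\Psi_{\rm in}\rangle_\H$, the strong limit obtained in Proposition \ref{prop_Heisenberg}, namely $\slim_{n\to\infty}\e^{i\xi Q(n)/n}=E^U_{\rm p}+\e^{i\xi V_{\rm ac}}E^U_{\rm ac}$, gives immediately
\begin{equation*}
\lim_{n\to\infty}\E(\e^{i\xi\X_n/n})
=\big\langle\Psi_{\rm in},E^U_{\rm p}\Psi_{\rm in}\big\rangle_\H
+\big\langle\Psi_{\rm in},\e^{i\xi V_{\rm ac}}E^U_{\rm ac}\Psi_{\rm in}\big\rangle_\H.
\end{equation*}
The first term is $\|E^U_{\rm p}\Psi_{\rm in}\|_\H^2$, which matches the $\delta_0$ contribution in \eqref{eq_muV} upon integrating $\e^{i\xi\upsilon}$ against $\delta_0$.

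For the second term, I would use the factorisation $V_{\rm ac}=W_+(U,U_0,J,\Theta)\;\!V_0\;\!W_+(U,U_0,J,\Theta)^*$ and the identity $\e^{i\xi V_{\rm ac}}E^U_{\rm ac}=W_+(U,U_0,J,\Theta)\e^{i\xi V_0}W_+(U,U_0,J,\Theta)^*E^U_{\rm ac}$ already established in \eqref{eq_exp_Vac}. Since $\overline{\Ran(W_+(U,U_0,J,\Theta))}=E^U_{\rm ac}\H$, we have $E^U_{\rm ac}W_+(U,U_0,J,\Theta)=W_+(U,U_0,J,\Theta)$, and thus
\begin{equation*}
\big\langle\Psi_{\rm in},\e^{i\xi V_{\rm ac}}E^U_{\rm ac}\Psi_{\rm in}\big\rangle_\H
=\big\langle W_+(U,U_0,J,\Theta)^*\Psi_{\rm in},\e^{i\xi V_0}W_+(U,U_0,J,\Theta)^*\Psi_{\rm in}\big\rangle_{\H_0}.
\end{equation*}
Writing $\Phi:=W_+(U,U_0,J,\Theta)^*\Psi_{\rm in}\in\H_0=\H\oplus\H$ and decomposing $V_0=V_\ell\oplus V_{\rm r}$, the right-hand side becomes $\langle\Phi,(\e^{i\xi V_\ell}\oplus\e^{i\xi V_{\rm r}})\Phi\rangle_{\H_0}$. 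Now I would invoke Remark \ref{rem_sum_wave}: since $W_+(U,U_0,J,\Theta)\Psi=W_+(U,U_\ell,j_\ell,\Theta)\Psi_\ell+W_+(U,U_{\rm r},j_{\rm r},\Theta)\Psi_{\rm r}$, taking adjoints gives the two components of $\Phi$ as $\Phi_\ell=W_+(U,U_\ell,j_\ell,\Theta)^*\Psi_{\rm in}$ and $\Phi_{\rm r}=W_+(U,U_{\rm r},j_{\rm r},\Theta)^*\Psi_{\rm in}$. Hence the second term splits as
\begin{equation*}
\int_\R\e^{i\xi\upsilon}\,\big\|E^{V_\ell}(\d\upsilon)\Phi_\ell\big\|_\H^2
+\int_\R\e^{i\xi\upsilon}\,\big\|E^{V_{\rm r}}(\d\upsilon)\Phi_{\rm r}\big\|_\H^2,
\end{equation*}
using the spectral theorem for the self-adjoint (bounded) operators $V_\ell,V_{\rm r}$. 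This is precisely $\int_\R\e^{i\xi\upsilon}\,\mu_\V(\d\upsilon)$ minus the $\delta_0$ piece, so combining with the first term yields $\lim_{n\to\infty}\E(\e^{i\xi\X_n/n})=\int_\R\e^{i\xi\upsilon}\,\mu_\V(\d\upsilon)=\E(\e^{i\xi\V})$. Finally, since pointwise convergence of characteristic functions to the characteristic function of a (finite, total-mass-one) measure implies convergence in law by L\'evy's continuity theorem, \eqref{eq_limit} follows; one should also check that $\mu_\V$ is genuinely a probability measure, i.e.\ has total mass $1$, which amounts to evaluating the above at $\xi=0$ and using $\|\Psi_{\rm in}\|_\H=1$ together with $\|E^U_{\rm p}\Psi_{\rm in}\|_\H^2+\|E^U_{\rm ac}\Psi_{\rm in}\|_\H^2=1$ and the fact that $W_+(U,U_0,J,\Theta)^*$ is isometric from $E^U_{\rm ac}\H$ onto $\H_0^+$.

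**Main obstacle.** The genuinely substantive input — the existence of the strong limit $\slim_{n\to\infty}\e^{i\xi Q(n)/n}$ and its identification — has already been dispatched in Proposition \ref{prop_Heisenberg}, so what remains is essentially bookkeeping. The one point requiring a little care is the passage from the single wave operator $W_+(U,U_0,J,\Theta)$ on $\H_0$ to the two operators $W_+(U,U_\star,j_\star,\Theta)$ on $\H$: one must be sure the adjoint of a sum of the form $\Psi\mapsto W_+(U,U_\ell,j_\ell,\Theta)\Psi_\ell+W_+(U,U_{\rm r},j_{\rm r},\Theta)\Psi_{\rm r}$ is indeed the pair of adjoints, which is immediate from the block structure but worth stating. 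A second minor subtlety is that $V_0$ need not commute with $E^{U_0}(\Theta)$ in the degenerate cases $a_\ell=0$ or $a_{\rm r}=0$; however \eqref{eq_projectors} shows $E^{U_0}(\Theta)=E^{U_0}_{\rm ac}$ is a coordinate projection and $V_\star=0$ on the discarded summand (by \cite[Lemma~4.2(a)]{RST_1}), so the formula \eqref{eq_muV} remains correct with the understanding that the corresponding term vanishes. None of this presents a real difficulty.
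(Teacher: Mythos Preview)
Your proof is correct and follows essentially the same route as the paper: apply Proposition~\ref{prop_Heisenberg}, use \eqref{eq_exp_Vac} together with $W_+(U,U_0,J,\Theta)^*E^U_{\rm ac}=W_+(U,U_0,J,\Theta)^*$ to pass to $\H_0$, and then split along the direct sum $V_0=V_\ell\oplus V_{\rm r}$ using the adjoint of the decomposition in Remark~\ref{rem_sum_wave}. Your additional remarks on L\'evy's continuity theorem, the total-mass check, and the degenerate cases $a_\star=0$ are sensible supplements but not needed for the argument as stated.
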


Since the average velocity $\X_n/n$ of the quantum walker converges in law to $\V$ as
$n\to\infty$, $\V$ can be interpreted as the asymptotic velocity of the quantum
walker, with $\mu_\V$ its probability distribution. Therefore, Theorem \ref{thm_wlt}
implies that the probability that the quantum walker has velocity $0$
is
$$
\P(\V=0)=\mu_\V(\{0\})\ge\big\|E^U_{\rm p}\Psi_{\rm in}\big\|_\H^2.
$$
Accordingly, we say that localisation occurs if $\P(\V=0)>0$, and \eqref{eq_muV} tells
us that this happens if the initial state $\Psi_{\rm in}$ has an overlap with the pure
point subspace of $U$. Later in this section, we will see that
$$
\P(\V=0) = \big\|E^U_{\rm p}\Psi_{\rm in}\big\|_\H^2
$$
and that localisation occurs in fact if and only if the initial state $\Psi_{\rm in}$
has an overlap with the pure point subspace of $U$.

\begin{proof}
Using Proposition \ref{prop_Heisenberg}, Equation \eqref{eq_exp_Vac},
and the identity $W_+(U,U_0,J,\Theta)^*E_{\rm ac}^U=W_+(U,U_0,J,\Theta)^*$, one obtains
\begin{align*}
\lim_{n\to\infty}\E(\e^{i\xi\X_n/n})
&=\lim_{n\to\infty}\big\langle\Psi_{\rm in},
\e^{i\xi Q(n)/n}\Psi_{\rm in}\big\rangle_\H\\
&=\big\langle\Psi_{\rm in},\big(E^U_{\rm p}
+\e^{i\xi V_{\rm ac}}E^U_{\rm ac}\big)\Psi_{\rm in}\big\rangle_\H\\
&=\big\|E_{\rm p}^U\Psi_{\rm in}\big\|_\H^2+\big\langle\Psi_{\rm in},
W_+(U,U_0,J,\Theta)\e^{i\xi V_0}W_+(U,U_0,J,\Theta)^*E^U_{\rm ac}
\Psi_{\rm in}\big\rangle_\H\\
&=\big\|E_{\rm p}^U\Psi_{\rm in}\big\|_\H^2+\big\langle W_+(U,U_0,J,\Theta)^*
E_{\rm ac}^U\Psi_{\rm in},\e^{i\xi V_0}W_+(U,U_0,J,\Theta)^*E^U_{\rm ac}
\Psi_{\rm in}\big\rangle_{\H_0}\\
&=\int_\R\e^{i\xi\lambda}\left(\big\|E_{\rm p}^U\Psi_{\rm in}\big\|_\H^2\,
\delta_0(\d\lambda)+\big\|E^{V_0}(\d\lambda)W_+(U,U_0,J,\Theta)^*E^U_{\rm ac}
\Psi_{\rm in}\big\|_{\H_0}^2\right).
\end{align*}
Therefore, to prove the claim, it only remains to observe that
\begin{align*}
& \big\|E^{V_0}(\;\!\cdot\;\!)W_+(U,U_0,J,\Theta)^*
E^U_{\rm ac}\Psi_{\rm in}\big\|_{\H_0}^2 \\
&=\big\|\big(E^{V_\ell}(\;\!\cdot\;\!)\oplus E^{V_{\rm r}}(\;\!\cdot\;\!)\big)
W_+(U,U_0,J,\Theta)^*\Psi_{\rm in}\big\|_{\H_0}^2\\
&=\big\|E^{V_\ell}(\;\!\cdot\;\!)W_+(U,U_\ell,j_\ell,\Theta)^*\Psi_{\rm in}\big\|_\H^2
+\big\|E^{V_{\rm r}}(\;\!\cdot\;\!)W_+(U,U_{\rm r},j_{\rm r},\Theta)^*
\Psi_{\rm in}\big\|_\H^2.
\end{align*}
\end{proof}

In order to give a better description of the probability distribution $\mu_\V$, we
recall from Section \ref{sec_walks} that we have for $k\in[0,2\pi)$
\begin{align*}
\widehat{U_\star}(k)=\sum_{j=1}^2\lambda_{\star,j}(k)\Pi_{\star,j}(k),
\quad\widehat{V_\star}(k)=\sum_{j=1}^2v_{\star,j}(k)\Pi_{\star,j}(k)
\quad\hbox{and}\quad
v_{\star,j}(k)=i\;\!\lambda_{\star,j}'(k)\big(\lambda_{\star,j}(k)\big)^{-1}.
\end{align*}
We also recall from \cite[Lemma~4.2]{RST_1} the following properties of the functions
$v_{\star,j}$ given in terms of the parameters $a_\star,\alpha_\star,\delta_\star$ of
\eqref{eq_param}:
\begin{enumerate}
\item[(i)] If $a_\star=0$, then $v_{\star,j}=0$ for $j\in\{1,2\}$.
\item[(ii)] If $a_\star\in(0,1)$, then
$v_{\star,j}(k)=\frac{(-1)^{j}\varsigma_\star(k)}{\eta_\star(k)}$ for $j\in\{1,2\}$,
$k\in[0,2\pi)$ and
$$
\tau_\star(k):=a_\star\cos(k+\alpha_\star-\delta_\star/2),\quad
\eta_\star(k):=\sqrt{1-\tau_\star(k)^2},\quad
\varsigma_\star(k):=a_\star\sin(k+\alpha_\star-\delta_\star/2).
$$
\item[(iii)] If $a_\star=1$, then $v_{\star,j}(k)=(-1)^{j}$ for $j\in\{1,2\}$ and
$k\in[0,2\pi)$.
\end{enumerate}

With this done, we can start our study of the probability distribution $\mu_\V$ by
collecting some information on the operators $W_+(U,U_\star,j_\star,\Theta)$ appearing
in \eqref{eq_muV}:

\begin{Lemma}\label{Lemma_W}
Let
$
\Theta:=\{\sigma(U_\ell)\cup\sigma(U_{\rm r})\}
\setminus\{\tau(U)\cup\sigma_{\rm p}(U)\}
$.
\begin{enumerate}
\item[(a)] If $a_\star=0$, then $W_+(U,U_\star,j_\star,\Theta)=0$.
\item[(b)] If  $a_\ell\in(0,1)$, then $W_+(U,U_\ell,j_\ell,\Theta)$ is a partial
isometry with initial subspace $\chi_-(V_\ell)\H$, and if  $a_{\rm r}\in(0,1)$, then
$W_+(U,U_{\rm r},j_{\rm r},\Theta)$ is a partial isometry with initial subspace
$\chi_+(V_{\rm r})\H$.
\item[(c)] If $a_\ell=1$, then $W_+(U,U_\ell,j_\ell,\Theta)$ is a partial isometry
with initial subspace
$\ell^2\left(\Z,\left(\begin{smallmatrix}\C \\ 0\end{smallmatrix}\right)\right)$, and
if $a_{\rm r}=1$, then $W_+(U,U_{\rm r},j_{\rm r},\Theta)$ is a partial isometry with
initial subspace
$\ell^2\left(\Z,\left(\begin{smallmatrix} 0 \\ \C\end{smallmatrix}\right)\right)$.
\end{enumerate}
\end{Lemma}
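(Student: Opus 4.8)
The plan is to obtain all three assertions by assembling earlier results: Remark~\ref{rem_sum_wave}, which writes $W_+(U,U_0,J,\Theta)$ as a sum of the $W_+(U,U_\star,j_\star,\Theta)$; Proposition~\ref{prop_kernels}, which identifies the initial set of $W_+(U,U_0,J,\Theta)$; the computation~\eqref{eq_projectors} of $E^{U_0}(\Theta)$; and the properties (i)--(iii) of the functions $v_{\star,j}$ recalled above. For part~(a), I would argue that when $a_\star=0$ the operator $U_\star$ has pure point spectrum with $\sigma(U_\star)\subset\tau(U)$, so $\sigma(U_\star)\cap\Theta=\emptyset$ and hence $E^{U_\star}(\Theta)=0$; since $W_+(U,U_\star,j_\star,\Theta)=\slim_{n\to\infty}U^{-n}j_\star U_\star^nE^{U_\star}(\Theta)$, this forces $W_+(U,U_\star,j_\star,\Theta)=0$.

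For parts~(b) and~(c), the starting point is the identity of Remark~\ref{rem_sum_wave}, which gives $W_+(U,U_\ell,j_\ell,\Theta)\Psi_\ell=W_+(U,U_0,J,\Theta)(\Psi_\ell,0)$ and $W_+(U,U_{\rm r},j_{\rm r},\Theta)\Psi_{\rm r}=W_+(U,U_0,J,\Theta)(0,\Psi_{\rm r})$ for all $\Psi_\ell,\Psi_{\rm r}\in\H$. By Proposition~\ref{prop_kernels}, $W_+(U,U_0,J,\Theta)$ is a partial isometry with initial set $\H_0^+=\chi_-(V_\ell)\H\oplus\chi_+(V_{\rm r})\H$, hence isometric on $\H_0^+$ and vanishing on $\H_0\ominus\H_0^+$. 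Under the hypothesis $a_\ell\ne0$, the value $0$ is not an atom of the spectral measure of $V_\ell$ (by \cite[Lemma~4.6]{RST_1}; when $a_\ell=1$ one even has $0\notin\sigma(V_\ell)$), so $\chi_-(V_\ell)\H\oplus\chi_+(V_\ell)\H=\H$. The canonical embedding $\iota_\ell:\Psi_\ell\mapsto(\Psi_\ell,0)$ maps $\chi_-(V_\ell)\H$ into $\H_0^+$ and maps $\chi_+(V_\ell)\H$ onto $\chi_+(V_\ell)\H\oplus\{0\}$, which is orthogonal to $\H_0^+$ — irrespective of whether the second summand of $\H_0^+$ degenerates, since orthogonality to $\chi_-(V_\ell)\H$ in the first component alone suffices — and hence contained in $\ker W_+(U,U_0,J,\Theta)$. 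Composing, I conclude that $W_+(U,U_\ell,j_\ell,\Theta)=W_+(U,U_0,J,\Theta)\circ\iota_\ell$ is isometric on $\chi_-(V_\ell)\H$ and vanishes on $(\chi_-(V_\ell)\H)^\perp$, i.e. is a partial isometry with initial subspace $\chi_-(V_\ell)\H$. The same argument, working in the second component with $\chi_+(V_{\rm r})\H$ in place of $\chi_-(V_\ell)\H$ and using $a_{\rm r}\ne0$, handles $W_+(U,U_{\rm r},j_{\rm r},\Theta)$. This proves~(b), and it proves the partial isometry statements in~(c).

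It then remains, for part~(c), to identify the initial subspaces explicitly when $a_\star=1$. When $a_\ell=1$ one has $b_\ell=0$, so by~\eqref{eq_param} the matrix $C_\ell$ is diagonal and hence $\widehat{U_\ell}(k)=\mathrm{diag}\big(\e^{i(k+\alpha_\ell)},\e^{i(\delta_\ell-\alpha_\ell-k)}\big)$; for $k$ outside a finite set the two diagonal entries differ, so the eigenprojections $\Pi_{\ell,j}(k)$ are the constant projections onto $\left(\begin{smallmatrix}\C\\0\end{smallmatrix}\right)$ and $\left(\begin{smallmatrix}0\\\C\end{smallmatrix}\right)$. A one-line computation of $v_{\ell,j}(k)=i\,\lambda_{\ell,j}'(k)\big(\lambda_{\ell,j}(k)\big)^{-1}$, consistent with property~(iii) which gives $v_{\ell,j}(k)=(-1)^j$, shows that the eigenvalue $\e^{i(k+\alpha_\ell)}$ on the first coordinate line carries velocity $-1$; hence $\widehat{V_\ell}(k)=\mathrm{diag}(-1,1)$ and
$$
\chi_-(V_\ell)\H=\F^*\,\mathrm{diag}(1,0)\,\F\,\H=\ell^2\left(\Z,\left(\begin{smallmatrix}\C\\0\end{smallmatrix}\right)\right),
$$
and the identical computation for $a_{\rm r}=1$ yields $\chi_+(V_{\rm r})\H=\ell^2\left(\Z,\left(\begin{smallmatrix}0\\\C\end{smallmatrix}\right)\right)$. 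The whole argument is essentially a reorganisation of the results of Section~\ref{sec_walks}; the only point needing a little care is precisely the bookkeeping just highlighted, namely checking in~(b)--(c) that $\chi_+(V_\ell)\H\oplus\{0\}$ (resp. $\{0\}\oplus\chi_-(V_{\rm r})\H$) lies in $\ker W_+(U,U_0,J,\Theta)$ also when $a_{\rm r}=0$ (resp. $a_\ell=0$), so that the stated initial subspace is not artificially enlarged.
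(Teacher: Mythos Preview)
Your proof is correct and follows essentially the same route as the paper: part~(a) via $E^{U_\star}(\Theta)=0$ (the content of~\eqref{eq_projectors}), parts~(b)--(c) by restricting the partial isometry of Proposition~\ref{prop_kernels} to each component through the embedding of Remark~\ref{rem_sum_wave}, and the identification of the initial subspace in~(c) via the eigenspace of $V_\star$ for the eigenvalue $\mp1$. The only minor difference is that for~(c) the paper cites the formula for $u_{\star,1}(k)$ from \cite[Sec.~4.1]{RST_1}, whereas you compute $\widehat{U_\ell}(k)$ directly from~\eqref{eq_param} with $b_\ell=0$; your version is self-contained and equally valid.
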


\begin{proof}
The claim (a) is a direct consequence of Remark \ref{rem_sum_wave} and Equation
\eqref{eq_projectors}. The claim (b) is a direct consequence of Remark
\ref{rem_sum_wave}, Equation \eqref{eq_projectors} and Proposition \ref{prop_kernels}.
For the claim (c), we recall from Proposition \ref{prop_kernels} and the point (iii)
above that in the case $a_\ell=1$ the initial subspace of $W_+(U,U_\ell,j_\ell,\Theta)$
coincides with the eigenspace of $V_\ell$ for the eigenvalue $-1$ (see the case $j=1$
in the point (iii) above). Then, the formula for $u_{\star,1}(k)$
in \cite[Sec.~4.1]{RST_1} directly implies that this eigenspace is equal to the space
$\ell^2\left(\Z,\left(\begin{smallmatrix}\C \\ 0\end{smallmatrix}\right)\right)$.
The proof of the claim in the case $a_{\rm r}=1$ is similar.
\end{proof}

Now, to pursue our study of the probability distribution $\mu_\V$, we recall the
definition of the Konno function
$$
f_{\rm K}:\R\times(0,1]\to[0,\infty),~~(\upsilon,r)\mapsto
\begin{cases}
\frac{\sqrt{1-r^2}}{\pi(1-\upsilon^2)\sqrt{r^2-\upsilon^2}}
& \hbox{if $|\upsilon|<r$,}\\ 0 & \hbox{otherwise.}
\end{cases}
$$
With this definition at hand, we can establish the following:

\begin{Proposition}\label{prop_meas}
Let $\Psi_{\rm in}\in\H$ with $\|\Psi_{\rm in}\|_\H=1$, and let
$
\Theta
:=\{\sigma(U_\ell)\cup\sigma(U_{\rm r})\}\setminus\{\tau(U)\cup\sigma_{\rm p}(U)\}
$.
\begin{enumerate}
\item[(a)] If $a_\star=0$, then
$
\big\|E^{V_\star}(\;\!\cdot\;\!)W_+(U,U_\star,j_\star,\Theta)^*
\Psi_{\rm in}\big\|_\H^2=0
$.
\item[(b)] If $a_\star\in(0,1)$ and $\upsilon\in\R$, then
$$
\frac{\d\big\|E^{V_\star}\big((-\infty,\upsilon]\big)W_+(U,U_\star,j_\star,\Theta)^*
\Psi_{\rm in}\big\|_\H^2}{\d\upsilon}
=\begin{cases}
\frac12\chi_{[-a_\ell,0)}(\upsilon)\;\!\omega_\ell(\upsilon)
f_{\rm K}(\upsilon,a_\ell) & \mathrm{if}~\star=\ell\smallskip\\
\frac12\chi_{(0,a_{\rm r}]}(\upsilon)\;\!\omega_{\rm r}(\upsilon)
f_{\rm K}(\upsilon,a_{\rm r}) & \mathrm{if}~\star={\rm r}
\end{cases}
$$
for some nonnegative functions
$
\omega_\ell
\in\lone\big([-a_\ell,0),\tfrac12f_{\rm K}(\;\!\cdot\;\!,a_\ell)\d\upsilon\big)
$
and
$
\omega_{\rm r}
\in\lone\big((0,a_{\rm r}],\tfrac12f_{\rm K}(\;\!\cdot\;\!,a_{\rm r})\d\upsilon\big)
$.
\item[(c)]  If $a_\star=1$, then
$$
\big\|E^{V_\star}(\;\!\cdot\;\!)W_+(U,U_\star,j_\star,\Theta)^*\Psi_{\rm in}\big\|_\H^2
=\begin{cases}
\big\|W_+(U,U_\ell,j_\ell,\Theta)^*\Psi_{\rm in}\big\|_\H^2\,\delta_{-1}
& \mathrm{if}~\star=\ell\smallskip\\
\big\|W_+(U,U_{\rm r},j_{\rm r},\Theta)^*\Psi_{\rm in}\big\|_\H^2\,\delta_1
& \mathrm{if}~\star={\rm r}
\end{cases}
$$
with $\delta_{\pm1}$ the Dirac measures on $\R$ for the points $\pm1$.
\end{enumerate}
\end{Proposition}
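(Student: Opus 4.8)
The statement falls into three cases of very unequal difficulty: parts (a) and (c) are essentially immediate from Lemma~\ref{Lemma_W} together with the Fourier description of $V_\star$, while part (b) is where the Konno function must be produced. In every case I would work with the vector $\psi_\star:=W_+(U,U_\star,j_\star,\Theta)^*\Psi_{\rm in}\in\H$, using that $W_+(U,U_\star,j_\star,\Theta)$ is a partial isometry (Lemma~\ref{Lemma_W}), so that $\psi_\star$ belongs to the range of its adjoint, which equals the initial subspace of $W_+(U,U_\star,j_\star,\Theta)$ identified in Lemma~\ref{Lemma_W}. I would also use the fibered identity: writing $V_\star=\F^*\widehat{V_\star}\F$ with $\widehat{V_\star}(k)=\sum_{j=1}^2v_{\star,j}(k)\Pi_{\star,j}(k)$ as in \eqref{def_big_V}, one has, for every Borel $B\subseteq\R$,
\[
\big\|E^{V_\star}(B)\;\!\psi_\star\big\|_\H^2
=\sum_{j=1}^2\int_{v_{\star,j}^{-1}(B)}\big\|\Pi_{\star,j}(k)\;\!(\F\psi_\star)(k)\big\|_2^2\;\!\tfrac{\d k}{2\pi}.
\]

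\emph{Parts (a) and (c).} If $a_\star=0$, then $W_+(U,U_\star,j_\star,\Theta)=0$ by Lemma~\ref{Lemma_W}(a), hence $\psi_\star=0$ and the measure vanishes. If $a_\star=1$, then point~(iii) recalled before Lemma~\ref{Lemma_W} gives $v_{\star,j}\equiv(-1)^j$, so $\widehat{V_\star}(k)=-\Pi_{\star,1}(k)+\Pi_{\star,2}(k)$ for a.e.\ $k$; thus $V_\star$ has spectrum $\{-1,1\}$ with $E^{V_\star}(\{-1\})\H=\F^*\Ran\Pi_{\star,1}$ and $E^{V_\star}(\{1\})\H=\F^*\Ran\Pi_{\star,2}$. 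For $\star=\ell$, Lemma~\ref{Lemma_W}(c) (and the computation in its proof) shows that $\psi_\ell$ lies in the eigenspace of $V_\ell$ for the eigenvalue $-1$, whence $E^{V_\ell}(B)\psi_\ell=\chi_B(-1)\;\!\psi_\ell$ and the measure equals $\|\psi_\ell\|_\H^2\,\delta_{-1}$; the case $\star={\rm r}$ is identical with the eigenvalue $+1$.

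\emph{Part (b).} Take $\star=\ell$ with $a_\ell\in(0,1)$; the case $\star={\rm r}$ is symmetric. By Lemma~\ref{Lemma_W}(b) we have $\psi_\ell\in\chi_-(V_\ell)\H$, and by point~(ii) before Lemma~\ref{Lemma_W} each $v_{\ell,j}$ takes values in $[-a_\ell,a_\ell]$, so the finite measure $\mu_\ell:=\|E^{V_\ell}(\;\!\cdot\;\!)\psi_\ell\|_\H^2$, of total mass $\|\psi_\ell\|_\H^2\le\|\Psi_{\rm in}\|_\H^2=1$, is supported in $[-a_\ell,0)$. The key step is the change of variables $\upsilon=v_{\ell,j}(k)$ in the fibered identity above. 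Since $a_\ell\in(0,1)$, point~(ii) gives $v_{\ell,j}(k)=\tfrac{(-1)^j\varsigma_\ell(k)}{\eta_\ell(k)}$, which is real-analytic in $k$ and whose derivative vanishes only where $k+\alpha_\ell-\delta_\ell/2\in\{\pi/2,3\pi/2\}$; moreover each spectral component $\Pi_{\ell,j}(\;\!\cdot\;\!)\F\psi_\ell$ is supported on the semicircle $\{v_{\ell,j}<0\}$ of $[0,2\pi)$, on which $v_{\ell,j}$ maps onto $[-a_\ell,0)$ with exactly two preimages over each $\upsilon\in(-a_\ell,0)$. Writing $\theta:=k+\alpha_\ell-\delta_\ell/2$, a short trigonometric computation shows that $\cos^2\theta=\tfrac{a_\ell^2-\upsilon^2}{a_\ell^2(1-\upsilon^2)}$ and $|v_{\ell,j}'(k)|^{-1}=\tfrac{\sqrt{1-a_\ell^2}}{(1-\upsilon^2)\sqrt{a_\ell^2-\upsilon^2}}$ on that fibre, so that the push-forward of $\tfrac{\d k}{2\pi}$ restricted to that semicircle under $v_{\ell,j}$ equals $\chi_{(-a_\ell,0)}(\upsilon)\,f_{\rm K}(\upsilon,a_\ell)\,\d\upsilon$. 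Inserting the nonnegative fibre weights $\|\Pi_{\ell,j}(k)\;\!(\F\psi_\ell)(k)\|_2^2$ then shows that $\mu_\ell$ is absolutely continuous with $\tfrac{\d\mu_\ell}{\d\upsilon}=\tfrac12\chi_{[-a_\ell,0)}(\upsilon)\,\omega_\ell(\upsilon)\,f_{\rm K}(\upsilon,a_\ell)$, where $\omega_\ell\ge0$ collects (up to the constant $2$) the fibre averages over $v_{\ell,j}^{-1}(\upsilon)$ of $\|\Pi_{\ell,j}(\;\!\cdot\;\!)\F\psi_\ell\|_2^2$ for $j=1,2$; and $\int\tfrac12\omega_\ell\,f_{\rm K}(\;\!\cdot\;\!,a_\ell)\,\d\upsilon=\mu_\ell(\R)<\infty$ yields $\omega_\ell\in\lone\big([-a_\ell,0),\tfrac12f_{\rm K}(\;\!\cdot\;\!,a_\ell)\,\d\upsilon\big)$. (For the mere existence statement one could argue more softly: $V_\ell$ has purely absolutely continuous spectrum since $v_{\ell,j}$ is non-constant real-analytic, so $\mu_\ell$ is absolutely continuous, and since $f_{\rm K}(\;\!\cdot\;\!,a_\ell)>0$ on $(-a_\ell,0)$ one may simply set $\tfrac12\omega_\ell:=(\d\mu_\ell/\d\upsilon)/f_{\rm K}(\;\!\cdot\;\!,a_\ell)$; but the change of variables is what exhibits $f_{\rm K}$ as the intrinsic reference density, as is needed for the explicit formulas of Theorem~\ref{Thm_wlt_meas}.)

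\emph{Expected main obstacle.} The only genuine computation is the trigonometric Jacobian in part (b): verifying that $|v_{\ell,j}'(k)|^{-1}$, re-expressed through $\upsilon$ and summed over the two preimages, reproduces exactly $\tfrac{\sqrt{1-a_\ell^2}}{\pi(1-\upsilon^2)\sqrt{a_\ell^2-\upsilon^2}}$. It is elementary but one must keep track of the preimage multiplicity — this is where the factor $\tfrac12$ originates — and of which semicircle of $[0,2\pi)$ each $\Pi_{\star,j}$-component occupies; the latter, together with the sign constraints from Lemma~\ref{Lemma_W}(b), is precisely what pins the support of the measure to $[-a_\ell,0)$ when $\star=\ell$ and to $(0,a_{\rm r}]$ when $\star={\rm r}$.
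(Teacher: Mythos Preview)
Your proposal is correct and follows essentially the same route as the paper: parts (a) and (c) are disposed of exactly as you say via Lemma~\ref{Lemma_W}, and part (b) is the change of variables $\upsilon=v_{\star,j}(k)$ whose Jacobian produces $\pi f_{\rm K}(\upsilon,a_\star)$, with the two preimages accounting for the factor $\tfrac12$ and the support constraint coming from $\psi_\star\in\chi_\mp(V_\star)\H$.

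The only notable difference is organisational. The paper packages the change of variables into explicit partial isometries $K_{\star,j,m}:\H\to\G_\star$ (Lemma~\ref{Lemma_K}), defined via the inverse branches $k_{\star,j,m}$ of $v_{\star,j}$ on the two \emph{monotonicity} intervals $I_0,I_1$ (rather than on your sign-semicircle), and proves $\sum_{j,m}(K_{\star,j,m})^*FK_{\star,j,m}=F(V_\star)$; this yields $\omega_\star(\upsilon)=\sum_{j,m}|(K_{\star,j,m}\psi_\star)(\upsilon)|^2$ as a concrete formula. Your direct computation gives the same density but leaves $\omega_\star$ implicitly described as a fibre sum; since Theorem~\ref{Thm_wlt_meas} states $w_\star$ in terms of the $K_{\star,j,m}$, you would in any case have to name those operators eventually. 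Both approaches are the same computation, differently bookkept.
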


\begin{proof}[Proof of (a) and (c)]
In the case $a_\ell=0$, the claim (a) follows directly from Lemma \ref{Lemma_W}(a).
In the case of $a_\ell=1$, we know from Lemma \ref{Lemma_W}(c) that
$W_+(U,U_\ell,j_\ell,\Theta)$ has initial subspace
$\ell^2\left(\Z,\left(\begin{smallmatrix}\C \\ 0\end{smallmatrix}\right)\right)$,
so that
$
W_+(U,U_\ell,j_\ell,\Theta)^*\Psi_{\rm in}
\in\ell^2\left(\Z,\left(\begin{smallmatrix}\C \\ 0\end{smallmatrix}\right)\right)
$.
On the other hand, we have shown in \cite[Sec.~4.1]{RST_1} that
$\sigma(V_\ell)=\sigma_{\rm p}(V_\ell)=\{-1,1\}$, with
$\ell^2\left(\Z,\left(\begin{smallmatrix}\C \\ 0\end{smallmatrix}\right)\right)$
the eigenspace associated to the eigenvalue $-1$. Thus, we obtain for any Borel set
$B\subset\R$ that
$$
\big\|E^{V_\ell}(B)W_+(U,U_\ell,j_\ell,\Theta)^*\Psi_{\rm in}\big\|_\H^2
=\begin{cases}
\big\|W_+(U,U_\ell,j_\ell,\Theta)^*\Psi_{\rm in}\big\|_\H^2 & \hbox{if $-1\in B$,}\\
0 & \hbox{otherwise.}
\end{cases}
$$
Since a similar result holds for the operator $W_+(U,U_{\rm r},j_{\rm r},\Theta)$,
the eigenspace
$\ell^2\left(\Z,\left(\begin{smallmatrix} 0 \\ \C\end{smallmatrix}\right)\right)$ and
the eigenvalue $+1$, we infer the result of the claim (c).
\end{proof}

In order to prove the claim (b) of Proposition \ref{prop_meas}, we need two
preparatory lemmas. The first one is the following:

\begin{Lemma}
Define for $\star\in\{\ell,{\rm r}\}$, $j\in\{1,2\}$, $m\in\{0,1\}$ and $a_\star\in(0,1)$
the function
$$
k_{\star,j,m }:[-a_\star,a_\star]\to I_m,~~
\upsilon\mapsto\frac{\delta_\star}2-\alpha_\star+m\;\!\pi
+\arcsin\left(\frac{(-1)^{j+m}b_\star\upsilon}{a_\star\sqrt{1-\upsilon^2}}\right),
$$
where
$$
I_0:=\big[-\pi/2-\alpha_\star+\delta_\star/2,\pi/2-\alpha_\star+\delta_\star/2\big]
\quad\hbox{and}\quad
I_1:=I_0+\pi.
$$
Then, the function $k_{\star,j,m}$ is differentiable on $(-a_\star,a_\star)$ with
\begin{equation}\label{eq_k_diff}
k_{\star,j,m}'(\upsilon)=(-1)^{j+m}\pi\;\!f_{\rm K}(\upsilon,a_\star),
\quad\upsilon\in(-a_\star,a_\star),
\end{equation}
and $k_{\star,j,m}$ is a bijection with inverse $v_{\star,j}|_{I_m}$, that is,
\begin{equation}\label{eq_inverses}
\big(v_{\star,j}\circ k_{\star,j,m}\big)(\upsilon)=\upsilon
~\mathrm{for}~\upsilon\in[-a_\star,a_\star]\quad\mathrm{and}\quad
\big(k_{\star,j,m}\circ v_{\star,j}\big)(k)=k~\mathrm{for}~k\in I_m.
\end{equation}
\end{Lemma}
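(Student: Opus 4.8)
The plan is to establish both assertions by a direct computation, in the order: rewrite the Konno function, differentiate $k_{\star,j,m}$, then deduce strict monotonicity and the identification of the range with $I_m$, and finally verify the inverse identities. The one algebraic fact that drives everything is that $a_\star,b_\star\in[0,1]$ with $a_\star^2+b_\star^2=1$ forces $b_\star=\sqrt{1-a_\star^2}$, so for $|\upsilon|<a_\star$ the Konno function reads $f_{\rm K}(\upsilon,a_\star)=\frac{b_\star}{\pi(1-\upsilon^2)\sqrt{a_\star^2-\upsilon^2}}$. Writing $g_{\star,j,m}(\upsilon):=\frac{(-1)^{j+m}b_\star\upsilon}{a_\star\sqrt{1-\upsilon^2}}$ for the argument of $\arcsin$ in the definition of $k_{\star,j,m}$, a one-line computation gives $1-g_{\star,j,m}(\upsilon)^2=\frac{a_\star^2-\upsilon^2}{a_\star^2(1-\upsilon^2)}$, which is positive — equivalently $|g_{\star,j,m}(\upsilon)|<1$, so $\arcsin$ is differentiable at $g_{\star,j,m}(\upsilon)$ — precisely when $|\upsilon|<a_\star$. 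Using $\frac{\d}{\d\upsilon}\big(\upsilon(1-\upsilon^2)^{-1/2}\big)=(1-\upsilon^2)^{-3/2}$ together with the chain rule for $\arcsin$ (the constant $\tfrac{\delta_\star}2-\alpha_\star+m\pi$ dropping out), I would get $k_{\star,j,m}'(\upsilon)=g_{\star,j,m}'(\upsilon)\big(1-g_{\star,j,m}(\upsilon)^2\big)^{-1/2}=\frac{(-1)^{j+m}b_\star}{(1-\upsilon^2)\sqrt{a_\star^2-\upsilon^2}}$, which equals $(-1)^{j+m}\pi f_{\rm K}(\upsilon,a_\star)$ by the rewriting of $f_{\rm K}$; this is \eqref{eq_k_diff}.

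For bijectivity I would argue as follows. The map $g_{\star,j,m}$ extends continuously to $[-a_\star,a_\star]$ since $a_\star<1$, with $g_{\star,j,m}(\pm a_\star)=\pm(-1)^{j+m}$, so $k_{\star,j,m}$ is continuous on $[-a_\star,a_\star]$ and $k_{\star,j,m}(\pm a_\star)=\tfrac{\delta_\star}2-\alpha_\star+m\pi\pm(-1)^{j+m}\tfrac\pi2$ are exactly the two endpoints of $I_m$. By \eqref{eq_k_diff} the derivative has constant sign on $(-a_\star,a_\star)$, because $f_{\rm K}(\;\!\cdot\;\!,a_\star)>0$ there, so $k_{\star,j,m}$ is strictly monotone and hence a homeomorphism of $[-a_\star,a_\star]$ onto the closed interval joining those endpoints, namely $I_m$.

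For the first identity in \eqref{eq_inverses} I would substitute $\theta:=k_{\star,j,m}(\upsilon)+\alpha_\star-\delta_\star/2=m\pi+\arcsin g_{\star,j,m}(\upsilon)$ into the formula $v_{\star,j}=(-1)^j\varsigma_\star/\eta_\star$ recalled above for the case $a_\star\in(0,1)$, read as its natural $2\pi$-periodic extension (so there is no loss in applying it on $I_m$, which need not lie inside $[0,2\pi)$ since $\alpha_\star,\delta_\star\in(-\pi,\pi]$). From $\sin\theta=(-1)^m g_{\star,j,m}(\upsilon)$, $\cos^2\theta=1-g_{\star,j,m}(\upsilon)^2$, and the identity $1-a_\star^2\big(1-g_{\star,j,m}(\upsilon)^2\big)=\tfrac{b_\star^2}{1-\upsilon^2}$, one obtains $\varsigma_\star=\tfrac{(-1)^j b_\star\upsilon}{\sqrt{1-\upsilon^2}}$ and $\eta_\star=\tfrac{b_\star}{\sqrt{1-\upsilon^2}}$, hence $v_{\star,j}\big(k_{\star,j,m}(\upsilon)\big)=\upsilon$. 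Since a left inverse of a bijection is automatically a two-sided inverse, the second identity in \eqref{eq_inverses} follows at once, and this also shows that $v_{\star,j}$ maps $I_m$ into $[-a_\star,a_\star]$.

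I do not expect a genuine conceptual obstacle; the only part requiring care is sign bookkeeping — tracking $(-1)^{j+m}$ through the shift $\theta=m\pi+\arcsin g_{\star,j,m}(\upsilon)$ via $\sin(m\pi+\phi)=(-1)^m\sin\phi$ and $\cos(m\pi+\phi)=(-1)^m\cos\phi$; checking that $k_{\star,j,m}(\pm a_\star)$ are the endpoints of $I_m$ in the orientation forced by the sign of $k_{\star,j,m}'$; and remembering to read $v_{\star,j}$ as a $2\pi$-periodic function whenever $I_m$ falls outside $[0,2\pi)$.
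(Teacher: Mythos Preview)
Your proof is correct and follows essentially the same direct-computation route as the paper: differentiate $\arcsin\circ g$ to get \eqref{eq_k_diff}, use the non-vanishing derivative to obtain bijectivity, and then verify the first identity in \eqref{eq_inverses} by substituting into $v_{\star,j}=(-1)^j\varsigma_\star/\eta_\star$. The one small economy you gain over the paper is that you deduce the second identity in \eqref{eq_inverses} abstractly from ``left inverse of a bijection is the inverse'', whereas the paper computes $k_{\star,j,m}\big(v_{\star,j}(k)\big)$ explicitly and checks the two cases $m=0,1$ by hand.
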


\begin{proof}
Set $f_\star(\upsilon):=\frac{b_\star\upsilon}{a_\star\sqrt{1-\upsilon^2}}$ for
$\upsilon\in[-a_\star,a_\star]$. Since
$f_\star'(\upsilon)=\frac{b_\star}{a_\star(1-\upsilon^2)^{3/2}}>0$ on
$(-a_\star,a_\star)$, the function $f_\star$ is strictly increasing on
$(-a_\star,a_\star)$ with $f_\star([-a_\star,a_\star])=[-1,1]$. It follows that
$k_{\star,j,m}$ is differentiable on $(-a_\star, a_\star)$ with derivative
$$
k_{\star,j,m}'(\upsilon)
=\frac{\d\big(\arcsin\circ(-1)^{j+m}f_\star\big)}{\d\upsilon}(\upsilon)
=\frac{(-1)^{j+m}f_\star'(\upsilon)}{\sqrt{1-f_\star(\upsilon)^2}}
=\frac{(-1)^{j+m}\sqrt{1-a_\star^2}}{\sqrt{a_\star^2-\upsilon^2}\;\!(1-\upsilon^2)}.
$$
This implies \eqref{eq_k_diff}.

Now, the fact that $k_{\star,j,m}'\ne0$ on $(-a_\star,a_\star)$ implies that
$k_{\star,j,m}$ is invertible. Since
$\varsigma_\star\big(k_{\star,j,m}(\upsilon)\big)=(-1)^ja_\star f_\star(\upsilon)$ and
$
\eta_\star\big(k_{\star,j,m}(\upsilon)\big)
=\sqrt{b_\star^2+a_\star^2f_\star^2(\upsilon)}
$
for $\upsilon\in[-a_\star,a_\star]$, one also obtains that
$$
v_{\star,j}\big(k_{\star,j,m}(\upsilon)\big)
=\frac{(-1)^j\varsigma_\star\big(k_{\star,j,m}(\upsilon)\big)}
{\eta_\star\big(k_{\star,j,m}(\upsilon)\big)}
=\frac{a_\star f_\star(\upsilon)}{\sqrt{b_\star^2+a_\star^2f_\star(\upsilon)^2}}
=\upsilon.
$$
On the other hand, since
$f_\star\big(v_{\star,j}(k)\big)=(-1)^j\sin(k+\alpha_\star-\delta_\star/2)$ for
$k\in[0,2\pi)$, one obtains that
$$
k_{\star,j,m}\big(v_{\star,j}(k)\big)
=\frac{\delta_\star}2-\alpha_\star+m\pi
+\arcsin\big((-1)^m\sin(k+\alpha_\star-\delta_\star/2)\big).
$$
Therefore, if $m=0$ and $k\in I_0$, then
$k+\alpha_\star-\delta_\star/2\in[-\pi/2,\pi/2]$ and thus
$k_{\star,j,m}(v_{\star,j}(k))=k$. And if $m=1$ and $k\in I_1$, then
$k+\alpha_\star-\delta_\star/2\in[\pi/2,3\pi/2]$ and
$(-1)^m\sin(k+\alpha_\star-\delta_\star/2)=\sin(k+\alpha_\star-\delta_\star/2-\pi)$.
Thus $k+\alpha_\star-\delta_\star/2-\pi\in[-\pi/2,\pi/2]$, and one obtains once again
$k_{\star,j,m}\big(v_{\star,j}(k)\big)=k$, which concludes the proof.
\end{proof}

To state our second preparatory lemma, we need to introduce some notations. For
$k\in[0,2\pi)$, we denote by $u_{\star,j}(k)\in\C^2$ a normalised eigenvector of the
operator $\widehat{U_\star}(k)$ for the eigenvalue $\lambda_{\star,j}(k)$.
In \cite[Sec.~4]{RST_1}, we have shown that $u_{\star,j}(k)$ can be chosen
$C^\infty$ in the variable $k$. For $\star\in\{\ell,{\rm r}\}$, we set
$$
\G_\star
:=\ltwo\big([-a_\star,a_\star],\tfrac12f_{\rm K}(\;\!\cdot\;\!,a_\star)\d\upsilon\big).
$$
Finally, for $j\in\{1,2\}$ and $m\in\{0,1\}$, we define the operator
$K_{\star,j,m}:\H\to\G_\star$ by
$$
\big(K_{\star,j,m}\Psi\big)(\upsilon)
:= \big\langle u_{\star,j}\big(k_{\star,j,m}(\upsilon)\big),
(\F\Psi)\big(k_{\star,j,m}(\upsilon)\big) \big\rangle_{\C^2},
\quad\Psi\in\H,~\upsilon\in[-a_\star,a_\star].
$$

\begin{Lemma}\label{Lemma_K}
The operator $K_{\star,j,m}$ is bounded and satisfies the following:
\begin{enumerate}
\item[(a)] For $g\in\G_\star$, one has
$
(K_{\star,j,m})^*g
=\F^*\big\{\chi_{I_m}g\big(v_{\star,j}(\;\!\cdot\;\!)\big)u_{\star,j}\big\}
$.
\item[(b)] For $j,j'\in\{1,2\}$ and $m,m'\in\{0,1\}$, one has
$K_{\star,j,m}(K_{\star,j',m'})^*=\delta_{j,j'}\delta_{m,m'}\;\!{\rm id}_{\G_\star}$.
\item[(c)]
$\sum_{j\in\{1,2\},m\in\{0,1\}}(K_{\star,j,m})^*K_{\star,j,m}={\rm id}_\H$.
\item[(d)] For any Borel function $F:[-a_\star, a_\star]\to\C$, the multiplication
operator by $F$ in $\G_\star$ (denoted by the same symbol) satisfies the equation
$$
\sum_{j\in\{1,2\},m\in\{0,1\}}(K_{\star,j,m})^*F K_{\star,j,m}=F(V_\star).
$$
\end{enumerate}
\end{Lemma}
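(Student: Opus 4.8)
The plan is to carry out all four computations in the Fourier representation $\K=\ltwo\big([0,2\pi),\frac{\d k}{2\pi},\C^2\big)$, converting every integral over $\G_\star$ into an integral over $I_m$ by the change of variable $k=k_{\star,j,m}(\upsilon)$ supplied by the preceding lemma. Three facts do the real work. First, by the Jacobian identity \eqref{eq_k_diff}, the bijection $k_{\star,j,m}:[-a_\star,a_\star]\to I_m$ pushes the measure $\tfrac12\;\!f_{\rm K}(\;\!\cdot\;\!,a_\star)\,\d\upsilon$ forward to the normalised Lebesgue measure $\tfrac1{2\pi}\,\d k$ restricted to $I_m$; when $(-1)^{j+m}=-1$ this map reverses orientation, but the resulting sign change is compensated by the one coming from reversing the limits of integration. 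Second, the inversion relations \eqref{eq_inverses} give $v_{\star,j}\circ k_{\star,j,m}={\rm id}$ on $[-a_\star,a_\star]$ and $k_{\star,j,m}\circ v_{\star,j}={\rm id}$ on $I_m$. Third, writing $\Pi_{\star,j}(k)=u_{\star,j}(k)\big\langle u_{\star,j}(k),\;\!\cdot\;\!\big\rangle_{\C^2}$ (legitimate since for $a_\star\in(0,1)$ the matrix $\widehat{U_\star}(k)$ has two distinct, hence simple, eigenvalues), one has $\sum_{j=1}^2\Pi_{\star,j}(k)=1_{\C^2}$, and, since $I_1=I_0+\pi$ and each $I_m$ has length $\pi$, $\chi_{I_0}+\chi_{I_1}=1$ almost everywhere modulo $2\pi$. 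With these in hand, all four assertions reduce to short pointwise identities in $k$.

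Concretely, I would first obtain boundedness of $K_{\star,j,m}$: changing variables and using $\big\|u_{\star,j}(k)\big\|_{\C^2}=1$,
\[
\big\|K_{\star,j,m}\Psi\big\|_{\G_\star}^2
=\int_{I_m}\big\|\Pi_{\star,j}(k)(\F\Psi)(k)\big\|_{\C^2}^2\,\frac{\d k}{2\pi}
\le\|\F\Psi\|_\K^2=\|\Psi\|_\H^2 .
\]
Then, for $\Psi\in\H$ and $g\in\G_\star$, the same change of variables gives
\[
\big\langle K_{\star,j,m}\Psi,g\big\rangle_{\G_\star}
=\int_{I_m}\big\langle(\F\Psi)(k),u_{\star,j}(k)\big\rangle_{\C^2}\;\!g\big(v_{\star,j}(k)\big)\,\frac{\d k}{2\pi}
=\Big\langle\Psi,\F^*\big\{\chi_{I_m}\;\!g\big(v_{\star,j}(\;\!\cdot\;\!)\big)u_{\star,j}\big\}\Big\rangle_\H ,
\]
where $\chi_{I_m}\;\!g(v_{\star,j}(\;\!\cdot\;\!))u_{\star,j}$ lies in $\K$ (again by the inverse change of variables), so the right-hand side makes sense; this is precisely (a).

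For (b), feeding the formula of (a) into the definition of $K_{\star,j,m}$ and using \eqref{eq_inverses} gives, with $\kappa:=k_{\star,j,m}(\upsilon)$,
\[
\big(K_{\star,j,m}(K_{\star,j',m'})^*g\big)(\upsilon)
=\chi_{I_{m'}}(\kappa)\;\!\big\langle u_{\star,j}(\kappa),u_{\star,j'}(\kappa)\big\rangle_{\C^2}\;\!g\big(v_{\star,j'}(\kappa)\big).
\]
If $m\ne m'$ this vanishes because $I_m\cap I_{m'}$ is a null set; if $m=m'$ but $j\ne j'$ it vanishes because $u_{\star,1}(k)\perp u_{\star,2}(k)$ (the two eigenvalues of $\widehat{U_\star}(k)$ are distinct when $a_\star\in(0,1)$, since then $b_\star\ne0$ and $\widehat{U_\star}(k)$ is not a scalar matrix); and if $m=m'$, $j=j'$ it equals $g(\upsilon)$ by \eqref{eq_inverses}. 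This is (b). Statement (c) is the case $F\equiv1$ of (d), and for (d) I would substitute $g=F\;\!K_{\star,j,m}\Psi$ into (a), simplify using $k_{\star,j,m}\circ v_{\star,j}={\rm id}$ on $I_m$, and sum:
\[
\sum_{j\in\{1,2\},\,m\in\{0,1\}}(K_{\star,j,m})^*F\;\!K_{\star,j,m}\Psi
=\F^*\Big\{\textstyle\sum_{m}\chi_{I_m}\sum_{j}F\big(v_{\star,j}(\;\!\cdot\;\!)\big)\Pi_{\star,j}(\;\!\cdot\;\!)\;\!\F\Psi\Big\}
=\F^*\big\{F\big(\widehat{V_\star}(\;\!\cdot\;\!)\big)\F\Psi\big\}=F(V_\star)\Psi ,
\]
using $\sum_m\chi_{I_m}=1$, the matrix functional calculus $\sum_j F(v_{\star,j}(k))\Pi_{\star,j}(k)=F(\widehat{V_\star}(k))$, and $V_\star=\F^*\widehat{V_\star}\F$.

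The step I expect to require the most care is not conceptual but the bookkeeping in (a): keeping track of the orientation and the Jacobian of $k_{\star,j,m}$, and of the $2\pi$-periodic identification of $I_0$ and $I_1$; once (a) is established, (b)--(d) are essentially one-line consequences. A small additional point worth spelling out is why the eigenvalues of $\widehat{U_\star}(k)$ are simple for $a_\star\in(0,1)$, which is what makes $u_{\star,1}(k)\perp u_{\star,2}(k)$ in (b) and legitimises the rank-one formula for $\Pi_{\star,j}(k)$.
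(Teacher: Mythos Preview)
Your proposal is correct and follows essentially the same route as the paper: boundedness and (a) via the change of variables $k=k_{\star,j,m}(\upsilon)$ from the preceding lemma, (b) by substituting (a) back into the definition and invoking \eqref{eq_inverses} together with the orthogonality of the eigenvectors, and (c)--(d) by recognising $(K_{\star,j,m})^*F\;\!K_{\star,j,m}$ as $\F^*\chi_{I_m}F\big(v_{\star,j}(\cdot)\big)\Pi_{\star,j}(\cdot)\F$ and summing. The only cosmetic differences are that the paper proves (c) separately rather than as the special case $F\equiv1$ of (d), and that your write-up makes explicit two points the paper leaves implicit---the orientation bookkeeping in the Jacobian and the simplicity of the eigenvalues of $\widehat{U_\star}(k)$ for $a_\star\in(0,1)$.
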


\begin{proof}
Using the change of variables $k:=k_{\star,j,m}(\upsilon)$, Equation \eqref{eq_k_diff},
and the normalisation of $u_{\star,j}(k)$, we obtain for any $\Psi\in\H$
\begin{align*}
\big\|K_{\star,j,m}\Psi\big\|_{\G_\star}^2
&=\tfrac12\int_{-a_\star}^{a_\star}\big|\big(K_{\star,j,m}\Psi\big)(\upsilon)\big|^2
f_{\rm K}(\upsilon,a_\star)\;\!\d\upsilon\\
&=\tfrac1{2\pi}\int_{I_m}\big|\big\langle u_{\star,j}(k),
(\F\Psi)(k)\big\rangle_{\C^2}\big|^2\;\!\d k\\
&\le\tfrac1{2\pi}\int_0^{2\pi}\big|(\F\Psi)(k)\big|_{\C^2}^2\;\!\d k\\
&=\|\Psi\|_\H^2,
\end{align*}
which proves that $K_{\star,j,m}$ is bounded. Furthermore, using
the change of variables $k:=k_{\star,j,m}(\upsilon)$ and \eqref{eq_inverses},
we obtain for any $g\in\G_\star$
\begin{align*}
\big\langle g,K_{\star,j,m}\Psi\big\rangle_{\G_\star}
&=\tfrac12\int_{-a_\star}^{a_\star}\overline g(\upsilon)
\big\langle u_{\star,j}\big(k_{\star,j,m}(\upsilon)\big),
(\F\Psi)\big(k_{\star,j,m}(\upsilon)\big)\big\rangle_{\C^2}
f_{\rm K}(\upsilon,a_\star)\;\!\d\upsilon \\
&=\tfrac1{2\pi}\int_{I_m}\overline g\big(v_{\star,j}(k)\big)
\big\langle u_{\star,j}(k),(\F\Psi)(k)\big\rangle_{\C^2}\;\!\d k\\
&=\big\langle\F^*\big\{\chi_{I_m}g\big(v_{\star,j}(\;\!\cdot\;\!)\big)
u_{\star,j}\big\},\Psi\big\rangle_\H,
\end{align*}
which proves (a). To prove (b), we observe that points (a) and \eqref{eq_inverses} imply
for any $g\in\G_\star$ and $\upsilon\in[-a_\star,a_\star]$	
\begin{align*}
\big(K_{\star,j,m}(K_{\star,j',m'})^*g\big)(\upsilon)
&=\big\langle u_{\star,j}\big(k_{\star,j,m}(\upsilon)\big),
\big(\F(K_{\star,j',m'})^*g\big)\big(k_{\star,j,m}(\upsilon)\big)\big\rangle_{\C^2}\\
&=\chi_{I_{m'}}\big((k_{\star,j,m}(\upsilon)\big)
g\big(v_{\star,j'}\big(k_{\star,j,m}(\upsilon)\big)\big)
\big\langle u_{\star,j}\big(k_{\star,j,m}(\upsilon)\big),
u_{\star,j'}\big(k_{\star,j,m}(\upsilon)\big)\big\rangle_{\C^2}\\
&=\delta_{j,j'}\delta_{m,m'}g(\upsilon).
\end{align*}
On the other hand, for any $\Psi\in\H$ and $k\in[0,2\pi)$ we have
\begin{align*}
\big(\F(K_{\star,j,m})^*K_{\star,j,m}\Psi\big)(k)
&=\chi_{I_m}(k)\big(K_{\star,j,m}\Psi\big)\big(v_{\star,j}(k)\big)u_{\star,j}(k)\\
&=\chi_{I_m}(k)\big\langle u_{\star,j}\big(k_{\star,j,m}\big(v_{\star,j}(k)\big)\big),
(\F\Psi)\big(k_{\star,j,m}\big(v_{\star,j}(k)\big)\big)
\big\rangle_{\C^2}u_{\star,j}(k)\\
&=\chi_{I_m}(k)\big\langle u_{\star,j}(k),(\F\Psi)(k)\big\rangle_{\C^2}u_{\star,j}(k).
\end{align*}
Hence, we obtain
$$
(K_{\star,j,m})^* K_{\star,j,m}\Psi
=\F^*\big\{\chi_{I_m}\big\langle u_{\star,j}(\;\!\cdot\;\!),
(\F\Psi)(\;\!\cdot\;\!)\big\rangle_{\C^2}u_{\star,j}\big\}
=\F^*\chi_{I_m}\Pi_{\star,j}(\;\!\cdot\;\!)\F\Psi,
$$
and thus
$$
\sum_{j\in\{1,2\},m\in\{0,1\}}(K_{\star,j,m})^*K_{\star,j,m}\Psi
=\sum_{j\in\{1,2\},m\in\{0,1\}}\F^*\chi_{I_m}\Pi_{\star,j}(\;\!\cdot\;\!)\F\Psi
=\Psi,
$$
which shows (c). Finally, for any $\Psi\in\H$ and $k\in[0,2\pi)$ we have
\begin{align*}
\big(\F(K_{\star,j,m})^*FK_{\star,j,m}\Psi\big)(k)
&=\chi_{I_m}(k)\big(F K_{\star,j,m}\Psi\big)\big(v_{\star,j}(k)\big)u_{\star,j}(k)\\
&=\chi_{I_m}(k)F\big(v_{\star,j}(k)\big)\Pi_{\star,j}(k)\F\Psi
\end{align*}
which implies that
$$
\sum_{j\in\{1,2\},m\in\{0,1\}}(K_{\star,j,m})^*FK_{\star,j,m}
=\sum_{j\in\{1,2\}}\F^*F\big(v_{\star,j}(\;\!\cdot\;\!)\big)
\Pi_{\star,j}(\;\!\cdot\;\!)\F
=\F^*F(\widehat{V_\star})\F
=F(V_\star),
$$
as stated in (d).
\end{proof}

We can now provide the proof of the claim (b) of Proposition \ref{prop_meas}:

\begin{proof}[Proof of Proposition \ref{prop_meas}(b)]
Take a Borel set $B\subset\R$. Then, Lemma \ref{Lemma_W}(b), Lemma \ref{Lemma_K}(d)
and \cite[Lemma~4.2(b)]{RST_1} imply that
\begin{align*}
E^{V_\ell}(B)W_+(U,U_\ell,j_\ell,\Theta)^*\Psi_{\rm in}
&=\chi_B(V_\ell)\chi_-(V_\ell)W_+(U,U_\ell,j_\ell,\Theta)^*\Psi_{\rm in}\\
&=\sum_{j\in\{1,2\},m\in\{0,1\}}(K_{\ell,j,m})^*\chi_{B\cap[-a_\ell,0)}K_{\ell,j,m}
W_+(U,U_\ell,j_\ell,\Theta)^*\Psi_{\rm in}.
\end{align*}
Thus, it follows by Lemma \ref{Lemma_K}(b) that
\begin{align*}
&\big\|E^{V_\ell}(B)W_+(U,U_\ell,j_\ell,\Theta)^*\Psi_{\rm in}\big\|^2_\H\\
&=\sum_{j\in\{1,2\},m\in\{0,1\}}\big\|\chi_{B \cap[-a_\ell,0)}K_{\ell,j,m}
W_+(U,U_\ell,j_\ell,\Theta)^*\Psi_{\rm in}\big\|_{\G_\ell}^2\\
&=\tfrac12\sum_{j\in\{1,2\},m\in\{0,1\}}\int_{B\cap[-a_\ell,0)}
\big|\big(K_{\ell,j,m}W_+(U,U_\ell,j_\ell,\Theta)^*\Psi_{\rm in}\big)(\upsilon)\big|^2
f_{\rm K}(\upsilon,a_\ell)\;\!\d\upsilon,
\end{align*}
which means that the density function of
$\big\|E^{V_\ell}(B)W_+(U,U_\ell,j_\ell,\Theta)^*\Psi_{\rm in}\big\|^2_\H$ is given by
\begin{align*}
&\frac{\d\big\|E^{V_\ell}\big((-\infty,\upsilon]\big)W_+(U,U_\ell,j_\ell,\Theta)^*
\Psi_{\rm in}\big\|_\H^2}{\d\upsilon}\\
&=\tfrac12\sum_{j\in\{1,2\},m\in\{0,1\}}\chi_{[-a_\ell,0)}(\upsilon)
\big|\big(K_{\ell,j,m}W_+(U,U_\ell,j_\ell,\Theta)^*\Psi_{\rm in}\big)(\upsilon)\big|^2
f_{\rm K}(\upsilon,a_\ell).
\end{align*}
Since a similar argument leads to
\begin{align*}
&\frac{\d\big\|E^{V_{\rm r}}\big((-\infty,\upsilon]\big)
W_+(U,U_{\rm r},j_{\rm r},\Theta)^*\Psi_{\rm in}\big\|_\H^2}{\d\upsilon}\\
&=\tfrac12\sum_{j\in\{1,2\},m\in\{0,1\}}\chi_{(0,a_{\rm r}]}(\upsilon)\big|
\big(K_{{\rm r},j,m}W_+(U,U_{\rm r},j_{\rm r},\Theta)^*
\Psi_{\rm in}\big)(\upsilon)\big|^2f_{\rm K}(\upsilon,a_{\rm r}),
\end{align*}
the claim follows by setting
\begin{equation}\label{last_label}
\omega_\star(\upsilon)
:=\sum_{j\in\{1,2\},m\in\{0,1\}}\big|(K_{\star,j,m}W_+(U,U_\star,j_\star,\Theta)^*
\Psi_{\rm in})(\upsilon)\big|^2.
\end{equation}
\end{proof}

In our final theorem, we gather the results obtained so far on the probability
distribution $\mu_\V$ and we prove a uniqueness result.

\begin{Theorem}\label{Thm_wlt_meas}
Let $\Psi_{\rm in}\in\H$ with $\|\Psi_{\rm in}\|_\H=1$, and let $\V$ be the random
variable defined by \eqref{eq_limit}. Then, $\V$ has probability distribution
\begin{align}
\mu_\V(\d\upsilon)
&=\kappa_0\;\!\delta_0(\d\upsilon)+\kappa_\ell\;\!\delta_{-1}(\d\upsilon)
+\kappa_{\rm r}\;\!\delta_1(\d\upsilon)\nonumber\\
&\quad+\chi_{[-a_\ell,0)}(\upsilon)w_\ell(\upsilon)\tfrac12f_{\rm K}(\upsilon,a_\ell)
\;\!\d\upsilon+\chi_{(0,a_{\rm r}]}(\upsilon)w_{\rm r}(\upsilon)\tfrac12
f_{\rm K}(\upsilon,a_{\rm r})\;\!\d\upsilon,\quad\upsilon\in\R,
\label{eq_muV_final}
\end{align}
with $\kappa_0,\kappa_\star\geq 0$ given by
$$
\kappa_0:=\big\|E^U_{\rm p}\Psi_{\rm in}\big\|_\H^2
\quad\hbox{and}\quad
\kappa_\star
:=\begin{cases}
\big\|W_+(U,U_\star,j_\star,\Theta)^*\Psi_{\rm in}\big\|_\H^2&\mathrm{if}~a_\star=1\\
0&\mathrm{otherwise,}
\end{cases}
$$
and with
$
w_\ell
\in\lone\big([-a_\ell,0),\tfrac12f_{\rm K}(\;\!\cdot\;\!,a_\ell)\;\!\d\upsilon\big)
$
and
$
w_{\rm r}
\in\lone\big((0,a_{\rm r}],\tfrac12f_{\rm K}(\;\!\cdot\;\!,a_{\rm r})\;\!\d\upsilon\big)
$
given by
$$
w_\star(\upsilon)
:=\begin{cases}
\sum_{j\in\{1,2\},m\in\{0,1\}}\big|\big(K_{\star,j,m}W_+(U,U_\star,j_\star,\Theta)^*
\Psi_{\rm in}\big)(\upsilon)\big|^2 & \mathrm{if}~a_\star\in(0,1)\\
0 & \mathrm{otherwise.}
\end{cases}
$$
Furthermore, the decomposition of $\mu_\V$ is unique.
\end{Theorem}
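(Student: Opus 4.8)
The plan is to assemble \eqref{eq_muV_final} from the results already established and then to recognise it as the Lebesgue decomposition of $\mu_\V$, from which uniqueness follows. Concretely, I would start from the expression \eqref{eq_muV} for $\mu_\V$ in Theorem \ref{thm_wlt} and apply Proposition \ref{prop_meas} to each of its last two summands $\big\|E^{V_\star}(\cdot)W_+(U,U_\star,j_\star,\Theta)^*\Psi_{\rm in}\big\|_\H^2$, $\star\in\{\ell,{\rm r}\}$, according to the value of $a_\star$. If $a_\star=0$, part (a) makes the summand vanish, in agreement with \eqref{eq_muV_final} since then $\kappa_\star=0$, $w_\star=0$, and the relevant interval is empty. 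If $a_\star\in(0,1)$, part (b) produces exactly the absolutely continuous term of \eqref{eq_muV_final} with $w_\star$ equal to the function $\omega_\star$ of \eqref{last_label} (whose $\lone$-membership is the one recorded in Proposition \ref{prop_meas}(b), and is in any case forced by $\mu_\V$ being a probability measure), while $\kappa_\star=0$. If $a_\star=1$, part (c) produces the Dirac term $\kappa_\star\delta_{\mp1}$ with $\kappa_\star=\big\|W_+(U,U_\star,j_\star,\Theta)^*\Psi_{\rm in}\big\|_\H^2$, and the absolutely continuous term disappears because $f_{\rm K}(\cdot,1)\equiv0$, consistently with $w_\star=0$. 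Adding these contributions to $\kappa_0\delta_0=\big\|E^U_{\rm p}\Psi_{\rm in}\big\|_\H^2\;\!\delta_0$ yields \eqref{eq_muV_final}.

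For the uniqueness statement, the point is that \eqref{eq_muV_final} is precisely the decomposition of $\mu_\V$ into its pure point part and its absolutely continuous part, the singular continuous part being absent. Indeed $\kappa_0\delta_0+\kappa_\ell\delta_{-1}+\kappa_{\rm r}\delta_1$ is supported on $\{-1,0,1\}$, whereas the remaining two terms are absolutely continuous with respect to Lebesgue measure — since $\upsilon\mapsto f_{\rm K}(\upsilon,a_\star)$ is locally integrable, its only singularities, at $\upsilon=\pm a_\star$, being of order $(a_\star^2-\upsilon^2)^{-1/2}$ — and are in particular atomless. By uniqueness of the Lebesgue decomposition of a finite Borel measure, any representation of $\mu_\V$ in the form \eqref{eq_muV_final} must have the same pure point part, which being supported on $\{-1,0,1\}$ forces $\kappa_0=\mu_\V(\{0\})$, $\kappa_\ell=\mu_\V(\{-1\})$ and $\kappa_{\rm r}=\mu_\V(\{1\})$, and the same absolutely continuous part.

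Finally, to recover the densities I would use that $[-a_\ell,0)\subset(-\infty,0)$ and $(0,a_{\rm r}]\subset(0,\infty)$ are disjoint, so that restricting the absolutely continuous part of $\mu_\V$ to $(-\infty,0)$ and to $(0,\infty)$ isolates $\chi_{[-a_\ell,0)}w_\ell\,\tfrac12f_{\rm K}(\cdot,a_\ell)\,\d\upsilon$ and $\chi_{(0,a_{\rm r}]}w_{\rm r}\,\tfrac12f_{\rm K}(\cdot,a_{\rm r})\,\d\upsilon$ respectively; since $f_{\rm K}(\cdot,a_\star)>0$ Lebesgue-almost everywhere on the corresponding interval when $a_\star\in(0,1)$, dividing by $\tfrac12f_{\rm K}(\cdot,a_\star)$ determines $w_\ell$ and $w_{\rm r}$ uniquely as elements of the respective $\lone$ spaces, while in the degenerate cases $a_\star\in\{0,1\}$ those spaces are trivial and there is nothing to prove. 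I do not expect any real obstacle here: the argument amounts to patching together Theorem \ref{thm_wlt} and Proposition \ref{prop_meas} and then invoking uniqueness of the Lebesgue decomposition, the only mild care needed being the degenerate values $a_\star\in\{0,1\}$, where several terms of \eqref{eq_muV_final} collapse, and the elementary integrability and positivity properties of the Konno function.
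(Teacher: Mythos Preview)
Your proposal is correct and follows essentially the same approach as the paper: assemble \eqref{eq_muV_final} by plugging Proposition \ref{prop_meas} into Theorem \ref{thm_wlt}, then obtain uniqueness by observing that the formula is the pure point plus absolutely continuous decomposition of $\mu_\V$, with the pure point part supported on $\{-1,0,1\}$ and the absolutely continuous part split into two pieces with disjoint supports $[-a_\ell,0)$ and $(0,a_{\rm r}]$. Your treatment is in fact somewhat more careful than the paper's (you spell out the Lebesgue decomposition, the positivity of $f_{\rm K}$ on the relevant intervals, and the degenerate cases $a_\star\in\{0,1\}$), but the argument is the same.
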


\begin{proof}
In view of Theorem \ref{thm_wlt} and Proposition \ref{prop_meas}, only the uniqueness
of the decomposition of $\mu_\V$ has to be established. For that purpose, we observe
that $\mu_\V$ is the sum of a pure point measure located at the three distinct points
$-1,0,1$ and an absolutely continuous measure. As a consequence, the coefficients
$\kappa_\ell,\kappa_0,\kappa_{\rm r}$ are unique. In addition, since the absolutely
continuous measure is the sum of two absolutely continuous measures with disjoint
supports $[-a_\ell,0)$ and $(0,a_{\rm r}]$, each of these measures is unique in the
$\lone$-sense given by their density functions in
$\lone\big([-a_\ell,0),\tfrac12f_{\rm K}(\;\!\cdot\;\!,a_\ell)\;\!\d\upsilon\big)$ and
$\lone\big((0,a_{\rm r}],\tfrac12f_{\rm K}(\;\!\cdot\;\!,a_{\rm r})\;\!\d\upsilon\big)$.
\end{proof}

\begin{Remark}\label{rem_wlt}
In the case of the one-dimensional random walk where the walker moves to the left with
probability $p$ and to the right with probability $q$, the classical central limit
theorem implies that the random variable $\frac{\X_n-n(q-p)}{2\sqrt{npq}}$ converges
in law as $n\to\infty$ to a random variable $\mathtt Z$ with standard normal
distribution ${\rm N}(0,1)$. Therefore, the weak limit theorem \ref{thm_wlt} can be
interpreted as a quantum analogue of the classical central limit theorem. On the other
hand, the classical central limit theorem implies that the average velocity $\X_n/n$
always converges in law as $n\to\infty$ to a Gaussian random variable with
distribution ${\rm N}(q-p, 4pq/n)$, whereas the weak limit theorem leads to a variety
of limit distributions $\mu_\V$ depending on the outgoing states
$W_+(U,U_\star,j_\star,\Theta)^*\Psi_{\rm in}$.
\end{Remark}

To conclude, we discuss in more detail some particular cases of Theorem
\ref{Thm_wlt_meas}:

\begin{Example}\label{ex_wlt}
(a) In the case $a_\ell=0$ and $a_{\rm r}\in(0,1)$, the formula \eqref{eq_muV_final}
reduces to
$$
\mu_\V(\d\upsilon)
=\kappa_0\;\!\delta_0(\d\upsilon)+\tfrac12\chi_{(0,a_{\rm r}]}(\upsilon)
w_{\rm r}(\upsilon)f_{\rm K}(\upsilon,a_{\rm r})\;\!\d\upsilon.
$$
Thus, the density function of the absolutely continuous part of $\mu_\V$ is supported
in $(0,a_{\rm r}]$, and the quantum walker can asymptotically move only to the right
at a speed belonging to the interval $(0,a_{\rm r}]$.

(b) In the case $a_\ell=1$ and $a_{\rm r}\in(0,1)$, the formula \eqref{eq_muV_final}
reduces to
$$
\mu_\V(\d\upsilon)
=\kappa_0\;\!\delta_0(\d\upsilon)+\kappa_\ell\;\!\delta_{-1}(\d\upsilon)
+\tfrac12\chi_{(0,a_{\rm r}]}(\upsilon) w_{\rm r}(\upsilon)
f_{\rm K}(\upsilon,a_{\rm r})\;\!\d\upsilon.
$$
Thus, the quantum walker can asymptotically move to the left at speed $1$ or to the
right at a speed belonging to the interval $(0,a_{\rm r}]$.

(c) In the case $a_\ell, a_{\rm r}\in(0,1)$, the formula \eqref{eq_muV_final} reduces
to
\begin{equation}\label{muV_iso}
\mu_\V(\d\upsilon)
=\kappa_0\;\!\delta_0(\d\upsilon)+\tfrac12\chi_{[-a_\ell,0)}(\upsilon)w_\ell(\upsilon)
f_{\rm K}(\upsilon,a_\ell)\;\!\d\upsilon
+\tfrac12\chi_{(0,a_{\rm r}]}(\upsilon)w_{\rm r}(\upsilon)
f_{\rm K}(\upsilon,a_{\rm r})\;\!\d\upsilon.
\end{equation}
Thus, the quantum walker can asymptotically move to the left at a speed belonging to
the interval $(0,a_\ell]$ or to the right at a speed belonging to the interval
$(0,a_{\rm r}]$. In particular, if $a_\ell>a_{\rm r}$, then the quantum walker can
move faster on the left-hand side than on the right-hand side.
\end{Example}

Finally, we note that the formula \eqref{muV_iso} covers various previously known
results:

\begin{Example}\label{ex_cpr}
(a) In the case $a_\ell=a_{\rm r}=1/\sqrt2$, the formula \eqref{muV_iso} reduces to
\begin{equation}\label{eq_nous}
\mu_\V(\d\upsilon)
=\kappa_0\;\!\delta_0(\d\upsilon)
+\tfrac12\big(\chi_{[-1/\sqrt2,0)}(\upsilon)w_\ell(\upsilon)
+\chi_{(0,1/\sqrt2]}(\upsilon)w_{\rm r}(\upsilon)\big)
f_{\rm K}(\upsilon,1/\sqrt2)\;\!\d\upsilon.
\end{equation}
This generalises the formula obtained by Endo et al. in the case of a two-phase
quantum walk with one defect \cite[Thm.~2.1]{EEKST16}. Indeed, in their work Endo et
al. consider a coin operator given by
$$
C(x)=
\begin{cases}
C_+=\frac1{\sqrt2}
\big(\begin{smallmatrix}1&\e^{i\sigma_+}\\
\e^{-i\sigma_+}&-1\end{smallmatrix}\big)
& \mathrm{if}~x\ge1\medskip\\
C_0=\big(\begin{smallmatrix}1&0\\0&-1\end{smallmatrix}\big)
& \mathrm{if}~x=0\medskip\\
C_-=\frac1{\sqrt2}\big(\begin{smallmatrix}1&\e^{i\sigma_-}
\\ \e^{-i\sigma_-}&-1\end{smallmatrix}\big)
& \mathrm{if}~x\le-1,
\end{cases}
$$
for some $\sigma_\pm\in[0,2\pi)$, and they obtain the formula
\begin{equation}\label{eq_eux}
\mu_\V(\d\upsilon)
=c\;\!\delta_0(\d\upsilon)+w(\upsilon)f_{\rm K}(\upsilon,1/\sqrt2)\;\!\d\upsilon
\end{equation}
where $c:=\sum_{x\in\Z}\big(\lim_{N\to\infty}\frac1N\sum_{n=0}^{N-1}\P(\X_n=x)\big)$
and $w:\R\to\R$ is some particular function. However, by applying a discrete analogue
of the RAGE theorem, or Wiener's theorem, or in a way similar to
\cite[Appendix]{SS16}, one can prove that
$c=\kappa_0=\big\|E^U_{\rm p}\Psi_{\rm in}\big\|_\H^2$, thus showing that
\eqref{eq_nous} is a generalisation of \eqref{eq_eux}. Moreover, the uniqueness of the
decomposition \eqref{muV_iso} leads to the explicit formula for $w$:
$$
w(\upsilon)
=\tfrac12\big(\chi_{[-1/\sqrt2,0)}(\upsilon)w_\ell(\upsilon)
+\chi_{(0,1/\sqrt2]}(\upsilon)w_{\rm r}(\upsilon)\big),
$$
Endo et al. obtained an explicit expression for $w$ only in the case the initial state
satisfies $\Psi_{\rm in}(x)=0$ everywhere except for $x=0$.

(b) The isotropic case where $C(x)=C_\infty+O(|x|^{-1-\varepsilon})$,
$C_\infty\in\U(2)$, and $a_\infty:=|(C_\infty)_{1,1}|\in(0,1)$, was studied in
\cite{Suz16} by the second author. In our setup, this corresponds to setting
$C_\ell=C_{\rm r}=C_\infty$ and $\varepsilon_\ell=\varepsilon_{\rm r}=\varepsilon$ in
Assumption \ref{ass_short} and having $a_\ell=a_{\rm r}=a_\infty\in (0,1)$ in
\eqref{eq_param}. In \cite{Suz16}, it was shown that
$$
\mu_\V(\d\upsilon)
=\kappa_0\;\!\delta_0(\d\upsilon)+w(\upsilon)f_{\rm K}(\upsilon,a_\infty)\;\!\d\upsilon,
$$
where
$
w\in\lone\big([-a_\infty,a_\infty],
\tfrac12f_{\rm K}(\;\!\cdot\;\!,a_\infty)\;\!\d\upsilon\big)
$
is a function similar to \eqref{last_label} but defined in terms of the wave operator
$W_+(U,U_\infty,1_\H,\Theta)$, with the identity $1_\H$ instead of the identification
operator $J$ and the evolution operator $U_\infty:=SC_\infty$ in $\H$ instead of the
evolution operator $U_0$ in $\H_0$. Thus, by using once again the uniqueness of the
decomposition of \eqref{muV_iso}, we obtain the explicit formula for $w$:
$$
w(\upsilon)
=\tfrac12\chi_{[-a_\infty,0)}(\upsilon)w_\ell(\upsilon)
+\tfrac12\chi_{(0,a_\infty]}(\upsilon)w_{\rm r}(\upsilon).
$$
In the one-defect case, where $C(x)=C_\infty$ everywhere except for $x=0$, Konno et
al. obtained in \cite{Kon10} an explicit expression of $w$ only in the case the initial
state satisfies $\Psi_{\rm in}(x)=0$ everywhere except for $x=0$.

(c) The homogeneous case, where $C(x)=C_\infty$ and $a_\infty\in(0,1)$, has been
studied by several authors \cite{GJS,Kon02,Kon05}. It is a particular case of the
isotropic case (b) above. Because $U=SC_\infty$ has no eigenvalue whenever
$a_\infty\in(0,1)$, \eqref{muV_iso} reduces to
$$
\mu_\V(\d\upsilon)= w(\upsilon)f_{\rm K}(\upsilon,a_\infty)\;\!\d\upsilon
$$
with
$$
w(\upsilon)=\tfrac12\chi_{[-a_\infty,0)}(\upsilon)w_\ell(\upsilon)
+\tfrac12\chi_{(0,a_\infty]}(\upsilon)w_{\rm r}(\upsilon).
$$
As noted in Remark \ref{rem_wlt}, the outgoing states determines $w_\star$ uniquely.
However, in the homogeneous case, the initial state $\Psi_{\rm in}$ and the coin matrix
$C_\infty$ determine the function
$
w\in\lone\big([-a_\infty,a_\infty],
\tfrac12f_{\rm K}(\;\!\cdot\;\!,a_\infty)\;\!\d\upsilon\big)
$
uniquely. For a special initial state $\Psi_{\rm in}$ and coin matrix $C_\infty$, $w$
can be computed explicitly. We refer to the works \cite{Ma1,Ma2} of Machida for more
information on the relation between the limit distribution and the initial state.
\end{Example}


\begin{thebibliography}{1}
\bibitem{ABNVW}
A.~Ambainis, E.~Bach, A.~Nayak, A.~Vishwanath, and J.~Watrous,
\newblock One-dimensional quantum walks,
\newblock {\em In: Proceedings of the Thirty-Third Annual ACM Symposium on Theory of Computing}, pp. 37--49,
ACM, New York, 2001.

\bibitem{ABG96}
W.O.~Amrein, A.~Boutet~de Monvel, and V.~Georgescu,
\newblock {\em {$C_0$}-groups, commutator methods and spectral theory of {$N$}-body {H}amiltonians},
\newblock volume 135 of {\em Progress in Mathematics}, Birkh\"auser Verlag, Basel, 1996.

\bibitem{AB13}
J.K.~Asb\'oth and H.~Obuse,
\newblock Bulk-boundary correspondence for chiral symmetric quantum walks,
\newblock {\em Phys. Rev.} B 88: 121406(R), 2013.

\bibitem{ABJ}
J.~Asch, O.~Bourget, and A.~Joye,
\newblock Spectral stability of unitary network models,
\newblock {\em Rev. Math. Phys.} 27(7): 1530004, 22pp, 2015.

\bibitem{CGMV12}
M.J.~Cantero, F.A.~Gr\"unbaum, L.~Moral, and L.~Vel\'azquez,
\newblock One-dimensional quantum walks with one defect,
\newblock {\em Rev. Math. Phys.} 24(2): 1250002, 52pp, 2012.

\bibitem{CGSVWW}
C.~Cedzich, F.A.~Gr\"unbaum, C.~Stahl, L.~Vel\'azquez, A.H.~Werner, and R.F.~Werner,
\newblock Bulk-edge correspondence of one-dimensional quantum walks,
\newblock {\em J. Phys. A: Math. Theor.} 49(21): 21LT01, 12pp. 2016.

\bibitem{GNVW}
D.~Gross, V.~Nesme, H.~Vogts, and R.F.~Werner,
\newblock Index theory of one dimensional quantum walks and cellular automata,
\newblock  Comm. Math. Phys. 310(2): 419--454, 2012.

\bibitem{EEKST15}
S.~Endo, T.~Endo, N.~Konno, E.~Segawa, and M.~Takei,
\newblock Limit theorems of a two-phase quantum walk with one defect,
\newblock {\em Quantum Inf. Comput.} 15(15-16): 1373--1396, 2015.

\bibitem{EEKST16}
S.~Endo, T.~Endo, N.~Konno, E.~Segawa, and M.~Takei,
\newblock Weak limit theorem of a two-phase quantum walk with one defect,
\newblock {\em Interdiscip. Inform. Sci.} J-STAGE Advance,
Graduate School of Information Sciences, Tohoku University, 2016.

\bibitem{EKH15}
T.~Endo, N.~Konno, and H.~Obuse,
\newblock Relation between two-phase quantum walks and the topological invariant,
\newblock http://arxiv.org/abs/1511.04230.

\bibitem{FRT13}
C.~Fern\'andez, S.~Richard, and R.~Tiedra~de Aldecoa,
\newblock Commutator methods for unitary operators,
\newblock {\em J. Spectr. Theory} 3(3): 271--292, 2013.

\bibitem{GJS}
G.~Grimmett, S.~Janson, and P.~Scudo,
\newblock Weak limits for quantum random walks,
\newblock {\em Phys. Rev.} E 69: 026119, 2004.

\bibitem{Ki06}
A.~Kitaev,
Anyons in an exactly solved model and beyond,
\newblock {\em Ann. Phys.} 321(1): 2--111, 2006.

\bibitem{Kit12}
T.~Kitagawa,
\newblock Topological phenomena in quantum walks: elementary introduction to the physics of topological phases,
\newblock {\em Quantum Inf. Process.} 11(5): 1107--1148, 2012.

\bibitem{KBFRBKADW12}
T.~Kitagawa, M.A.~Broome, A.~Fedrizzi, M.S.~Rudner, E.~Berg, I.~Kassal,
A.~Aspuru-Guzik, E.~Demler, and A.~G.~White,
\newblock Observation of topologically protected bound states in photonic quantum walks,
\newblock {\em Nature Commun.} 3: 882, 2012.

\bibitem{KRBD10}
T.~Kitagawa, M.S.~Rudner, E.~Berg, and E.~Demler,
\newblock Exploring topological phases with quantum walks,
\newblock {\em Phys. Rev. A} 82: 033429, 2010.

\bibitem{Kon02}
N.~Konno,
\newblock Quantum random walks in one dimension,
\newblock {\em Quantum Inf. Process.} 1(5): 345--354, 2002.

\bibitem{Kon05}
N.~Konno,
\newblock A new type of limit theorems for the one-dimensional quantum random walk,
\newblock {\em J. Math. Soc. Japan} 57(4): 1179--1195, 2005.

\bibitem{Kon10}
N.~Konno,
\newblock Localization of an inhomogeneous discrete-time quantum walk on the line,
\newblock {\em Quantum Inf. Process.} 9(3): 405--418, 2010.

\bibitem{KLS13}
N.~Konno, T.~{\L}uczak, and E.~Segawa,
\newblock Limit measures of inhomogeneous discrete-time quantum walks in one dimension,
\newblock {\em Quantum Inf. Process.} 12(1): 33--53, 2013.

\bibitem{Ma1}
T.~Machida,
\newblock A quantum walk with a delocalized initial state: contribution from a coin-flip operator,
\newblock {\em J. Quantum Inf.} 11(5): 1350053, 2013.

\bibitem{Ma2}
T.~Machida,
\newblock Realization of the probability laws in the quantum central limit theorems by a quantum walk,
\newblock {\em Quantum Inf. and Comput.} 3(5$\&$6): 430--438, 2013.

\bibitem{Ob15}
H.~Obuse, J.K.~Asb\'oth, Y.~Nishimura, and N.~Kawakami,
\newblock Unveiling hidden topological phases of a one-dimensional Hadamard quantum walk,
\newblock {\em Phys. Rev.} B 92: 045424, 2015.

\bibitem{RS1}
M.~Reed and B.~Simon,
\newblock Methods of modern mathematical physics I, Functional analysis, revised and enlarged edition,
\newblock Academic Press, Harcourt Brace Jovanovich Publishers, New York-London, 1980.

\bibitem{RST_1}
S.~Richard, A.~Suzuki, and R.~Tiedra~de~Aldecoa,
\newblock Quantum walks with an anisotropic coin I\;\!: spectral theory,
\newblock to appear in Lett. Math. Phys.

\bibitem{RT13_2}
S.~Richard and R.~Tiedra~de Aldecoa,
\newblock A few results on {M}ourre theory in a two-{H}ilbert spaces setting,
\newblock {\em Anal. Math. Phys.} 3(2): 183--200, 2013.

\bibitem{SS16}
E.~Segawa and A.~Suzuki,
\newblock Generator of an abstract quantum walk,
\newblock {\em Quantum Stud.: Math. Found.} 3: 11--30, 2016.

\bibitem{Suz16}
A.~Suzuki,
\newblock Asymptotic velocity of a position-dependent quantum walk,
\newblock {\em Quantum Inf. Process.} 15: 103--119, 2016.

\bibitem{WLKGGB12}
A.~W\'ojcik, T.~{\L}uczak, P.~Kurzy\'nski, A.~Grudka, T.~Gdala, and
M.~Bednarska-Bzd\k{e}ga,
\newblock Trapping a particle of a quantum walk on the line,
\newblock {\em Phys. Rev. A} 85: 012329, 2012.

\bibitem{Yaf92}
D.R.~Yafaev,
\newblock {\em Mathematical scattering theory}, volume 105 of {\em Translations
  of Mathematical Monographs},
\newblock American Mathematical Society, Providence, RI, 1992.

\end{thebibliography}


\end{document}